\newtheorem{theorem}{Theorem}
\newtheorem{definition}{Definition}
\newtheorem{lemma}[theorem]{Lemma}
\def\ie{\textit{i.e.}\xspace}
\def\etal{\textit{et al.}\xspace}
\def\eg{\textit{e.g.}\xspace}
\newcommand{\myproof}{\noindent{\sc Proof.~}}
\newcommand{\eop}{\hfill\usebox{\smallProofsym}\bigskip}
\newsavebox{\smallProofsym}
\newenvironment{proof}{\myproof}{\eop}
\def\boxdist{\ell}
\def\dist{{\mathbf{d}}}
\def\rspace{{\mathbb R}}
\def\dspace{{\mathbb R}^d}
\def\weight{{\omega}}
\def\path{\Pi}
\def\Cone{{\cal C}}
\def\RCone{{\cal C}}
\def\Ang{{\Phi}}  
\def\mybox{{\mathbb B}}
\def\Successor{{\boxminus}}
\def\father{{\mathbb P}}
\def\size{{\vartheta}}
\def\Nbr{{\cal N}}
\def\Rep{{\cal R}}
\DeclareMathOperator{\EMST}{EMST}
\DeclareMathOperator{\MST}{MST}
\DeclareMathOperator{\DisjointCrossingEdge}{DCE}
\DeclareMathOperator{\AllCrossingEdge}{ACE}
\newcommand{\ignore}[1]{{}}
\def\aratio{{\beta}}
\def\udist{{\varrho_2}}  
\def\ldist{{\varrho_1}}  
\def\wdist{{\varrho}}  
\def\usize{{\varepsilon_2}}  
\def\lsize{{\varepsilon_1}}  
\def\encbox{{\APLuparrowbox}}  
\def\virtualbox{{\boxbox}} 
\def\tvbox{{\Box}} 
\title{Efficient Construction of Spanners in $d$-Dimensions}
\author{Sanjiv Kapoor\thanks{Department of Computer Science, Illinois
    Institute of Technology, Chicago, IL, 60616.
 Email: \texttt{kapoor@iit.edu}, \texttt{xli@cs.iit.edu}. The paper was first written during November 2007, and then revised during March 2008, June 2008, November 2008, March 2009, November 2009, June 2010, and May 2011.}
 \and XiangYang Li$^{*}$}
\date{May, 2, 2011}
\begin{document}

\maketitle

\vspace{-0.5in}

\begin{abstract}
In this paper we consider the problem
of efficiently constructing  $k$-vertex
 fault-tolerant geometric $t$-spanners in $\dspace$ (for $k \ge 0$ and
 $t >1$).
Vertex fault-tolerant spanners were introduced by Levcopoulus et. al in 1998.
For $k=0$, we present an $O(n \log n)$ method using the algebraic
computation tree model to find a $t$-spanner with 
 degree bound $O(1)$ and weight $O(\weight(MST))$.
This resolves an open problem.
For $k \ge 1$, 
 we  present an efficient method that, given $n$ points in $\dspace$, 
 constructs  $k$-vertex fault-tolerant $t$-spanners with the maximum
 degree bound $O(k)$ and weight bound $O(k^2 \weight(MST))$ in
 time $O(n \log n)$.
Our method achieves the best possible bounds on degree, total edge length,
 and the time complexity, and
 solves the  open problem of efficient
 construction of (fault-tolerant) $t$-spanners in $\dspace$ in time
 $O(n \log n)$.
\end{abstract}




\section{Introduction}

In this work we consider
the problem of constructing spanner graphs to approximate the complete Euclidean graph.
Given an edge weighted graph $G=(V,E,W)$, where $w(e)$ is the
 weight of an edge $e$,
 let $d_G(u,v)$ denote the shortest
 distance from vertex $u$ to vertex $v$ in graph $G$.
The weight of the graph $G$, $\weight(G)$, is
 the sum of the edge weights of edges in $G$.
A subgraph $H=(V,E')$, where $E' \subseteq E$, is called a
 $t$-spanner
 of the graph $G$, if for \emph{any} pair of vertices $u,v \in V$,
 $d_H(u,v) \le t \cdot d_G(u,v) $.
The minimum $t$ such that $H$ is a $t$-spanner of $G$ is called
the \emph{stretch factor} of $H$ with respect to $G$.
An Euclidean graph is a graph where  the vertices are points in $\dspace$
 and the weight of every edge $(u,v)$
 is the Euclidean distance $\|uv\|$ between its end-vertices $u$ and $v$.
Spanner graphs in $\rspace^d$  have  been extremely well studied.
We consider spanner graphs with additional
properties of  bounded degree, low weight and fault tolerance.


In this paper,
we  study $t$-spanners and $k$ vertex
fault-tolerant $t$-spanners ($(k,t)$-VFTS for short) for
 a set $V$ of $n$ points in $\dspace$.
A subgraph $H=(V,E)$ is $k$ vertex  fault-tolerant,
 or $k$-VFT for short,
 if for any pair of vertices $u$ and $v$ with $uv \not \in E$,
 there are $k+1$ vertex  disjoint paths from $u$ to $v$ in $H$.
Here two paths $\path_1$ and $\path_2$
 from $u$ to $v$ are said to be vertex disjoint if the
 only common vertices of $\path_1$ and $\path_2$ are $u$ and $v$.
A geometric graph $H=(V,E)$ is termed $(k,t)$-VFTS if
 for any  subset $F \subset V$ of at most $k$ vertices and any two vertices
 $w_1, w_2 \in V\setminus F$, the graph
 $H(V\setminus F,E_H')$, where $E_H'=E_H \setminus \{(u,v) \mid u \in F,
 \text{ or } v \in F \}$, contains a path $\path(w_1,w_2)$ from $w_1$
 to $w_2$
 with length at most $t\|w_1 w_2 \|$.
Given an Euclidean graph  that is  $k$ vertex
 fault-tolerant (VFT) and a real number $t>1$, the aim here is to construct
 a subgraph $H$ which is a $(k,t)$-VFTS subgraph,
 with a bounded vertex degree,  and
 a bounded weight, \ie, $\weight(H)/\weight(\MST(G))$ is bounded by a
 specified small constant, where $\MST(G)$ is
 the minimum weighted spanning tree of $G$.

A greedy algorithm
 has been used to construct spanners for various
 graphs \cite{eppstein1996sta,regev1995wgg,GLN99,GLN02,chandra92new}.
For a graph $G=(V,E)$ with  $|V|=n$ and an arbitrary edge weight,
 Peleg and Schaffer \cite{peleg1989gs} showed that any $t$-spanner
 needs at least
 $n^{1+\frac{1}{t+2}}$ edges; thus there is edge weighted graph such that
 any $t$-spanner of such a graph has weight at least $\Omega(n^{\frac{1}{t+2}}
 \weight(\MST))$ (the bound is obtained by letting the weight of each
 edge be $1$).
Chandra \etal \cite{chandra95new} showed that the greedy algorithm constructs a
 $t$-spanner of weight at most
 $(3+\frac{16t}{\epsilon^2})n^{\frac{2+\epsilon}{t-1-\epsilon}} \cdot
  \weight(\MST)$
 for every $t>1$ and any  $\epsilon >0$.
Regev \cite{regev95weight} proved that the $t$-spanner constructed by the greedy
 algorithm has weight at most $2e^2 \ln n \cdot n^{\frac{2}{t-1}}
 \cdot \weight(\MST)$ when $t
 \in [3, 2\log n +1]$, and
 has weight at most $(1+\frac{4\log^2 n + 2\log n}{t+1 -\log n})
 \cdot \weight(\MST)$ when $t > 2\log n +1$,
 by studying the girth of the constructed $t$-spanner.
Elkin and Peleg \cite{elkin-JC04} recently showed
  that for any constant $\epsilon, \lambda > 0$ there exists a
  constant $\beta = \beta(\epsilon, \lambda)$ such that for every
 $n$-vertex graph $G$ there is an efficiently constructible $(1+
 \epsilon, \beta)$-spanner of size $O(n^{1 + \lambda})$.

Constructing $t$-spanners
 \cite{Keil88,RS,S91,vaidya1991sga,das1993oss,AS94,AS94-J,das1994fac,das-soda95,arya95euclidean,chandra95new,RS98,AS99,GLN99,narasimhan00,karavelas01,NS02,BGS02,GLN02,abam-soda07}
 and $(k,t)$-VFTS
 \cite{CZ03-faulttolerantspanner,LNS98,lukovszki99,CL00} for Euclidean graphs
 has been extensively studied in the literature.
For computing $t$-spanners of $O(1)$ degree and $O(\weight(\EMST))$
 weight, the current best result \footnote{Private communication with
 M. Smid.} using algebraic computation tree model is a method with
 time complexity $O(n \log^2 n / \log\log n)$.
An $O(n\log n)$ algorithm
 which uses an algebraic model together with indirect addressing
 has been obtained in \cite{GLN02}.
While this model is acceptable in practice, the problem of computing
low weight spanners in the algebraic decision tree model in time
$O(n\log n)$ is still open.
We resolve this problem and  extend the techniques introduced in
the first part of the paper to allow us to compute the $k$-fault-tolerant
spanners efficiently.

In this paper we will also consider constructing
 $(k,t)$-VFTS for $k  \ge 1$ for
 the complete Euclidean graphs on $n$ points $V$ in $\dspace$.
The problem of constructing $(k,t)$-VFTS  for Euclidean graphs was
 first introduced  in \cite{LNS98}.
Using the well-separated pair decomposition \cite{callahan1995dmp},
 Callahan and Kosaraju
  showed that a $k$-VFT spanner can be constructed
 (1) in $O(n \log n +
  k^2 n)$ time with $O(k^2 n)$ edges, or
(2) in $O(n k \log n)$ time with
  $O(k n \log n)$ edges, or
 (3) in time $O(n \log n + c^k n)$ with degree $O(c^k)$ and total
  edge length $O(c^k \cdot \weight(\EMST))$.
Here the constant $c$ is independent of $n$ and $k$.
Later, Lukovszki \cite{lukovszki99} presented a method to construct a
$(k,t)$-VFTS with  the asymptotic optimal number of edges $O(kn)$ in
 time $O(n \log ^{d-1} n + n k \log \log n)$.
Czumaj and Zhao \cite{CZ03-faulttolerantspanner}
 showed  that there are Euclidean graphs such that
 \emph{any} $(k,t)$-VFTS has weight at least  $\Omega( k^2)
 \weight(\EMST)$,
 where EMST is the Euclidean minimum spanning tree connecting $V$.
They then proved that
 one can construct a $(k,t)$-VFTS using  a greedy method
 \footnote{Edges
   are processed in increasing order of length and an edge $(u,v)$ is
   added only if $H$ formed by previously added edges does not have
   $k+1$ internally node-disjoint  paths connecting $u$ and $v$,
  each with length at most $t   \|u-v\|$.}
 for a set $V$ of
 $n$ nodes, that has maximum degree $O(k)$ and total edge length
 $O(k^2 \weight(\EMST))$.
However it is unknown, given arbitrary $k$,
  an Euclidean graph and a pair of vertices $u$ and $v$,
 whether we can determine in polynomial time if there are $k+1$ vertex-disjoint
 paths connecting them and each path has a length at most a given
   value $t\|uv\|$.
Notice that this problem is NP-hard when we are given
 a  graph $G$ with arbitrary weight function.
Czumaj and Zhao further presented a method to construct a $(k,t)$-VFTS
   for Euclidean graphs in time
 $O(n k \log ^d n + nk^2 \log k)$ such that it has the maximum node
   degree $O(k)$
 and  total edge length $O(k^2 \log n ) \cdot \weight(\EMST)$ for
   $k>1$.
Observe  that there is a gap between the lower bound
  $O(k^2 ) \cdot \weight(\EMST)$ and
   the achieved upper bound $O(k^2 \log n ) \cdot \weight(\EMST)$
 on the total edge length.

\paragraph{Our Results:}
The contributions of this paper are as follows.
In  the first part of the paper,
 given a set $V$ of points (such input points are called \emph{nodes}
 hereafter) in $\dspace$ and an arbitrary real number $t >1$,
 we present a method that
 runs in time $O(n \log n)$ using the algebraic computation tree model
 and constructs a $t$-spanner graph whose
 total edge length is $O(\weight(\EMST))$. The hidden constants depend on
 $d$ and $t$, or more precisely, the number of cones used in our
 method, which is $O((\frac{1}{t-1})^d)$.
This solves an open question of finding a method with
 time-complexity $O(n \log n)$ in the algebraic computation tree
 model.
The main techniques used in our methods are listed below.
\begin{compactenum}
\item
We first apply a special well-separated pair decomposition, called
 \emph{bounded-separated pair decomposition} (BSPD) which is produced
 using a  split-tree partition of input nodes
 \cite{callahan1995dmp}. The split-tree partition uses boxes that tightly
enclose a set of nodes, i.e., each side of the box contains a point
 from $V$. In our decomposition, we need to ensure that
every pair $(X,Y)$ of separated sets of nodes is contained in two,
 almost equal sized boxes, $b(X)$ and $b(Y)$, respectively, where
 $b(X)$($b(Y)$ respectively)
contains only the node set $X$($Y$ respectively).
These boxes are  termed {\em floating virtual boxes} since they can be positioned
in a number of ways.
An important property of the BSPD that we construct
 is that for every pair  of nodes sets $(X, Y)$ in
the  decomposition, the distance between $b(X)$ and $b(Y)$ is not only
not too small (these conditions are from WSPD), but also not too
 large, compared
with the sizes of  boxes containing them respectively.

\item
To facilitate the proof that the structure constructed by our method
 is a $t$-spanner we use neighborhood cones. At every point $x$ we use
a cone partition of the space around $x$ by a set of basis vectors.
To guide the addition of spanner edges
we introduce the notion of {\em General-Cone-Direction}
for a pair of boxes $b$ and $b'$. This notion ensures that every point
 $x$ contained inside the box
$b$ has all the nodes inside $b'$ within a collection of
cones ${\cal C}$, each cone with apex $x$.
The angular span of the cones ${\cal C}$ is bounded from above by some
 constant (depending on the spanning ratio $t$), which ensures the
 spanner property.

\item
To prove that the structure constructed by our method
has low-weight, we introduce the \emph{empty-cylinder property}.
A set of edges $E$ is said to have the empty-cylinder property if
 for every edge $e \in E$, we can find an empty cylinder (that does
 not contain any end-nodes of $E$ inside) that uses a
 segment of $e$ as its axis and has radius and height at least some
 constants factor of the length of $e$.
We  prove that a set of edges $E$ with empty-cylinder property and
 empty-region property  has a total weight proportional to the minimum
 spanning tree
of the set of end-nodes of edge $E$.
\end{compactenum}

In the second part of the paper,
given $V$ in $\dspace$, $t>1$ and a constant integer $k> 1$,
 we present a method  that
 runs in time $O(n \log n)$ and constructs a $k$-vertex fault-tolerant
 $t$-spanner graph with following properties:
 (1) the maximum node degree is $O(k)$, and (2) the total edge length is
 $O(k^2)\weight(\EMST)$.
 This achieves an optimal weight bound and degree bound
of the spanner graph, which is the first such  result known in the literature.
The second part utilizes techniques introduced in the first part.

The paper is organized as follows.
We present our  method of
constructing $t$-spanner in time $O(n\log n)$ in Section \ref{sec:boxtree},
 and prove the properties of the structure and study the time
 complexity of our method in Section \ref{sec:proofs-k1}.
In Section \ref{sec:boxtree-k}, we present and study our method of
 constructing $(k,t)$-VFTS.
We conclude our paper in section \ref{sec:conclusion}.




\section{$t$-Spanner in $\dspace$ Using Compressed Split-tree}
\label{sec:boxtree}

In this section, we present an efficient method in the algebraic
 computation tree model with time complexity
 $O(n \log n)$ to construct a  $t$-spanner for any given set of nodes
 $V$ in $\rspace^d$ for any $t>1$.
Our method is based on a variation of the compressed split-tree
 partition of $V$: We partition all pairs of nodes
 using  a \emph{special}
 well-separated pair decomposition based on a variant of  split-tree
 that uses boxes with bounded aspect ratio.

\subsection{Split Tree Partition of a set of Nodes}

We use $\dist(x,y)$ to denote the distance between
 points $x $ and $y$ in $\dspace$ in the $L_p$-metric for $p \ge 1$.
We will focus on Euclidean distance here.
We define our partition of input nodes $V \in \dspace$ using a Compressed
Split-tree, a structure first used in \cite{Vai1}.
Let $x_i$ be the $i$th dimension in $\dspace$ and $x=(x_1, x_2,
\cdots, x_d)$ be a point in $\dspace$.
Then an orthogonal
 box $b$ in $\dspace$ is $\{x=(x_1, x_2, \cdots, x_d) \mid L_i \le
x_i \le R_i\}$, where $L_i < R_i$, $i=1, 2, \cdots, d$, are given values
defining the bounding planes of the box.
Given a box $b$ in $\dspace$ we define the following terminology:
\begin{compactitem}
\item
$|b|$ is the number of nodes from $V$ contained in the box $b$.
\item
$\dist(b_1, b_2)$ is the Euclidean distance between
the boxes $b_1$ and $b_2$, i.e., $\min_{x \in b_1, y\in b_2} \|x-y\|$.
\item
For a box $b$, $\size(b)$ denotes the \emph{size} of box $b$, i.e.
the length, $\max_{1 \le i\le d}(R_i -L_i)$, of the longest side of $b$.

\item
The aspect ratio of a box $b$ is defined as the ratio of the longest
side-length over the smallest side-length, i.e., $\max_{1 \le i,j\le
d}(R_i -L_i)/(R_j -L_j)$.
\end{compactitem}

Given a point set, $S$, we will refer to the smallest orthogonal box enclosing
the point set $S$ as $b=\encbox(S)$.
Such a box $b$ is called \emph{enclosing-box}  of
a point set $S$ hereafter.
Here the enclosing-box $b$ does not necessarily have a good aspect ratio.

\begin{definition}[Tight-Virtual Box]
\label{definition:tight-virtual}
Given a  box $b=\encbox(S)$
 we define a box  $\tvbox(b)$ as a {\em tight-virtual box} if it has the
following properties:

\begin{compactenum}

\item
$\tvbox(b) \supseteq b$, i.e. it contains $b$ inside

\item
longest side of $\tvbox(b)$ is exactly $\size(b)$, the longest side of
box $b$.

\item
$\tvbox(b)$ has an aspect ratio at most a constant $\aratio \le 2$.

\end{compactenum}
\end{definition}

Given a set $V$ of $n$ $d$-dimensional nodes,
 let $\delta(V)$ be the smallest pairwise distance between all pairs
 of nodes in $V$.
We next define a special split-tree similar to the structure defined
 in~\cite{Vai1,callahan1995dmp}.

\begin{definition}[Compressed split-tree]
\label{definition:Compressed split-tree}
A compressed split-tree,
termed $CT(V)$,  is a rooted tree of
 $d$-dimensional boxes defined as follows:
\begin{compactenum}
\item Each vertex $u$ in the  compressed split-tree is mapped to
 a $d$-dimensional box $b=\mybox(u)$, and associated with a
 tight-virtual box $\tvbox(b)$.

\item The root vertex, termed {\em root}, of the tree $CT(V)$ is
associated with
the enclosing-box $\mybox(root)=\encbox(V)$
containing all the nodes in $V$.
Associated with this box is a tight-virtual box
$\tvbox( \mybox(root))$ which has a bounded aspect
ratio $\aratio \le 2$ enclosing the box $\mybox(root)$.

\item Each internal vertex $u$ (associated with a box $b=\mybox(u)$
and the tight-virtual box $\tvbox(b)$) in the tree $CT(V)$
 has two children vertices, if $b$ contains at least $2$ nodes from $V$.
Consider the two boxes, $B_1', B_2'$, obtained by subdividing
 $b$  into $2$   smaller boxes
 ${b'}_i$, $ 1\le i \le 2$,
 cutting $b$ by a hyperplane passing through
 the center of $b$ and perpendicular to the longest
  side of $b$.
Shrink $B'_i, 1 \leq i \leq 2$ to obtain minimum sized enclosing-box $b_i$,
  containing the same set of nodes as $B_i'$, \ie,
  each face of $b_i$ contains a node of $V$.
  Let $\Successor(b) = \{b_1, b_2 \}$.
With each box $b_i$ in $\Successor(b)$, we  associate a  tight-virtual
 box  $\tvbox(b_i)$ with an aspect ratio at most $ \aratio \le 2$.
Then the children vertices of the vertex $u$ are two boxes $b_1, b_2$.
Additionally, $\tvbox(b_1)$ and $\tvbox(b_2)$ are disjoint and are
 contained inside $b$.

\item There is  a tree edge from $b$ to every $b_i \in  \Successor(b)$.
Notice that  neither $b_1$ nor $b_2$ is empty of nodes inside.
The box $b$ from which $b_i$ is obtained by this procedure is referred to as
 the \emph{father} of the $b_i$, denoted as $\father(b_i)$.
A box $b$ that contains only one node is called a \emph{leaf  box}.
For simplicity of presentation, we assume that any leaf box has a size
 $\epsilon$ for sufficiently small $0 < \epsilon \ll \delta(V)$.

\item
The \emph{level} of a box $b$ is the number (rounds) of subdivisions used to
produce $b$  from the root box.
The level of the box $\mybox(root)$ is then $0$.
If a box $b$ has level $j$, then each box in $\Successor(b)$ has
 level $j+1$.
\end{compactenum}
\end{definition}
In Lemma \ref{lemma:finding-tight-virtual-box},
 we will show that the tight-virtual boxes $\tvbox(b_i)$, $i=1,2$, can be
constructed from $\tvbox(b)$ in $O(d)$ time

The tree $CT(V)$ is called a \emph{canonical partition}
 split-tree of $V$.
One difference between our structure and  the split-tree
 structure used in \cite{callahan1995dmp} is that we associate with each
 box $b$ in $CT(V)$ a tight-virtual box $\tvbox(b)$, while in
 \cite{callahan1995dmp} different boxes are used.
Another major difference is the \emph{floating-virtual-boxes} to be
 introduced later.

\begin{lemma}
\label{lemma:finding-tight-virtual-box}
Given a box $b$ and its associated tight-virtual box $\tvbox(b)$, we
can find the tight-virtual box $\tvbox(b_i)$ for each children box
$b_i \in \Successor(b)$ in $O(d)$ time.
\end{lemma}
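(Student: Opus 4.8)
The plan is to track, coordinate by coordinate, how the cut that produces the two children boxes $b_1,b_2$ of $b$ acts on the tight-virtual box $\tvbox(b)$, and then to repair the aspect ratio locally. First I would fix notation: write $\size(b)=s$ for the longest side of $b$, say along coordinate $i_0$, so the cutting hyperplane is perpendicular to the $i_0$-axis and passes through the center of $b$; after cutting, each half $B_i'$ has side length $s/2$ along $i_0$ and is unchanged in every other coordinate. Shrinking $B_i'$ down to its enclosing box $b_i$ only shrinks side lengths (never grows them), so for every coordinate $j$ the $j$-side of $b_i$ is at most the $j$-side of $B_i'$, which is at most the $j$-side of $b$, which (since $\tvbox(b)\supseteq b$) is at most the $j$-side of $\tvbox(b)$. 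The new longest side of $b_i$ is $\size(b_i)=:s_i\le s$; this value, together with the $d$ side lengths of $b_i$, is available in $O(d)$ time as a byproduct of the split step itself.

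Next I would actually build $\tvbox(b_i)$. Definition 1 demands three things of $\tvbox(b_i)$: it contains $b_i$; its longest side equals $\size(b_i)=s_i$; and its aspect ratio is at most $\aratio\le 2$. Start from $b_i$ and, in each coordinate $j$, if the current $j$-side length $w_j$ is less than $s_i/\aratio$, symmetrically expand $b_i$ along coordinate $j$ to make the $j$-side exactly $s_i/\aratio$ (padding $(s_i/\aratio - w_j)/2$ on each side); otherwise leave coordinate $j$ alone. This produces a box whose side lengths all lie in $[s_i/\aratio,\,s_i]$, hence aspect ratio $\le\aratio$ and longest side exactly $s_i$ (the longest side of $b_i$ was already $s_i$ and we never exceed it), and which still contains $b_i$. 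Each coordinate costs $O(1)$, so the whole construction is $O(d)$.

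The remaining — and only delicate — point is the containment relation $\tvbox(b_1),\tvbox(b_2)\subseteq b$ and the disjointness of $\tvbox(b_1)$ from $\tvbox(b_2)$, which Definition 2(3) also asserts and which the padding step could a priori violate along the cut coordinate $i_0$. Here I would argue as follows: along $i_0$, $b_i$ already has side length at most $s/2 \le s_i$ (using $s_i\le s$), and since $s_i/\aratio \le s_i/1 \cdots$ — more carefully, the padded $i_0$-side is $\max(w_{i_0}, s_i/\aratio)\le s_i\le s$, and because $b_i$ sits inside one half of $b$ (the half $B_i'$ has $i_0$-extent $[L,L+s/2]$ or $[L+s/2,L+s]$), we only need the padded interval to stay within that half's containing slab of $b$ of width $s$; since the padded width is at most $s_i\le s$ and $b_i$ was tight inside $B_i'$, a short case check shows the expansion stays inside $b$. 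For disjointness, note $B_1'$ and $B_2'$ are separated by the cutting hyperplane $x_{i_0}=c$; since $b_1\subseteq B_1'$ lies in $\{x_{i_0}\le c\}$ and $b_2\subseteq B_2'$ lies in $\{x_{i_0}\ge c\}$, and each is \emph{tight} (a face of $b_i$ touches a node, and all nodes of $B_i'$ are strictly on its side unless they lie on the hyperplane), the enclosing boxes $b_i$ are themselves separated by $x_{i_0}=c$ up to the boundary; one then verifies the padding along coordinates $j\neq i_0$ is irrelevant to separation in the $i_0$-direction, and padding along $i_0$ must be controlled so that the two expanded intervals still meet only at $x_{i_0}=c$. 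I expect this bookkeeping along the cut coordinate — ensuring the aspect-ratio padding does not destroy either containment in $b$ or mutual disjointness — to be the main obstacle; if the naive symmetric padding fails, the fix is to pad $\tvbox(b_i)$ along $i_0$ asymmetrically, pushing only \emph{away} from the cutting hyperplane, which keeps disjointness for free and still stays inside $b$ because the total $i_0$-width never exceeds $s$. All of this is $O(d)$ work, completing the proof.
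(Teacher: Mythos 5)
Your construction is genuinely different from the paper's: you build $\tvbox(b_i)$ \emph{bottom-up} by padding the enclosing box $b_i$ coordinate by coordinate until the aspect ratio is satisfied, whereas the paper goes \emph{top-down}: it takes the half $\tvbox(b)_h$ of the parent's tight-virtual box on the side of the cut hyperplane $h$ that contains $b_i$, and then \emph{shrinks} that half toward $b_i$, stopping each side once the aspect ratio would drop below $\aratio$. The top-down route has a structural advantage that you end up fighting against: because you never leave $\tvbox(b)_h$, the containment of $\tvbox(b_1), \tvbox(b_2)$ inside the parent's tight-virtual box, and their mutual disjointness, come essentially for free. Your padding approach has to re-establish both of these from scratch.

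That is where the gap is. You correctly flag containment/disjointness as the delicate point, but you then only analyze the cut coordinate $i_0$. The same failure can occur along a non-cut coordinate $j \neq i_0$: if $b$ is very flat in coordinate $j$ while $s_i = \size(b_i)$ is close to $s$, then the padded $j$-side $s_i/\aratio$ can exceed the $j$-side of $b$, and the padded box cannot fit inside $b$ no matter how you slide it. For example, in three dimensions with $b$ of side lengths $(s,s,\epsilon)$, cut along the first axis, $b_1$ can have $\size(b_1) = s$, forcing the third side of $\tvbox(b_1)$ to be at least $s/\aratio \ge s/2 \gg \epsilon$. In that case containment in the literal enclosing box $b$ is \emph{impossible}; the right target is the parent's \emph{tight-virtual} box $\tvbox(b)$ (the paper conflates $b$ and $\tvbox(b)$ notationally, but its proof and the later floating-virtual-box definition make clear that $\tvbox(b)_h$ is the container). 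Once you aim for $\tvbox(b)$, the side-length comparison does work out, since $s_i/\aratio \le s/\aratio \le$ (every side of $\tvbox(b)$), but you still need to shift the padding into position \emph{in every coordinate}, not just $i_0$. Your proposed asymmetric-padding fix is stated only for $i_0$ (``pushing away from the cutting hyperplane''), and the claim ``a short case check shows the expansion stays inside $b$'' is not justified for $j \neq i_0$ at all and is actually false for $b$ itself. Either retarget the containment to $\tvbox(b)_h$ and carry out the per-coordinate shifting argument uniformly, or switch to the paper's shrink-from-$\tvbox(b)_h$ construction, which makes both containment and disjointness automatic.
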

\begin{proof}
Obviously, for the root box $\encbox(V)$, we can find a tight-virtual
 box $\tvbox(\encbox(V))$ in $O(d)$ time.
We now show that given a vertex $b=\encbox(S)$ (enclosing-box of some
 subset $S$ of
 nodes in $V$) in $CT(V)$
 and a child enclosing-box $b_1$ obtained by subdividing $b$ by a hyperplane
 $h$, we can construct a tight-virtual box $\tvbox(b_1)$ from
 the tight-virtual box $\tvbox(b)$ efficiently as follows.
Let $\tvbox(b)_h$ be the box, which contains $b_1$,  obtained by partitioning
 $\tvbox(b)$ using the hyperplane $h$.
Assume w.l.o.g that $b_1$ is the one located with the same center as
$\tvbox(b)_h$.
Box $\tvbox(b)_h$ can now be shrunk as $b_s$ until one of
 the sides of the shrunk box $b_s$ meets a side of $b_1$.
Other sides of $b_s$ which are larger than $\size(b_1)$
 can be shrunk to meet $b_1$ if the aspect ratio is not below $\aratio$.
This gives us $\tvbox(b_1)$. The aspect ratio
 of $\tvbox(b_1)$ is bounded by the aspect ratio of $\tvbox(b)_h$ if
 $\aratio \le 2$.
\end{proof}

For the purpose of constructing a spanner in $\dspace$, we introduce
 another box, called \emph{floating-virtual box}, associated with each
 box in the tree $CT(V)$.
For box $b$, let $\tvbox(\father(b))_h$ be one of the two boxes
 that is produced by halving
 the (longest dimension of) tight-virtual box $\tvbox(\father(b))$  and that
 contains $b$ inside.
Since  the tight-virtual box $\tvbox(\father(b))$ has an aspect ratio
 bounded by $ \aratio \le 2$,
 $\tvbox(\father(b))_h$ has an aspect ratio  bounded by $ \aratio \le 2$ also.

\begin{definition}[Floating-Virtual Box]
\label{definition:floating-virtual-box}
Consider a compressed split-tree $CT(V)$ for a set of input nodes $V$.
For a box $b$, a box, denoted as $\virtualbox(b)$, is
 termed as {\em a floating-virtual box associated with
 the box $b$} if the following properties hold:
\begin{compactenum}
\item it includes the tight-virtual box $\tvbox(b)$ of the box $b$ inside,
\item it is contained inside the parent box $\father(b)$ of $b$,
\item it has an aspect ratio at most a constant $\aratio \le 2$, and
\item \label{fv-in-p} it is contained inside the box
  $\tvbox(\father(b))_h$, halved
  from the tight-virtual box $\tvbox(\father(b))$.
\end{compactenum}
\end{definition}

It is worth to emphasize that, for a box $b$, a floating-virtual box
 to be used by our method
 is not unique: it also depends on some other boxes to be paired with.
It is also easy to show that the floating-virtual boxes of two
 disjoint boxes $b_1$ and $b_2$ will be always disjoint because of the
 property~\ref{fv-in-p} in
 Definition~\ref{definition:floating-virtual-box}.
Table~\ref{tab:notation} summarizes some of the notations used in the
 paper.
See Figure~\ref{fig:boxes} for illustration of some concepts defined
 in this paper.

\begin{table*}[tbhp]
\begin{center}
\caption{Notations and abbreviations used in this paper.}
\label{tab:notation}
\begin{tabular}{p{1in}|p{5in}}
\hline
\hline
$\encbox(S)$ & the enclosing-box of a set of nodes $S$\\
$\Successor(b)$ & the two minimum sized boxes produced by halving $b$ and
then shrinking the produced boxes to be enclosing-boxes\\
$\tvbox(b)$ &  the tight-virtual box that contains a box $b$ inside and has
an aspect ratio $\le \aratio \le 2$.\\
$\virtualbox(b)$ & a floating-virtual box  containing the tight-virtual box
$\tvbox(b)$ inside ($\tvbox(b)$ and $\virtualbox(b)$ may be same) and
has an aspect ratio $\le \aratio$. Here $\virtualbox(b)$ is not
unique in our algorithm: it depends on the box $b'$ to be paired with
for defining edges.\\
$\size(b)$ & the size of the box $b$, \ie, the length of the longest
side.\\
$\dist(b_1, b_2)$ & the Euclidean distance between two boxes $b_1$,
$b_2$.\\
$\boxdist(b_1, b_2)$ & the edge-distance   between two boxes $b_1$,
$b_2$. This is equal to $\dist(b'_1,b'_2)$ where $b'_1$ and $b'_2$ are
the floating-virtual boxes for bounded-separated boxes $b_1$ and $b_2$.\\
$\father(b)$ & the parent vertex of a box $b$ in the tree $CT(V)$\\
\hline
\end{tabular}
\end{center}
\end{table*}

\begin{figure} [htpb]
\begin{center}
\epsfxsize=4.5in\epsfbox{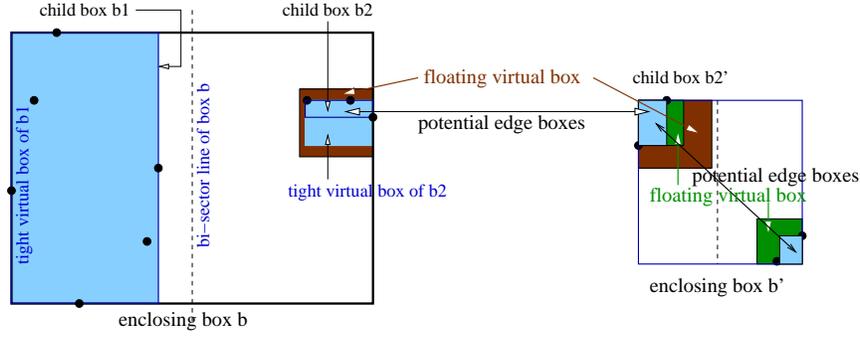}
\end{center}
\caption{An illustration of several concepts defined in this
  paper. Here for a box $b_2'$, depending on the pairing box, the
  floating-virtual box for $b_2'$ could be different. When the box $b_2'$
  is paired with the box $b_2$, the floating-virtual box
  $\virtualbox(b_2')$ is shaded as  brown in the figure. When  the box $b_2'$
  is paired with the box $b_1'$, the floating-virtual box
  $\virtualbox(b_2')$ is shaded as  green in the figure.}
\label{fig:boxes}
\end{figure}

Observe that the compressed split-tree proposed here is slightly
 different from the  split-tree defined in \cite{callahan1995dmp}: we
 define the tight-virtual boxes and also associate a tight-virtual box
 with some  floating-virtual boxes: these floating-virtual boxes will be
 determined by a procedure to be described later.
Given that a  split-tree  can be constructed in $O(n
 \log n)$ steps \cite{callahan1995dmp},  it is easy to show the
 following theorem (Theorem \ref{theo:timing-cv}). The proof is similar to the
 proof in \cite{callahan1995dmp} and is thus omitted.

\begin{theorem}\label{theo:timing-cv}
The compressed split-tree $CT(V)$ can be constructed in time $O(d n \log
n)$ for a set of nodes $V$ in $\dspace$.
\end{theorem}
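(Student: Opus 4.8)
The plan is to reduce the construction of $CT(V)$ to the construction of an ordinary compressed split-tree, which is known to take $O(dn\log n)$ time by Callahan and Kosaraju \cite{callahan1995dmp}, and then argue that the extra bookkeeping introduced in Definition~\ref{definition:Compressed split-tree} — namely, attaching a tight-virtual box $\tvbox(b)$ to every node $b$ of the tree — adds only $O(d)$ work per node, hence $O(dn)$ overall. First I would recall the standard split-tree construction: starting from $\encbox(V)$, repeatedly halve the current box along its longest side and shrink each half to the minimal enclosing box of the points it contains, collapsing chains of one-child subdivisions (the ``compression'' step) so that the tree has $O(n)$ nodes and can be built in $O(dn\log n)$ time; the key subroutine is maintaining, for each box, the sorted orders of its points along each of the $d$ coordinate axes, which permits a halving-and-shrinking step in time proportional to the size of the smaller resulting subset plus $O(d)$. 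Since our $\Successor(\cdot)$ operation is exactly this halve-then-shrink operation, the underlying tree of boxes $\mybox(u)$ is literally a compressed split-tree, so its construction inherits the $O(dn\log n)$ bound.

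Next I would handle the tight-virtual boxes. For the root, Lemma~\ref{lemma:finding-tight-virtual-box} (more precisely its opening observation) gives $\tvbox(\encbox(V))$ in $O(d)$ time. For every internal node $b$ with children $b_1,b_2 \in \Successor(b)$, Lemma~\ref{lemma:finding-tight-virtual-box} shows that $\tvbox(b_1)$ and $\tvbox(b_2)$ can each be obtained from $\tvbox(b)$ in $O(d)$ time. Processing the tree top-down, this contributes $O(d)$ per tree node, and since the compressed split-tree has $O(n)$ nodes, the total cost of computing all tight-virtual boxes is $O(dn)$, which is absorbed into $O(dn\log n)$. Adding the two bounds gives the claimed $O(dn\log n)$ running time.

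The only genuinely delicate point — and the reason the paper defers the detailed argument to \cite{callahan1995dmp} — is justifying that the compression step (contracting long chains of trivial subdivisions, which is what keeps the tree size linear and the total work near-linear rather than depending on the spread/aspect ratio of the input) is compatible with our requirement that every box in $\Successor(b)$ be a \emph{tight} enclosing box and that the associated tight-virtual boxes still have aspect ratio $\le \aratio \le 2$. Here I would note that the tight-virtual-box bound is a purely local, $O(d)$-time consequence of Lemma~\ref{lemma:finding-tight-virtual-box} applied along the compressed edge, and that the compression of split-trees in \cite{callahan1995dmp} is exactly what yields the $O(n)$ node count; hence the argument of \cite{callahan1995dmp} carries over essentially verbatim, and it suffices to cite it. I would therefore keep the proof short: assert that $CT(V)$'s box-tree is a compressed split-tree built in $O(dn\log n)$ time by \cite{callahan1995dmp}, invoke Lemma~\ref{lemma:finding-tight-virtual-box} for the $O(dn)$ overhead of the virtual boxes, and conclude.
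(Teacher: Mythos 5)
Your proposal matches the paper's (omitted) argument: the paper simply states that the proof follows the split-tree construction of \cite{callahan1995dmp} and is therefore left out, and you have correctly reconstructed the intended reduction — build the underlying compressed split-tree in $O(dn\log n)$ by \cite{callahan1995dmp}, then charge $O(d)$ per node via Lemma~\ref{lemma:finding-tight-virtual-box} for the tight-virtual boxes, giving $O(dn)$ additional work. Same approach, just with the details the paper elides written out.
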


In the rest of the paper, we will mainly focus on the tight-virtual
 boxes $\tvbox(b)$ for all enclosing-boxes $b$ produced in the
 compressed split-tree. For ease of description,
we will refer to $b$ and $\tvbox (b)$ by $b$ itself.
Thus $ \virtualbox(\tvbox(b))$ will refer to the same box as
 $\virtualbox(b)$.
The difference here is that $\tvbox(b)$ has an aspect ratio $\le
 \aratio$ while the aspect ratio of an enclosing-box $b$ could be
 arbitrarily large.

\subsection{Well-Separated Pair Decomposition and Bounded-Separated
  Pair Decomposition}

Our method of constructing a spanner efficiently
 will use some decomposition of all pairs of nodes similar to
 well-separated pair decomposition (WSPD)~\cite{callahan1995dmp}.

\paragraph{Well-Separated Pair Decomposition (WSPD):}
Recall that, given two sets of points $A,B \in \dspace$, a set
${\cal R}(A,B)=\{ \{A_1, B_1 \},  \{A_2, B_2 \},\cdots,  \{A_p, B_p \} \}$ is
 called a \emph{well-separated realization} of the interaction product
 $A \otimes B =\{ \{x,y\} \mid x\in A, y \in B, \text{ and } x \not = y\}$ if
\begin{compactenum}
\item $A_i \subseteq A$, and $B_i \subseteq B$ for all $i \in [1,p]$.
\item $A_i \cap B_i = \emptyset$  for all $i \in [1,p]$.
\item  $(A_i \otimes B_i) \cap (A_j \otimes B_j) = \emptyset$ for all
 $ 1 \le i < j \le p$.
\item $A \otimes B = \bigcup_{i=1}^{p} {A_i \otimes B_i}$.
\item $A_i$ and $B_i$ is \emph{well-separated}, \ie, the distance
$\dist(A_i, B_i) \ge \wdist \max(\size(A_i), \size(B_i))$ for some
constant $\wdist$,
 where $\size(X)$ of a point set $X$ is the radius of the smallest disk
containing $X$, which is of the same order of the size of the smallest
box $\encbox(X)$ containing $X$.
\end{compactenum}
Here $p$ is called the \emph{size} of the realization ${\cal R}(A,B)$.

Given the split-tree $CT(V)$, we say that a well-separated realization of
 $V \otimes V$ is a \emph{well-separated pair decomposition} (WSPD) of $V$
 based on $CT(V)$, if for any $i$, $A_i$ and $B_i$  are
 the sets of nodes contained in some  enclosing-boxes, $\encbox(u_i), u_i \in CT(V)$
and $\encbox(v_i), v_i \in CT(V)$, respectively.
Here we overuse  notations a little bit, for a box $b \in CT(V)$, we
 also refer via $b$ as the subset of nodes from $V$ contained inside $b$.
In \cite{callahan1995dmp}, it is  shown that a well-separated
 pair decomposition based on a split-tree $CT(V)$ can be constructed
 in linear time  $O(n)$ when the split-tree $CT(V)$ is given.

\begin{theorem}\cite{callahan1995dmp}
\label{theo:linear-time}
Given the  split-tree $CT(V)$, we can construct a
 well-separated pair decomposition based on $CT(V)$ in linear-time
 $O(n)$ and the realization of this WSPD has size $O(n)$.
\end{theorem}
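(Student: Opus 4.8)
The plan is to adapt the classical WSPD construction of Callahan and Kosaraju to the compressed split-tree $CT(V)$. The construction is the standard recursive pairing procedure, so the main work is to verify (a) that the recursion terminates having covered $V \otimes V$ exactly once (the realization property), (b) that it produces $O(n)$ pairs, and (c) that the whole thing runs in $O(n)$ time once $CT(V)$ is given. Since the statement is attributed to \cite{callahan1995dmp}, I expect the proof here to be essentially a pointer to that reference, possibly with a one-line remark on why the compressed split-tree does not change anything; below I sketch what a self-contained argument would look like.

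\textbf{The recursive pairing procedure.} Define a routine $\textsc{FindPairs}(u,v)$ on two vertices (boxes) of $CT(V)$. If the node sets of $u$ and $v$ are well-separated (i.e. $\dist(u,v) \ge \wdist \max(\size(u),\size(v))$), output the single pair $\{u,v\}$ and return. Otherwise, assume w.l.o.g.\ that $\size(u) \ge \size(v)$, let $u_1,u_2$ be the two children of $u$ in $CT(V)$, and recurse on $\textsc{FindPairs}(u_1,v)$ and $\textsc{FindPairs}(u_2,v)$ (if $u$ is a leaf, swap the roles of $u$ and $v$; if both are leaves and not well-separated, which cannot happen by the leaf-size assumption $\epsilon \ll \delta(V)$, there is nothing to do). The WSPD is the union over all pairs of children of the root of $\textsc{FindPairs}$ applied to them, together with handling the diagonal via the single recursion $\textsc{FindPairs}(\textit{root},\textit{root})$. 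Correctness of the realization properties (2)--(5) is a routine induction on the recursion tree: every $x\otimes y$ with $x\ne y$ is separated at the unique point where the recursion first declares the enclosing boxes well-separated, giving exactly-once coverage; $A_i\cap B_i=\emptyset$ because we only emit a pair when the boxes are well-separated, hence disjoint.

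\textbf{The size and time bounds.} This is the part that requires the geometry, and it is where I would import the packing argument of \cite{callahan1995dmp} wholesale. The key lemma is: for a fixed box $v$ of $CT(V)$, the number of boxes $u$ such that $\textsc{FindPairs}$ emits the pair $\{u,v\}$ (or is called on $(u,v)$ right before recursing into $u$) is $O(1)$, where the constant depends only on $d$ and $\wdist$. The proof uses that the parent of $u$ was \emph{not} well-separated from $v$, so $\size(\father(u))$ is comparable to $\dist(\father(u),v)$, and that the boxes at a given recursion level are disjoint with comparable sizes — so a volume/packing bound caps how many can cluster near $v$. Summing over all $v$ gives $O(n)$ pairs, and since each recursive call does $O(1)$ work (one well-separatedness test, which is $O(d)$ arithmetic on the box coordinates), the total time is $O(n)$ as well. \textbf{The main obstacle} here is that the compressed split-tree uses \emph{shrunk} enclosing-boxes rather than the fixed grid-aligned boxes of the original split-tree, so one must check the packing argument still goes through. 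This is handled by passing to the associated tight-virtual boxes $\tvbox(b)$, which by Definition~\ref{definition:tight-virtual} have aspect ratio $\le \aratio \le 2$ and longest side equal to $\size(b)$: the fat, bounded-aspect-ratio boxes $\tvbox(b)$ at a common level are disjoint (this follows from the tight-virtual boxes of siblings being disjoint, propagated down the tree) and of comparable size, which is exactly what the packing lemma needs. The well-separatedness condition is stated in terms of $\size$, which is the same for $b$ and $\tvbox(b)$, so the separation test is unaffected. Hence the argument of \cite{callahan1995dmp} applies verbatim with the tight-virtual boxes playing the role of the split-tree boxes, and the theorem follows.
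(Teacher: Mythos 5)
Your proposal follows the same route as the paper's proof: both describe the standard Callahan--Kosaraju recursive pairing procedure on $CT(V)$, defer the packing/size bound to \cite{callahan1995dmp}, and handle the compressed-tree issue by working with the tight-virtual boxes $\tvbox(b)$ (the paper states this as a one-line convention, while you spell out why the bounded aspect ratio makes the packing argument transfer). The only slight looseness is in your phrasing of the packing lemma, which in \cite{callahan1995dmp} bounds the pairs $\{u,v\}$ with the size condition on $\father(u)$ rather than literally all pairs touching a fixed $v$; but since you defer to the reference exactly as the paper does, this does not constitute a gap.
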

\begin{proof}
For completeness of presentation, we briefly review the proof here.
The algorithm itself is recursive.
In this proof, when we mention a box $b$, we always mean the
  tight-virtual box $\tvbox(b)$
 corresponding to the enclosing-box used in the split-tree.
Each vertex  of the compressed split-tree $CT(V)$ has
 $2$ children enclosing-boxes, containing point sets denoted by  $A_1, A_2$.
We construct a well-separated realization for  $A_1 \otimes A_2$.

We now focus on how to construct a realization for a pair of boxes $b$
 and $b'$.
If $(b,b')$ is well-separated, \ie,
 $\dist(b,b') \ge \wdist \max(\size(b), \size(b'))$, then we are
 done for this pair.
Otherwise, for simplicity, we assume that $\size(b) \ge \size(b')$.
In this case, let $\{c_i \mid 1 \le i \le 2\}$  be the $2$
  children enclosing-boxes  of $b$ in the tree $CT(V)$.
We then recursively construct a well-separated realizations of $c_i
 \otimes b'$, for each child box $c_i$ of $b$
 and then return the union of these realizations as the well-separated
 realization for the pair $(b,b')$.
Based on the compressed split-tree $CT(V)$, we will
 have a WSPD-computation tree $T$ as follows:
\begin{compactenum}
\item
 The root of $T$ is $( b_0, b_0)$, where $b_0$ is the box
 corresponding to the root vertex of $CT(V)$;
\item
For each node $(b,b')$ in $T$, if $b=b'$, then it has the following
children $(b_i, b_j)$ where $1 \le i \le j \le 2$ and $b_1$, $b_2$
 are   children enclosing-boxes  of $b$ in the compressed split-tree $CT(V)$.
However, observe that here $(b_i,b_i) \in T$ for each children box $b_i$ of
$b$.

\begin{compactenum}
\item If $\dist(b,b') \ge \wdist \max(\size(b), \size(b'))$, then it does
 not have any children since $(b,b')$ is a pair of well-separated boxes.
\item If $\dist(b,b') < \wdist \max(\size(b), \size(b'))$ and
 $\size(b) \ge \size(b')$, then it has  $2$  children
 nodes $(c_i, b')$ where $c_i$ is a tight-virtual child box of $b$ in the
 tree $CT(V)$.
Recall that here $c_i$, $1\le i \le 2 $ is produced by halving the
 longest dimension of $b$ and then shrinking the corresponding boxes into
 the smallest tight-virtual boxes with aspect ratio at most $\aratio$.
\item If $\dist(b,b') < \wdist \max(\size(b), \size(b'))$ and
 $\size(b) < \size(b')$, then it has  $2$  children
 nodes $(b, c_i)$ where $c_i$ is a tight-virtual children box of $b'$ in the
 tree $CT(V)$.
\end{compactenum}
\end{compactenum}
For simplicity of analysis, for any vertex $(b,b')$ in the
 WSPD-computation tree $T$, we always assume that $\size(b) \ge
 \size(b')$; otherwise, we reorder them and rename them.
A careful analysis in  \cite{callahan1995dmp} show that
 the WSPD-computation tree $T$ has size at most
 $2n \cdot 2^{2} + 4n \cdot (4\ldist)^d  = O(n)$ vertices.
The theorem then follows.
\end{proof}

\paragraph{Bounded-Separated Pair Decomposition (BSPD):}
Observe that a pair of boxes $(b,b')$ in WSPD may have a distance
 arbitrarily larger than $\max(\size(b), \size(b'))$, especially, the
 WSPD produced in \cite{callahan1995dmp}.
To produce spanners with low weight, we do not want to include
 arbitrarily long edges in the spanner, unless it is required.
To capture such a requirement, we propose a new concept,
 a \emph{bounded-separated pair decomposition} (BSPD).
A  well-separated pair decomposition (WSPD) based on a compressed
 split-tree $CT(V)$
 is called a \emph{bounded-separated pair decomposition},
 if  it has the following additional
 property:  each pair of tight-virtual boxes $(b,b')$
 in this WSPD satisfies the property of \emph{bounded-separation of
 floating-virtual boxes}, \ie, there is a pair of floating-virtual boxes
 $b_2 =\virtualbox(b)$ and $b_2'=\virtualbox(b')$ and $(b_2,
 b_2')=(\virtualbox(b), \virtualbox(b'))$
 has \emph{bounded separation}.

\begin{definition}[Bounded Separation]
A pair of boxes $(b_2,b_2')$  has the property of
\emph{bounded-separation} if it satisfies the following properties
\begin{compactenum}
\item  
{\em Almost Equal-size Property:} $\lsize \size(b_2') \leq \size(b_2) \leq \usize \size(b_2')$.
Here we typically choose constants $\usize= 1/\lsize=2$.
The two boxes $b_2$ and $b_2'$ are called almost-equal-sized.
 \item \emph{Bounded-Separation Property:} $\ldist \max( \size(b_2),
 \size(b_2')) \le
 \dist(b_2, b_2') \le \udist \max(\size(b_2), \size(b_2'))$
 for constants $1 < \ldist < \udist$ to be specified later.
\end{compactenum}
\end{definition}

Two boxes $b$ and $b'$ present in the bounded-separated pair decomposition
are called a pair of \emph{bounded-separated boxes} in $CT(V)$.
The choice of the constants $\udist > \ldist \ge \frac{2 \sqrt {d} t}{t-1}$
 depends on the spanning ratio $t>1$ required.
The constants $\udist$ and $\ldist$ will be chosen as specified later
 to ensure the existence
 of a pair of bounded-separated boxes.
Observe  the fact that a pair of boxes $b$ and $b'$ is in BSPD
 does not imply that the distance between $b$ and $b'$ is in the range
 $[\ldist  \cdot \max(\size(b), \size(b')), \udist  \cdot
 \max(\size(b), \size(b'))]$; it is possible
 that the distance $\dist(b,b') > \udist \cdot \max(\size(b),
 \size(b'))$.
However, observe that $\dist(\father(b),\father(b')) \le \udist \cdot
\max(\size(\father(b)),  \size(\father(b')))$ for the parent boxes
$\father(b)$ and $\father(b')$ of $b$ and $b'$.


Given the split-tree $CT(V)$, we then briefly discuss how to connect
pairs of nodes to form edges in the spanner.
Our method is based on the concept of \emph{potential-edge-boxes}.

\begin{definition}[potential-edge-boxes]
Two enclosing-boxes $b$ and $b'$ in the compressed split-tree
 $CT(V)$ (corresponding to  two vertices in
 $CT(V)$) are said to be a pair of \emph{potential-edge boxes},
 denoted as $(b,b')$, if
\begin{compactenum}
\item
The  pair of floating-virtual boxes $b_2=\virtualbox(b)$ and
 $b_2'=\virtualbox(b')$ have the property of bounded-separation;
the pair of boxes $b_2$ and $b_2'$ is called
 the pair of bounded-separated floating-virtual boxes defining the
 pair of potential-edge
 boxes $b$ and $b'$.
\item
 None of the pairs $(b, \father(b'))$, $(\father(b),
 b')$,  $(\father(b), \father(b'))$, is a pair of potential-edge boxes in
 $CT(V)$.
\end{compactenum}
\end{definition}

For a pair of potential-edge boxes $(b,b')$, we use $(\virtualbox(b),
\virtualbox(b'))$ to denote the floating-virtual boxes of $b$ and $b'$
respectively that define the pair of potential-edge boxes.
From the definition of floating-virtual boxes, we have  the following lemma.
\begin{lemma}
For a pair of potential-edge boxes $b$ and $b'$ and the pair of
 floating-virtual boxes $b_2$ and $b_2'$ defining them,
 the floating-virtual box $b_2=\virtualbox(b) \supseteq b$
 cannot contain another  enclosing-box  that is disjoint of $b$,
i.e. $b'' \cap b = \emptyset \implies \virtualbox(b) \cap b'' = \emptyset $.
\end{lemma}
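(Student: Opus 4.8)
The plan is to prove the slightly stronger statement that \emph{any} box $\virtualbox(b)$ satisfying the four conditions of Definition~\ref{definition:floating-virtual-box} is disjoint from every enclosing-box $b''$ of $CT(V)$ whose node set is disjoint from that of $b$; the hypothesis that $(b,b')$ is a pair of potential-edge boxes only serves to single out which floating-virtual box of $b$ is meant, and is not otherwise used. Before the case analysis I would collect the facts we need, all immediate from Definitions~\ref{definition:tight-virtual}--\ref{definition:floating-virtual-box} and the proof of Lemma~\ref{lemma:finding-tight-virtual-box}: (i) if $d$ is a descendant of $c$ in $CT(V)$, then $\mybox(d)\subseteq\mybox(c)$ and $\tvbox(d)\subseteq\mybox(c)\subseteq\tvbox(c)$, because every split places the children enclosing-boxes, and the children tight-virtual boxes, inside the parent enclosing-box; (ii) the two children enclosing-boxes of any internal vertex have disjoint interiors, since they lie in the two halves $B_1',B_2'$ of the parent cut by the split hyperplane, whence the tight-virtual boxes of descendants lying under different children are interior-disjoint as well; and (iii) $\virtualbox(b)\subseteq\mybox(\father(b))$ and $\virtualbox(b)\subseteq\tvbox(\father(b))_h$, the half of $\tvbox(\father(b))$ containing $b$.

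Next I would locate $b''$ in the tree. Since $b$ and $b''$ have disjoint, nonempty node sets, neither is an ancestor of the other, so they have a common ancestor $a$ that is a proper ancestor of each, and $b$ descends into one child $c_1$ of $a$ while $b''$ descends into the other child $c_2$; by (i), $b''\subseteq\tvbox(c_2)$. If $a$ is a \emph{strict} ancestor of $\father(b)$, then $c_1$ is an ancestor of, or equal to, $\father(b)$, so by (iii) and (i) we get $\virtualbox(b)\subseteq\mybox(\father(b))\subseteq\mybox(c_1)$, which by (ii) is interior-disjoint from $\mybox(c_2)\supseteq b''$, and we are done.

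The remaining case is $a=\father(b)$, so that $c_1=b$ and $c_2=\tilde b$, the sibling of $b$. Here the point is that the bisecting hyperplane of $\tvbox(\father(b))$ along its longest dimension coincides with the split-tree hyperplane $h$ that cut $\mybox(\father(b))$ into $\mybox(b)$ and $\mybox(\tilde b)$: by Definition~\ref{definition:tight-virtual} the longest side of $\tvbox(\father(b))$ has length $\size(\mybox(\father(b)))$, so we may take its longest dimension to be the split dimension $j^*$ (legitimately, since $\mybox(\father(b))\subseteq\tvbox(\father(b))$ forces the $j^*$-side of $\tvbox(\father(b))$ to be a longest one), and then $\mybox(\father(b))$ and $\tvbox(\father(b))$ occupy the \emph{same} interval in coordinate $j^*$, hence have the same midpoint there, hence the same bisector $h$. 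By the construction in the proof of Lemma~\ref{lemma:finding-tight-virtual-box}, $\tvbox(\tilde b)$ is obtained by shrinking the $\tilde b$-side half of $\tvbox(\father(b))$, so it sits on the $\tilde b$-side of $h$, whereas $\virtualbox(b)\subseteq\tvbox(\father(b))_h$ sits on the $b$-side; thus $\virtualbox(b)$ and $\tvbox(\tilde b)\supseteq b''$ meet at most along $h$, and since the nodes defining $b''$ lie strictly on the $\tilde b$-side of $h$ under the split-tree's tie-breaking for nodes on a split hyperplane, $b''$ is disjoint from $h$ and hence from $\virtualbox(b)$.

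I expect the main obstacle to be exactly this last step in the case $a=\father(b)$: establishing that the bisector of the tight-virtual box is the split hyperplane, which rests on the tight-virtual box and the enclosing box sharing their longest-side length and on a consistent choice of longest dimension when several dimensions attain the maximum. The leftover boundary bookkeeping — which side of a split hyperplane a node lying on it is assigned to, and the resulting upgrade from interior-disjointness to genuine disjointness for boxes that abut $h$ (also needed in the strict-ancestor case) — is routine once the split-tree's tie-breaking convention is fixed, and I would dispose of it in one sentence rather than dwell on it.
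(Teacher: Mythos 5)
The paper offers no proof at all for this lemma---it asserts it with only the remark ``From the definition of floating-virtual boxes, we have the following lemma''---so your task was really to reconstruct the reasoning the authors took to be immediate. Your reconstruction is sound. The LCA case split is the natural one, and you correctly identify that the only substantive case is $a=\father(b)$, where $b''$ descends from the sibling $\tilde b$; that is precisely the situation Property~\ref{fv-in-p} of Definition~\ref{definition:floating-virtual-box} is designed to handle, and the easy strict-ancestor case is dispatched by $\virtualbox(b)\subseteq\mybox(\father(b))\subseteq\mybox(c_1)$ together with the disjointness of the two children halves (or, equivalently, the stated disjointness of $\tvbox(c_1)$ and $\tvbox(c_2)$).

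The place where you do genuine work is the observation that the bisector of $\tvbox(\father(b))$ along its longest dimension coincides with the split-tree hyperplane $h$. The paper never states this, yet it quietly uses the same subscript $h$ for two a priori different cuts: ``halving the (longest dimension of)'' $\tvbox(\father(b))$ in the paragraph preceding Definition~\ref{definition:floating-virtual-box}, versus ``partitioning $\tvbox(b)$ using the hyperplane $h$'' in the proof of Lemma~\ref{lemma:finding-tight-virtual-box}. Your Definition~\ref{definition:tight-virtual}-based argument---same longest-side length plus containment forces the same interval in the split dimension, hence the same midpoint, hence the same cutting hyperplane---is exactly the justification that is needed for the notation to be coherent and for Case~2 to close. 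You are also right that a consistent tie-breaking choice of longest dimension must be assumed when several dimensions attain the maximum; the paper does not spell this out.

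One caution: your final ``routine boundary bookkeeping'' remark is a bit optimistic. If a node lies exactly on $h$ and the split-tree assigns it to the $\tilde b$ side, then $\mybox(\tilde b)$, and hence a descendant $b''$, may have a face on $h$; meanwhile nothing in Definition~\ref{definition:floating-virtual-box} prevents $\virtualbox(b)$ from extending all the way to $h$, since $\tvbox(\father(b))_h$ abuts $h$. In that degenerate configuration the two closed boxes can meet on $h$, so the literal statement $\virtualbox(b)\cap b''=\emptyset$ fails without either a general-position assumption (no input node on a split hyperplane), a tie-breaking convention that always sends $h$-nodes to the $b$ side, or a reading in terms of interiors. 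The honest statement is that the lemma holds up to such measure-zero degeneracies, which is the implicit convention the paper operates under throughout; it would strengthen your write-up to say so rather than to label it routine.
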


\begin{definition}[edge-distance of potential-edge boxes]
\label{def:edge-distance}
For a pair of potential-edge boxes $b_1$ and $b_1'$, define its
\emph{edge-distance}, denoted as $\boxdist(b_1, b_1')$,
 as the distance between the pair of
 bounded-separated floating-virtual boxes $b_2$ and $b_2'$ that define
 the pair of   potential-edge boxes $b_1$ and $b_1'$,  \ie,
 $\boxdist(b_1, b_1') = \dist(b_2, b_2')$.
\end{definition}

Given a compressed split-tree $CT(V)$ and associated boxes at each node
 we then define the \emph{edge-neighboring boxes} of an enclosing-box  $b$ as
\[
\Nbr(b)= \{ b' \mid \mbox{ boxes $b$ and $b'$ are a pair of
  potential-edge boxes in $CT(V)$}\}
\]
Observe that here the distance $\dist(b,b')$ could be arbitrarily
 larger than the maximum size of boxes $b$ and $b'$.
However, the distance $\dist(\father(b), \father(b')) \le \udist
 \max(\size(\father(b)), \size(\father(b')))$ if $b' \in \Nbr(b)$.
Given a fixed size $L$, and a tight-virtual box $b$,
 the number of tight-virtual boxes $b'$ such that (1)
 $(\virtualbox(b),\virtualbox(b'))$ has the property of
 bounded-separation, and (2) $\virtualbox(b)$ has size $L$,
 is clearly bounded by a constant.
However, this does not mean that the number of tight-virtual boxes,
 the cardinality of $\Nbr(b)$,
 that could pair with $b$ to form a pair of potential-edge boxes
 is bounded by a constant.
The reason is that the floating-virtual boxes for a tight-virtual box
 $b$ depend on with which box the box $b$ will be paired (see
 Figure~\ref{fig:boxes} for illustration).
There is an example of nodes' placement such that
 the cardinality   $|\Nbr(b)|$ could be as large as $\Theta(n)$.
The example is as follows: in $2D$, we place a node at $v_0=(0,0)$ and $n$
 nodes $v_i$ at $(0,2^{i}-\epsilon)$, for $1\le i \le n$.
An additional node $v_{n+1}$ is placed at $(0,-2^{n})$.
Here $\epsilon>0$ is a sufficiently small number.
Then the smallest box $b$ containing node $v_{n+1}$ will have $\Theta(n)$
 boxes in $\Nbr(b)$ (the sizes of these boxes are about $2^{i}$, $0\le
 i \le n-1$).

Thus, our construction method will use another set instead
\[
\Nbr_{\ge}(b)= \{ b' \mid b' \in \Nbr(b) \mbox{ and }
\size(\father(b')) \ge \size(\father(b)) \}
\]
Observe that for any pair of potential-edge boxes $b$ and $b'$,
 we either have $b' \in \Nbr_{\ge}(b)$ or  $b \in \Nbr_{\ge}(b')$, or both.
Consider $\father(b)$ and $\father(b')$.
The distance between $\father(b)$ and $\father(b')$ is at most
$\udist \max(\size(b_2), \size(b_2'))$ since
the floating-virtual box for $b$ is always contained inside $\father(b)$.
Thus there are at most $\Theta  ((\rho_2  \cdot \usize)^d)$ boxes $b'$.

\begin{lemma}
The cardinality of $\Nbr_{\ge}(b)$ is bounded by a  constant
$ \Theta((\rho_2 \cdot \usize)^d)$.
\end{lemma}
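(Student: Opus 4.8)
The plan is to reduce the count to a packing argument in the space of \emph{parent boxes}. Fix a box $b$ and let $b' \in \Nbr_{\ge}(b)$ be any edge-neighbor with $\size(\father(b')) \ge \size(\father(b))$. By definition of a pair of potential-edge boxes, the parents $\father(b)$ and $\father(b')$ do \emph{not} form a pair of potential-edge boxes, yet the floating-virtual boxes $\virtualbox(b)$ and $\virtualbox(b')$ do satisfy bounded separation; as the text already observes (and as follows from Definition~\ref{definition:floating-virtual-box}, since $\virtualbox(b) \subseteq \father(b)$), this forces $\dist(\father(b),\father(b')) \le \udist \max(\size(\father(b)),\size(\father(b'))) = \udist\,\size(\father(b'))$. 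Thus every such $\father(b')$ lies within distance $\udist\,\size(\father(b'))$ of $\father(b)$ and has size at least $\size(\father(b))$.

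Next I would control the \emph{sizes} of the parents. Writing $\father(b)$ and $\father(b')$ for the parents, note $b = \encbox(\cdot)$ sits inside one of the two halves of $\tvbox(\father(b))$, so $\size(b) \le \size(\father(b))$, and since a child-by-halving roughly halves the longest side, one also has $\size(\father(b)) \le 2\beta\,\size(b)$ up to the tight-virtual aspect-ratio factor $\beta \le 2$; the same two-sided bound relates $\size(b')$ and $\size(\father(b'))$. Combining this with the almost-equal-size property $\lsize\,\size(\virtualbox(b')) \le \size(\virtualbox(b)) \le \usize\,\size(\virtualbox(b'))$ — and the fact that each floating-virtual box is comparable in size to its tight-virtual box — we get that $\size(\father(b'))$ lies in a bounded multiplicative window $[\,c_1\,\size(\father(b)),\, c_2\,\size(\father(b))\,]$ where $c_1,c_2$ depend only on $\usize,\lsize,\beta$, hence on $t$ and $d$ only. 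So all the eligible parents $\father(b')$ have size within a constant factor of each other and of $\size(\father(b))$.

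Now the packing argument: the parents $\father(b')$ are disjoint $d$-dimensional boxes (distinct nodes of $CT(V)$ at the parent level give disjoint tight-virtual boxes), each of size $\Theta(\size(\father(b)))$ — in particular each of volume $\Omega(\size(\father(b))^d)$ since the aspect ratio is $\le \beta$ — and each contained in the ball of radius $O(\size(\father(b)))$ about $\father(b)$ (distance bound plus their own bounded diameter). A volume comparison then caps their number by $\Theta((\udist\usize)^d)$, which is exactly the stated bound $\Theta((\rho_2\usize)^d)$; tracking the constants through the window $[c_1,c_2]$ and the factor $\beta$ only changes the hidden constant. Each such parent $\father(b')$ has $O(1)$ children, so the number of the $b'$ themselves is bounded by the same order.

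The main obstacle is the bookkeeping in the second step: one must be careful that ``$\father(b')$ not forming a potential-edge pair'' is genuinely what yields the $\udist$ upper bound on $\dist(\father(b),\father(b'))$, and that the chain of size comparisons (node box $\to$ tight-virtual $\to$ floating-virtual, at both the child and parent levels, through the almost-equal-size and aspect-ratio constraints) does not secretly lose more than a constant factor — e.g. the leaf-box convention of size $\epsilon$ must be handled so it does not break the ``$\size(\father(b)) = \Theta(\size(b))$'' type estimates. Once the sizes are pinned to a constant window, the disjointness-plus-volume packing is routine and self-contained.
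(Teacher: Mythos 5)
Your high-level approach is the same as the paper's one-line argument: observe that the parents $\father(b')$ lie within distance $\udist\max(\size(b_2),\size(b_2'))$ of $\father(b)$ and then pack. However, the step you use to pin the parent sizes to a constant window is incorrect, and this is exactly where the argument needs care.

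First, the inequality you assert, $\size(\father(b)) \le 2\beta\,\size(b)$, is \emph{false} in a compressed split-tree. A child $b \in \Successor(\father(b))$ is obtained by halving and then \emph{shrinking} to the minimum enclosing-box, so $\size(b)$ can be arbitrarily small relative to $\size(\father(b))$; the shrinking is precisely what makes the tree ``compressed.'' Second, and relatedly, the claim that ``each floating-virtual box is comparable in size to its tight-virtual box'' also fails: we only know $\size(\tvbox(b)) = \size(b) \le \size(\virtualbox(b)) \le \size(\tvbox(\father(b))_h) \le \size(\father(b))$, and the ratio $\size(\father(b))/\size(b)$ is unbounded, so $\virtualbox(b)$ can sit anywhere in a wide range. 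Consequently the chain you set up (parent $\to$ tight-virtual $\to$ floating-virtual, at both ends, through the almost-equal-size constraint) does not deliver the upper bound $\size(\father(b')) \le c_2\,\size(\father(b))$. Without that upper bound, the volume-packing step is not immediate: a nearly cubical box whose side is much larger than the ball radius $R = \Theta(\udist\usize\,\size(\father(b)))$ can intersect the ball in a thin sliver and need not contain any cube of side $\Omega(\size(\father(b)))$ inside the ball, so you cannot directly charge each parent a fixed volume. You would need a separate argument to bound the size of the parents from above, or to show that only $O(1)$ of the over-large parents can occur (e.g., by exploiting the nesting structure of $CT(V)$).

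A secondary issue is the disjointness of the parents: you write that ``distinct nodes of $CT(V)$ at the parent level give disjoint tight-virtual boxes,'' but the $\father(b')$ ranging over $b' \in \Nbr_{\ge}(b)$ are not all at a single level; two such parents can be nested, and condition~2 of the potential-edge-box definition (that $(b,\father(b'))$ is not a potential-edge pair) does not obviously rule this out. To be fair, the paper's own justification of this lemma is essentially the same sketch and also does not spell out either the size-control or the disjointness step; your proposal correctly identifies the structure of the argument and even flags some of the bookkeeping risk at the end, but the specific inequalities you chose to close the gap are ones that hold in an \emph{uncompressed} split-tree and break once shrinking is allowed.
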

We now show that we can construct a linear size BSPD based on $CT(V)$
 in linear time.

\begin{lemma}
\label{lemm:finding-almost-equal-sized-box}
Given any pair of tight-virtual boxes $b$ and $b'$
in a BSPD with
$\size(b') \le \size(b) \le \size(\father(b'))$, in time $O(d)$,
 we can  find a
 floating-virtual box $b''$ inside $\father(b')_h$ of
 \emph{almost-equal-size} with $b$.
\end{lemma}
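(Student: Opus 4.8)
The plan is to construct $b''$ explicitly starting from the tight-virtual box $\tvbox(b')$ and its parent-half box $\tvbox(\father(b'))_h$, scaling it up so that its longest side becomes comparable to $\size(b)$ while keeping it inside $\father(b')_h$. First I would recall, via Lemma~\ref{lemma:finding-tight-virtual-box} and the Definition of floating-virtual boxes, that $\tvbox(b') \subseteq \tvbox(\father(b'))_h \subseteq \father(b')$, and that $\tvbox(\father(b'))_h$ has aspect ratio at most $\aratio \le 2$ with $\size(\tvbox(\father(b'))_h) = \frac12 \size(\father(b'))$ (it is the half of $\tvbox(\father(b'))$ along the longest dimension). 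Since by hypothesis $\size(b') \le \size(b) \le \size(\father(b'))$, the target size $\size(b)$ lies between the size of a child box and the size of the parent box; I would show it also lies, up to the constant factor $\aratio$, between $\size(\tvbox(b'))$ and $\size(\tvbox(\father(b'))_h)$, so there is ``room'' to grow $\tvbox(b')$ into a box of size $\Theta(\size(b))$ that still fits inside $\tvbox(\father(b'))_h$.

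The construction itself: take $b''$ to be a box that contains $\tvbox(b')$, has longest side exactly $\min(\usize\,\size(b'),\, \tfrac12\size(\father(b')))$ or, more carefully, a value in $[\lsize\,\size(b), \usize\,\size(b)]$, and whose remaining side-lengths are chosen equal to its longest side clipped so as not to exceed the corresponding side-length of $\tvbox(\father(b'))_h$ — exactly the same ``grow each side until it meets the enclosing half-box or the aspect-ratio bound kicks in'' manoeuvre used in the proof of Lemma~\ref{lemma:finding-tight-virtual-box}. Each of the $d$ coordinate intervals is adjusted once, so the whole thing is $O(d)$ time. I would then verify the three things required of a floating-virtual box together with the almost-equal-size condition: (i) $b'' \supseteq \tvbox(b')$ by construction; (ii) $b'' \subseteq \tvbox(\father(b'))_h \subseteq \father(b')$, using that we never let a side exceed the matching side of $\tvbox(\father(b'))_h$ and that the center can be placed to keep $\tvbox(b')$ inside; (iii) the aspect ratio of $b''$ is at most that of $\tvbox(\father(b'))_h$, hence at most $\aratio \le 2$, because every side of $b''$ is either equal to the longest side or equal to the corresponding (already aspect-bounded) side of the enclosing half-box; and (iv) $\lsize\,\size(b) \le \size(b'') \le \usize\,\size(b)$, i.e. $b''$ is almost-equal-sized with $b$, which is where the hypothesis $\size(b) \le \size(\father(b'))$ is essential — it guarantees $\usize\,\size(b)$ (with $\usize = 2$) does not force the longest side of $b''$ past $\tfrac12\size(\father(b'))$, while $\size(b)\ge\size(b')$ guarantees the box only needs to be enlarged, not shrunk below $\tvbox(b')$.

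The main obstacle is step (ii)/(iv) simultaneously: ensuring that the box is large enough to be almost-equal-sized with $b$ \emph{and} small enough (and positioned well enough) to sit inside $\tvbox(\father(b'))_h$, given that $\tvbox(\father(b'))_h$ itself has size only $\tfrac12\size(\father(b'))$ rather than $\size(\father(b'))$. This is precisely why the statement assumes $\size(b) \le \size(\father(b'))$ and uses constants $\usize = 1/\lsize = 2$: with $\size(b'')$ chosen as roughly $\size(b)$ one needs $\size(b) \lesssim \tfrac12\size(\father(b')) \cdot \aratio$, and the interplay of the factors $\usize$, $\lsize$, $\aratio \le 2$ has to be checked to close the argument — but no genuinely new idea beyond the clipping construction of Lemma~\ref{lemma:finding-tight-virtual-box} is needed, just a careful bookkeeping of these constants.
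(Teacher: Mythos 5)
Your proposal is correct and follows essentially the same dimension-by-dimension constructive route as the paper: the paper also fixes a target size for $b''$ in $[\size(b)/2,\size(b)]$ and, for each coordinate, slides the projection of $b''$ within the projection of $\tvbox(\father(b'))_h$ so that it covers the projection of $\tvbox(b')$, giving an $O(d)$-time construction. One small caution on your side: your first suggested choice of longest side, $\min(\usize\,\size(b'),\tfrac12\size(\father(b')))$, can be much smaller than $\size(b)$ when $\size(b')\ll\size(b)$ and so would fail the almost-equal-size requirement; taking the longest side to be $\size(b)$ itself (or anything at least $\max(\size(b)/2,\size(b'))$ and at most the largest side of $\tvbox(\father(b'))_h$) and then clipping coordinatewise, as in your ``more carefully'' variant, is what makes both the containment of $\tvbox(b')$ and the size bound $\size(b)/2\le\size(b'')\le\size(b)$ go through simultaneously.
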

\begin{proof}
Notice that the size of $\father(b')_h$ is at least $\size(b) /2$
 since $\size(b) \le \size(\father(b'))$.
Pick any dimension, say $d_1$. Projecting  $\father(b')_h$ on this dimension
 results in a segment, say $xy$.
Let the segment $ab$ be the projection of $b'$ in the dimension of $d_1$.
Note that $[a,b] \subset [x,y]$.
We then align a floating-virtual box $b''$
( s.t. $\size(b)/2 \leq \size(b'') \leq \size(b)$)
 such that its projection on the dimension that contains
  $[c,d]$ starts at  $x$.
If $[c,d]$ contains $[a,b]$ we are done. Otherwise,
align $b''$ such that $d=b$. Since $b-x \geq d-c$ and $d-c \geq b-a$, the
alignment of the box $b''$ in dimension $d_1$ is possible.
This can be repeated for all dimensions.
It is easy to show that the size of the floating-virtual box $b''$ is
 at most $\size(b)$, and at least $\size(b) /2$.
This finishes the proof.
\end{proof}

\begin{theorem}
\label{theo:potential-edge-boxes}
Given the compressed split-tree $CT(V)$, we can construct a
 bounded-separated pair decomposition (BSPD) (using constants
 $\ldist$ and $\udist$)
 in linear-time $O(n)$ and the realization has size $O(n)$.
The constants $\ldist$ and $\udist$ are related to $\varrho$ of the WSPD as follows:
\begin{equation}
\begin{cases}
\ldist \le \varrho - \sqrt{d}\\
\udist \ge 2\varrho + 4 {\sqrt  d}
\end{cases}
\end{equation}
\end{theorem}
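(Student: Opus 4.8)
The plan is to build the BSPD by post-processing the WSPD from Theorem~\ref{theo:linear-time}: for every well-separated pair $(b,b')$ produced there, I walk down the split-tree from $b$ and $b'$ simultaneously, replacing a pair by the pairs of its children whenever the current pair is ``too far'' (distance exceeds $\udist\max(\size(b),\size(b'))$), and stopping the recursion as soon as the pair is bounded-separated in the sense of the definition. The key observation driving correctness is the one already isolated in the excerpt: even if $\dist(b,b')$ is huge relative to $\max(\size(b),\size(b'))$, we always have $\dist(\father(b),\father(b'))\le\udist\max(\size(\father(b)),\size(\father(b')))$, so the refinement terminates before we reach the leaves, and at the stopping level the pair is genuinely bounded-separated. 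I would first verify this dichotomy, then show that at the stopping pair we can realize the floating-virtual boxes: using Lemma~\ref{lemm:finding-almost-equal-sized-box} to get almost-equal-sized floating boxes in $O(d)$ time, and checking that the lower separation bound $\ldist\le\varrho-\sqrt d$ survives (because shrinking from $\father$ to a floating box inside $\father(b')_h$ loses at most an additive $\sqrt d\cdot\size$ term in distance, while the WSPD guarantees $\dist\ge\varrho\max(\size)$), and the upper bound $\udist\ge 2\varrho+4\sqrt d$ is exactly what the stopping rule plus the aspect-ratio-$\le 2$ bookkeeping gives.

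Concretely, the steps in order: (1) State the refinement procedure as a recursion on pairs rooted at each WSPD pair, refining the larger box's children (as in the proof of Theorem~\ref{theo:linear-time}), with termination test ``bounded-separated''; (2) prove termination and the size bound — here I mimic the Callahan--Kosaraju packing argument: each original WSPD pair spawns only $O((\udist)^d)$ refined pairs because once $\dist(\father(b),\father(b'))\le\udist\max(\size(\father))$ holds, the number of boxes $b'$ of comparable size within that distance of $b$ is $O((\udist\usize)^d)$ by the earlier cardinality lemma on $\Nbr_{\ge}(\cdot)$, so the total realization size is $O(n)$ and the work is $O(dn)$; (3) for each surviving pair invoke Lemma~\ref{lemm:finding-almost-equal-sized-box} to produce $\virtualbox(b),\virtualbox(b')$ and verify the Almost-Equal-size Property and both inequalities of the Bounded-Separation Property, which is where the constraints $\ldist\le\varrho-\sqrt d$ and $\udist\ge 2\varrho+4\sqrt d$ come from — the $-\sqrt d$ accounting for the worst-case distance loss when passing from a box to a floating box strictly inside its parent-half, and the $2\varrho+4\sqrt d$ accounting for going up one level (factor $2$ on sizes from halving) plus two such $\sqrt d$ slacks; (4) confirm that the floating boxes of disjoint split-tree boxes are disjoint, which was already noted from property~\ref{fv-in-p} of Definition~\ref{definition:floating-virtual-box}, so the realization is still a valid well-separated realization of $V\otimes V$.

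I expect the main obstacle to be step~(3): pinning down the exact constants. One has to be careful that the floating-virtual box $b''$ chosen inside $\father(b')_h$ simultaneously satisfies \emph{all} four properties of Definition~\ref{definition:floating-virtual-box} (it contains $\tvbox(b')$, sits inside $\father(b')$, sits inside $\tvbox(\father(b'))_h$, and has aspect ratio $\le\aratio$) \emph{and} has size almost-equal to the other floating box \emph{and} the resulting distance lands in $[\ldist\max(\size),\udist\max(\size)]$ — these pull in slightly different directions, and the inequalities $\ldist\le\varrho-\sqrt d$, $\udist\ge2\varrho+4\sqrt d$ are precisely the slack one must reserve. The lower bound is the delicate side: since a floating box can be shifted toward the other box within its parent-half, the distance can drop, and one must show it cannot drop below $(\varrho-\sqrt d)\max(\size)$; here I would bound the shift by the diameter of $\tvbox(\father(b'))_h$ minus the size of $\tvbox(b')$, which with aspect ratio $\le 2$ and the ``$\size(b)\le\size(\father(b'))$'' hypothesis of Lemma~\ref{lemm:finding-almost-equal-sized-box} is at most $\sqrt d\,\size(b)$, giving exactly the claimed loss. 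The upper bound is comparatively routine: it follows from the stopping rule applied at the parent level together with the triangle inequality over the displacement from $\father$-boxes down to floating boxes.
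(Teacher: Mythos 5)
Your arithmetic intuition in steps (3)--(4) is close to the paper's, and you correctly identify where the $\sqrt d$ slack and the factor $2$ come from, but step (1) is stated in the wrong direction and this is a genuine gap. You propose to fix a WSPD leaf pair $(b,b')$ whose distance exceeds $\udist\max(\size(b),\size(b'))$ by replacing it with pairs of children and recursing. But refinement makes this strictly worse: if $b_i\subset b$ and $b_j'\subset b'$ then $\dist(b_i,b_j')\ge\dist(b,b')$ while $\max(\size(b_i),\size(b_j'))\le\max(\size(b),\size(b'))$, so the distance-to-size ratio can only grow. Moreover, the floating-virtual box of a child $b_i$ must lie inside $\tvbox(\father(b_i))_h$, i.e. inside a half of $b$ itself, so the available room for a floating box also shrinks as you descend. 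The recursion you describe therefore never reaches a bounded-separated pair and does not terminate before the leaves; the ``key observation'' you quote about $\father(b),\father(b')$ points in the opposite direction from the procedure you state.

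The paper never changes the set of pairs at all. A BSPD is by definition the \emph{same} realization as the WSPD, just with an extra pair of floating-virtual boxes attached to each $(b,b')$. The fix for a too-distant pair is to \emph{enlarge} the boxes used, not shrink them: since $(b,\father(b'))$ (or $(\father(b),b')$) was refined precisely because it was not well-separated, one has $\dist(b,\father(b'))\le\varrho\max(\size(b),\size(\father(b')))$, which gives the bound $\dist(b,b'')\le(2\varrho+4\sqrt d)\min(\size(\father(b))/2,\size(\father(b'))/2)$ needed to place equal-sized floating-virtual boxes of size $\Delta=\dist(b,b'')/\udist$ inside the parent-halves. That is the Case~2 argument in the paper, and it is where the constant $\udist\ge 2\varrho+4\sqrt d$ actually comes from. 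Your lower-bound accounting ($\ldist\le\varrho-\sqrt d$ from the $\sqrt d\cdot\size$ shift inside the parent-half, via Lemma~\ref{lemm:finding-almost-equal-sized-box}) is correct and matches the paper's Case~1 calculation, and your step~(4) observation about disjointness via property~\ref{fv-in-p} is fine; but you should replace the downward refinement of step~(1) with the paper's two-case construction on the fixed set of WSPD pairs, and then the packing argument in your step~(2) becomes unnecessary since the realization size is just that of the WSPD, which is already $O(n)$ by Theorem~\ref{theo:linear-time}.
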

\begin{proof}
We will prove this based on  the well-separated pair decomposition
 computed in the proof of Theorem \ref{theo:linear-time}.
In the proof, we will mainly focus on the tight-virtual boxes, instead
 of the actual enclosing-boxes.
Observe that in the WSPD computation tree $T$ (defined in proof of Theorem
 \ref{theo:linear-time})
 of a WSPD based on the compressed split-tree $CT(V)$,  all the leaf
 vertices will form a WSPD.
We first build a WSPD with a constant $\varrho$, where the exact value
 of $\varrho$ will be determined later.

Consider a pair of boxes $(b,b')$ at  a leaf vertex in the
 WSPD-computation tree $T$, \ie $(b,b')$ is an element of the computed
 WSPD.
We  show that, by  adjusting the sizes of pairs of boxes
 in the WSPD computed,
 we can get a BSPD.
More specifically, for each pair of boxes $b$ and $b'$ in WSPD,
 we show how to obtain two almost-equal-sized floating-virtual boxes
 $b_2$ (containing $b$)
 and $b_2'$ (containing $b'$) such that
 $\ldist \size(b_2) \le \dist(b_2, b_2') \le \udist \size(b_2)$ for
 some constants $\ldist < \udist$.

Assume w.l.o.g.,  $\size(b') \le \size(b)$.
First of all, because of the properties of  the WSPD computation-tree $T$,
 $ \size(b) \le \size(\father(b'))$ when $(b, b')$ is a
 leaf node in the computation tree $T$.
Notice that here, to get the vertex  $(b, b')$,
  we could have split $\father(b')$ first or we could have split
 $\father(b)$ first in the WSPD computation-tree $T$.
Thus, by Lemma~\ref{lemm:finding-almost-equal-sized-box}, we can find
 a floating-virtual box $b''=\virtualbox(b')$, which  is
 almost-equal-sized to $b$ (i.e., $\size(b) /2 \le \size(b'') \le
 \size(b)$), is
 inside $\father(b')$, and contains $b'$ inside.

We now show that the distance $\dist(b, b'')$ is at least a constant
 fraction of $\size(b)$.
Obviously, \[\dist(b, b'') > \dist(b, b') - \sqrt{d} \size(b'')
 \ge \varrho \max(\size(b), \size(b')) -  \sqrt{d} \size(b)
=  (\varrho-\sqrt{d}) \size(b) \ge \ldist \size(b).\]

On the relations of $\dist(b, b'')$ and the size $\size(b)$,
 there are two complementary cases here:
\begin{enumerate}
\item
[Case 1: $\dist(b, b'') \le \udist \size(b) $:]
In this case, we have already found a pair of floating-virtual boxes $b_2=b$
and $b_2'=b''$ for the pair of boxes $b$ and $b'$ such that the distance
between the floating-virtual boxes satisfies
 that $ \ldist \size(b) \le \dist(b, b'') \le \udist \size(b)$.
Thus, we put $(b, b')$ into BSPD.

\item[Case 2: $\dist(b, b'') > \udist \size(b) $:]
Here, from $\size(b) \le \size(\father(b'))$,  we have
 $\dist(b, \father(b')) \le \varrho \max(\size(b),
\size(\father(b'))) = \varrho
\size(\father(b'))$ (If this is not true, clearly in the WSPD
computation tree $T$, we will use $(b, \father(b'))$ instead of
$(b,b')$).
Similarly, we have $\dist(\father(b), b') \le
 \varrho \max(\size(\father(b)), \size(b'))$.

We now show how to find equal-sized floating-virtual
 boxes $b_2$ inside
the tight-virtual box  $\father(b)$ and  $b'_2$ inside the
tight-virtual box  $\father(b')$. This will identify the potential
edge-boxes.
Let $\Delta$ be the size of the equal-sized boxes $b_2$ and $b_2'$.
Then we have
\begin{eqnarray}
\begin{cases}
\dist(b_2, b_2') \ge \dist(b,b'') - 2 \sqrt {d} \Delta \\
\dist(b_2, b_2') \le \dist(b,b'')
\end{cases}
\end{eqnarray}
Then the following is clearly a
 sufficient condition for the existence of such a pair of
floating-virtual boxes $b_2$ and $b_2'$ to define the
potential-edge-boxes $b$ and $b'$:
\begin{eqnarray}
\begin{cases}
\dist(b_2, b_2') \ge \dist(b,b'') - 2 \sqrt {d} \Delta  \ge \ldist
\Delta \\
\dist(b_2, b_2') \le \dist(b,b'')  \le \udist \Delta
\end{cases}
\end{eqnarray}
Notice that $\size(b) \le \frac{\dist(b,b'')}{\udist}$.
Thus, it is equivalent to require $\Delta$ in range
\[
 \frac{\dist(b,b'')}{\udist} \le \Delta \le
\frac{\dist(b,b'')}{\ldist + 2\sqrt{d}}
\]
Clearly, we have a solution for $\Delta$ when
\begin{equation}
\label{eqn:condition-BSPD-1}
\udist  \ge \ldist + 2 \sqrt{d}
\end{equation}

We now show that the boxes $b_2$ and $b_2'$ will be inside the boxes
$\tvbox(\father(b)_h$ and $\tvbox(\father(b')_h$ respectively indeed.
To ensure this, we only need the condition that size $\Delta$ is at
 most $\min(\size(\father(b))/2, \size(\father(b'))/2 )$.

If $\size(\father(b)) \ge \size(\father(b'))$,  then we have
\begin{eqnarray}
\dist(b,b'') < \dist( b, \father(b'))  + 2 {\sqrt
  d}\size(\father(b'))
 \le (\varrho + 2 {\sqrt  d}) \size(\father(b'))  =
  (2\varrho + 4 {\sqrt  d}) \min( \size(\father(b))/2, \size(\father(b'))/2 )
\end{eqnarray}

If $\size(\father(b)) \le \size(\father(b'))$,  then we have
\begin{eqnarray}
\dist(b,b'') < \dist(\father(b), b'')  + 2 {\sqrt
  d}\size(\father(b))
 \le (\varrho + 2 {\sqrt  d}) \size(\father(b))  <
  (2\varrho + 4 {\sqrt  d}) \min( \size(\father(b))/2, \size(\father(b'))/2 )
\end{eqnarray}

Thus, when
\begin{equation}
\label{eqn:condition-BSPD-2}
2\varrho + 4 {\sqrt  d} \le \udist
\end{equation}
 we can choose $\Delta = \frac{\dist(b,b'')}{\udist} \le \min(
 \size(\father(b))/2, \size(\father(b'))/2 )$ as
 the final size of $b_2$ and $b_2'$ to ensure that the
 floating-virtual boxes $b_2$ and
 $b_2'$ will be  inside the boxes
 $\tvbox(\father(b))_h$ and $\tvbox(\father(b')_h$ respectively indeed.
Observe that we do not put any condition on the locations of $b_2$ and
 $b_2'$ here.
Thus, as in Lemma~\ref{lemm:finding-almost-equal-sized-box}
we can find arbitrary
locations of $b_2$ and $b_2'$ in time $O(d)$.
\end{enumerate}
Then a sufficient condition for the constants $\ldist$ and $\udist$
such that we can compute a BSPD  from the
 WSPD with a constant $\varrho$ is
\begin{equation}
\begin{cases}
\ldist \le \varrho - \sqrt{d}\\
\udist \ge 2\varrho + 4 {\sqrt  d}
\end{cases}
\end{equation}
It is easy to show that we need $\udist \ge 2 \ldist + 6 {\sqrt d}$.
This finishes the proof.
\end{proof}

Based on the above proof, we can find the floating-virtual boxes for
 each pair of boxes in the bounded-separated pair decomposition in
 time $O(d)$.
Consequently, we also can find  $\Nbr_{\ge}(b)$ for all boxes $b$ in
 linear-time.

\subsection{Cones and Cone Partition}
\label{subsec:cones}

We consider points in the $d$-dimensional  space $\dspace$.
Let $B= \{ z_1 , z_2 \ldots z_g \} $ be a set of
 $g$ linearly independent vectors.
The set of vectors $B$ is  called a {\it basis} of $\dspace$ for a
cone partition.
We define the \emph{cone} of $B$, $\Cone(B)$, as
\[ \Cone(B) =  \{ \sum_{1 \leq i \leq g} \lambda_i z_i \mid
 \forall i, \lambda_i \geq 0 \} \]
For a point $x$, we define the cone with apex at $x$ generated by
 a basis  $B$ as
$\RCone(x,B)= \{ y \mid   y-x \in \Cone(B) \}$.
For a given set of points $P \in \dspace$, the cone region
 of $P$ is defined as
\[ \RCone(P,B) = \bigcup_{x \in P}  \RCone(x,B).\]

A vector $uv$ is said to be  \emph{in the direction} of the cone
 $\Cone(B)$ (or cone with basis $B$) if $v \in \RCone(u,B)$.
A vector $uv$ is said to be  \emph{in the direction} of a family of cones
 with a collection of bases ${\cal B}=\{B_1, B_2, \cdots, B_m\}$ if
 $v \in \cup_{i=1}^{m}\RCone(u,B_i)$.
For two vectors $x$ and $y$, the angle between them is denoted as
 $\Ang(x,y) = \arccos (\frac{x\cdot y}{||x||\cdot ||y||})$, where $\|x\|$ is
 the length of the vector.
We define the {\it angular span}, based at the origin,
 of a set of vectors $B$ (and its corresponding cone) as
 $\Ang(B) = \max_{x,y \in B} \{ \Ang(x,y) \} $.
Given a set of bases $\cal B$, its angular span is defined
 similarly.
Let ${\cal F}$ be a finite family of basis of $\dspace$.
$\cal F$ is called a frame if $\bigcup_{B \in {\cal F}} \Cone(B) = \dspace$.
The angular span of a frame $\cal F$ is defined as
$\Ang( {\cal F} ) = \max_{B \in {\cal F}}  \Ang (B) $.
The following lemma has been shown in \cite{Yao82}.
\begin{lemma}
For any  $0 < \phi < \pi $, one can construct a frame $\cal F$ in
$\dspace$
with size $O((\frac{c}{\phi})^d)$ for a constant $c$
 such that $\Ang( {\cal F})  <  \phi $.
\end{lemma}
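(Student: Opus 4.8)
The goal is to construct, for any $0 < \phi < \pi$, a frame $\mathcal{F}$ (a finite family of bases whose cones tile $\dspace$) with angular span at most $\phi$ and size $O((c/\phi)^d)$. The plan is the standard geometric construction attributed to Yao: reduce the problem to covering the unit sphere $S^{d-1}$ by small "caps" or simplicial patches of angular diameter at most $\phi$, and then promote each patch to a polyhedral cone spanned by $d$ of its boundary rays. First I would observe that it suffices to cover $S^{d-1}$ by regions $R_1, \dots, R_m$, each contained in a spherical ball of angular radius $\phi/2$ (so that any two directions in the same region subtend an angle $< \phi$), where each $R_j$ is the radial projection of a $(d-1)$-simplex whose vertices lie on $S^{d-1}$. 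Taking $B_j$ to be the $d$ unit vectors pointing at those vertices gives a basis, and $\Cone(B_j)$ is exactly the cone over $R_j$; since the $R_j$ cover $S^{d-1}$, the cones cover $\dspace$, so $\{B_1,\dots,B_m\}$ is a frame, and $\Ang(B_j) \le \phi$ because $B_j \subset R_j$.

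To get the covering with the right size bound, I would use a packing/covering argument on $S^{d-1}$. Inscribe the sphere in a bounding structure — e.g. take a regular grid of side length $\Theta(\phi)$ on the surface of the circumscribed cube $[-1,1]^d$ (each of the $2d$ faces subdivided into $O((1/\phi)^{d-1})$ axis-aligned subsquares), and radially project these subsquares onto $S^{d-1}$; a subsquare of side $\Theta(\phi)$ projects to a region of angular diameter $O(\phi)$, so after rescaling the grid constant we achieve angular diameter $\le \phi$. The total count is $2d \cdot O((1/\phi)^{d-1}) = O((c/\phi)^{d-1}) = O((c/\phi)^d)$ patches. Then each projected subsquare, a (curved) quadrilateral patch, is triangulated into a constant number ($2^{d-2}$ or so, depending on $d$) of spherical simplices, each still of angular diameter $\le \phi$ and spanned by $d$ unit vectors — absorbing this constant into $c$. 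This yields the frame with the claimed size.

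The main obstacle I expect is the quantitative control of angular distortion under radial projection: I must verify that a Euclidean region of diameter $\delta$ sitting on the cube face (at Euclidean distance between $1$ and $\sqrt{d}$ from the origin) projects to a spherical region whose angular (geodesic) diameter is at most $C_d \cdot \delta$ for an explicit dimension-dependent constant $C_d$, so that choosing the grid spacing $\delta = \phi / C_d$ gives angular diameter $\le \phi$. This is an elementary but slightly fiddly estimate: the map $v \mapsto v/\|v\|$ is smooth away from the origin with gradient bounded by $1/\|v\| \le 1$ on the cube surface, and the geodesic distance on the sphere is comparable (within a factor $\pi/2$) to the chordal distance, so $C_d$ can be taken to be an absolute constant independent of $d$, which even improves the stated bound; I would simply fold any such constant into $c$. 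A secondary point to check is that each simplicial patch is genuinely spanned by $d$ linearly independent vectors (it is, since the $d$ vertices of a nondegenerate spherical simplex are linearly independent in $\dspace$) and that the cones over a covering of $S^{d-1}$ indeed union to all of $\dspace \setminus \{0\}$, hence to $\dspace$ since the origin lies in every cone. Since this lemma is cited to \cite{Yao82}, I would keep the argument to a short sketch along these lines rather than a fully detailed proof.
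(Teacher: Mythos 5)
The paper gives no proof of this lemma---it is stated immediately after ``The following lemma has been shown in \cite{Yao82}'' and is cited, not proved---so there is no in-paper argument to compare against. Your construction is a correct and complete reconstruction of the standard Yao-style argument: cover the surface of the circumscribing cube $[-1,1]^d$ by a grid of $(d-1)$-cubes of side $\Theta(\phi)$, radially project onto $S^{d-1}$, triangulate each projected patch into a bounded number of spherical simplices, and take the cone spanned by each simplex's $d$ vertex directions. The points you flag as needing care are all handled correctly: $v\mapsto v/\|v\|$ has operator-norm derivative $1/\|v\|\le 1$ on $\|v\|\ge 1$, and since you only ever apply the mean-value bound within a single convex grid cell on one face, the Lipschitz estimate is legitimate even though the cube surface itself is not convex; chordal and geodesic distances on the sphere differ by at most a factor $\pi/2$; a nondegenerate affine $(d-1)$-simplex whose affine hull misses the origin has linearly independent vertex vectors; and the origin is contained in every cone. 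Two small quibbles that do not affect correctness: the number of simplices in a triangulation of a $(d-1)$-cube is $(d-1)!$ rather than ``$2^{d-2}$ or so,'' but this factor (together with the $2d$ faces) is absorbed into the constant $c$, which---as everywhere in this paper---is permitted to depend on $d$; and the natural exponent your argument produces is $d-1$, since $S^{d-1}$ is $(d-1)$-dimensional, which is strictly stronger than the $O((c/\phi)^d)$ claimed, so the bound in the lemma follows a fortiori (using $\phi<\pi$ to pad the extra factor).
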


The \emph{Yao} graph, based on the cone partition of the space produced by
a frame $\cal F$, contains all edges $uv$, where $v$ is the closest
node to $u$ in some cone $\RCone(u,B)$ for $B \in {\cal F}$.
We use $Y(V, {\cal F})$ to denote such graph.
The following lemma was obvious.
\begin{lemma}
\label{lemm:yao}
The graph $Y(V, {\cal F})$ is a $t$-spanner with $O(n)$ edges
 for $t=\frac{1}{1-2\sin (\phi({\cal F}) / 2) }$
 when $\phi({\cal F}) < \pi /3$.
\end{lemma}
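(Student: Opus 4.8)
The plan is to treat the two assertions separately; both are classical. For the edge count, I would note that by the very definition of $Y(V,{\cal F})$, each node $u$ contributes at most one edge per cone $\RCone(u,B)$ with $B\in{\cal F}$, so the out-degree of $u$ is at most $|{\cal F}|$. Since $\phi({\cal F})<\pi/3<\pi$, the preceding lemma (Yao's construction) gives $|{\cal F}|=O((c/\phi)^d)$, a constant independent of $n$; hence $Y(V,{\cal F})$ has $O(n)$ edges. This part needs nothing beyond the cited lemma.

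For the stretch bound, fix $t=\frac{1}{1-2\sin(\phi({\cal F})/2)}$, which is well-defined and positive precisely because $\phi({\cal F})<\pi/3$ forces $\sin(\phi({\cal F})/2)<1/2$. I would prove $d_{Y(V,{\cal F})}(u,v)\le t\|uv\|$ for all pairs $u,v$ by induction on the rank of $\|uv\|$ in the sorted list of pairwise distances of $V$. Given $u\neq v$, pick $B\in{\cal F}$ with $v\in\RCone(u,B)$ (possible since ${\cal F}$ is a frame), and let $w$ be the node of $V$ closest to $u$ inside $\RCone(u,B)$, so that $uw$ is an edge of $Y(V,{\cal F})$ and $\|uw\|\le\|uv\|$. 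If $w=v$ we are done; otherwise both $w-u$ and $v-u$ lie in $\Cone(B)$, whose angular span is at most $\phi({\cal F})$, so $\Ang(w-u,v-u)\le\phi({\cal F})$. Comparing $w$ with the point $w'$ on segment $uv$ at distance $\|uw\|$ from $u$ and using $\|ww'\|=2\|uw\|\sin(\Ang(w-u,v-u)/2)$ gives the key estimate
\begin{equation*}
\|wv\|\le\|ww'\|+\|w'v\|=2\|uw\|\sin\!\bigl(\Ang(w-u,v-u)/2\bigr)+\bigl(\|uv\|-\|uw\|\bigr)\le\|uv\|-\bigl(1-2\sin(\phi({\cal F})/2)\bigr)\|uw\|.
\end{equation*}
Since $1-2\sin(\phi({\cal F})/2)>0$ and $\|uw\|>0$, this forces $\|wv\|<\|uv\|$, so the induction hypothesis applies to $(w,v)$, and concatenating the edge $uw$ with a path of length $\le t\|wv\|$ yields $d_{Y(V,{\cal F})}(u,v)\le\|uw\|+t\|wv\|\le t\|uv\|-\|uw\|\bigl(t(1-2\sin(\phi({\cal F})/2))-1\bigr)$; the choice of $t$ makes the last bracket vanish. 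The base case is the globally closest pair, which is automatically an edge.

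There is no genuine obstacle here — the statement is standard, as the paper's remark ``obvious'' signals — but two points should be made carefully. First, the well-foundedness of the induction: one must verify $\|wv\|<\|uv\|$ strictly, which is exactly what the hypothesis $\phi({\cal F})<\pi/3$ (equivalently $1-2\sin(\phi({\cal F})/2)>0$) provides, so that ``rank of $\|uv\|$'' is a legitimate induction parameter and ties among distances cause no difficulty since we only ever recurse on strictly shorter pairs. Second, the trigonometric identity $\|ww'\|=2\|uw\|\sin(\theta/2)$ for the isoceles triangle $uww'$ with apex angle $\theta=\Ang(w-u,v-u)$ at $u$, together with the fact that $w'$ lies between $u$ and $v$ because $\|uw\|\le\|uv\|$, is the only geometric input and should be stated explicitly.
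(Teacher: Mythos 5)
Your proof is correct; the paper itself gives no proof of Lemma~\ref{lemm:yao} (it is merely declared ``obvious''), and your argument is the standard one for the Yao graph: induction on the rank of the pairwise distance, using the estimate $\|wv\|\le\|uv\|-(1-2\sin(\phi/2))\|uw\|$ for the closest node $w$ in the cone containing $v$, which is well-founded precisely because $\phi<\pi/3$ makes the coefficient positive. One small point worth noting explicitly (and which your write-up tacitly uses) is that for two \emph{arbitrary} vectors $w-u$ and $v-u$ lying in $\Cone(B)$, the angle between them is bounded by $\Ang(B)$, not just the angle between basis vectors; this follows from the elementary fact that if all pairs of unit basis vectors have dot product $\ge\cos\phi$, then so do all pairs of nonnegative combinations, since $x\cdot y\ge\cos\phi\,(\sum\lambda_i)(\sum\mu_j)\ge\cos\phi\,\|x\|\|y\|$.
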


It is easy to construct an example of points
(for example, $n/2$ points placed evenly on a side of a unit square
 and another $n/2$ points placed evenly on the opposite side) such that
 the weight of  $Y(V, {\cal F})$ is
 $\Theta(n) \weight(\EMST)$.
The maximum node degree in $Y(V, {\cal F})$ could also be as large as
 $\Theta(n)$, \eg, when $n-1$ points evenly placed in a circle and $1$
 node at the circle center.

\paragraph{General-Cone-Direction Property:}
We now define the  {\em General-Cone-Direction Property} for a pair
 of boxes $b$ and $b'$.
Here we require that the pair of boxes $b$ and $b'$ have similar sizes,
 \ie, $\lsize \cdot \size(b) \le \size(b') \le \usize \cdot \size(b)$
 for fixed constants $\lsize$, $\usize=1/\lsize \ge 1$.
In addition, the distance $\dist(b,b')$ between the boxes $b$ and $b'$
 is at least $\ldist \max(\size(b), \size(b'))$ for a constant
 $\ldist >1$, where $\size(b)$ is the size of a box $b$.
We also assumed that the angular span of each cone in $\cal F$ is at most
 a constant $\alpha$ (depending on the spanning ratio $t$).

\begin{definition}[General-Cone-Direction Property]
Given a pair of boxes $b$ and $b'$,
 a general-cone-direction of $b'$ with respect to the base box $b$,
 denoted as ${\cal B}(b,b')$,
 is a set of cones ${\cal B}\subset {\cal F}$ such that, for every
 point $x \in b$,
 $\bigcup_{B_i \in {\cal B}}\RCone(x, B_i)$ properly contains the box
 $b'$.
Similarly, we can define the general-cone-direction ${\cal B}'(b,b')$ of $b$
 with  respect to $b'$.
Also, define $\theta$ as the maximum  angular span of ${\cal B}$ and
 ${\cal B}'$,
i.e. $\theta = \max \{ \Phi({\cal B}) , \Phi({\cal B}') \}$.
\end{definition}

A vector $uv$ is said to be  \emph{in the general-cone-direction} of
 $b'$ with respect to the box $b$
 if $uv$ is in the direction of ${\cal B}(b,b')$.

\begin{lemma}
\label{lemma:general-cone-direction-angle}
For any  pair of  bounded-separated boxes $b$ and $b'$ in a
Bounded-Separated decomposition,
the maximum angular span of ${\cal B}(b,b')$ and ${\cal B}'(b,b')$
is  at most  $\theta =  \frac{4 \sqrt  d}{\ldist} +
3 \alpha$.
\end{lemma}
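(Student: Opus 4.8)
The plan is to bound the angular span of the general-cone-direction set $\mathcal{B}(b,b')$ by estimating the angle, as seen from any point $x \in b$, that the box $b'$ subtends, and then adding the ``overhead'' coming from the fact that $\mathcal{B}$ must be a union of full cones from the frame $\mathcal{F}$ rather than an exact angular wedge. First I would fix an arbitrary point $x \in b$ and let $c'$ be the center (or any fixed reference point) of $b'$. For any point $y \in b'$, the vectors $x y$ and $x c'$ differ by at most the angle subtended at $x$ by a ball of radius $\mathrm{diam}(b')/2 \le \sqrt{d}\,\size(b')$ centered at $c'$; since $\|x c'\| \ge \dist(b,b') \ge \ldist \max(\size(b),\size(b'))$, elementary trigonometry gives $\Ang(xy, xc') \le \arcsin\!\big(\sqrt{d}\,\size(b')/\dist(b,b')\big) \le \sqrt{d}\,\size(b')/(\ldist \size(b'))$, hence the angular diameter of $b'$ as seen from $x$ is at most $2\sqrt{d}/\ldist$ (using $\sin\psi \le \psi$ and $\arcsin\psi \le \tfrac{\pi}{2}\sin\psi$ appropriately, or simply that the relevant angles are small by the choice $\ldist \ge \tfrac{2\sqrt{d}t}{t-1}$). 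One should note the direction $x c'$ may itself vary a little as $x$ ranges over $b$, but $\size(b) \le \usize \size(b') \le 2\size(b')$ while $\dist(b,b') \ge \ldist\size(b')$, so this contributes another term of the same order; I would absorb all of these into the constant in front of $\sqrt{d}/\ldist$, arriving at a wedge of angular span at most $\tfrac{4\sqrt{d}}{\ldist}$ that contains all directions $xy$, $x\in b$, $y\in b'$.

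Next I would account for the discretization by the frame $\mathcal{F}$. By definition $\mathcal{B}(b,b')$ is the set of cones $B_i \in \mathcal{F}$ whose union (over apexes $x\in b$) properly contains $b'$; equivalently, $\mathcal{B}$ is the set of cones that intersect the direction-wedge computed above. Since each cone of $\mathcal{F}$ has angular span at most $\alpha$, enlarging a wedge of span $\le \tfrac{4\sqrt{d}}{\ldist}$ to the union of all cones meeting it increases the span by at most $2\alpha$ on the two sides, giving total span at most $\tfrac{4\sqrt{d}}{\ldist} + 2\alpha$; rounding up the estimate (and allowing one extra $\alpha$ for the slack in ``properly contains'' versus ``contains'') yields the claimed bound $\theta \le \tfrac{4\sqrt{d}}{\ldist} + 3\alpha$. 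The argument for $\mathcal{B}'(b,b')$ is symmetric — swap the roles of $b$ and $b'$, using $\size(b) \le \usize\size(b')$ and $\dist(b,b') \ge \ldist\size(b)$ in the same way — and taking the maximum of the two bounds gives the stated $\theta$.

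The main obstacle I anticipate is bookkeeping the several small contributions carefully enough to land exactly on the constants $\tfrac{4\sqrt{d}}{\ldist}$ and $3\alpha$: there is the angular diameter of $b'$ from a single apex, the variation of the ``line of sight'' direction as the apex moves across $b$, and the frame-discretization overhead, and one must check that the almost-equal-size property ($\lsize\size(b)\le\size(b')\le\usize\size(b)$ with $\usize = 1/\lsize = 2$) together with the bounded-separation lower bound $\dist(b,b')\ge\ldist\max(\size(b),\size(b'))$ makes each of the first two terms at most $\tfrac{2\sqrt{d}}{\ldist}$ so their sum is $\tfrac{4\sqrt{d}}{\ldist}$. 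Everything else is routine: the geometry is just the statement that a box of bounded aspect ratio and size $s$, viewed from distance $\gtrsim \ldist s$, subtends a small angle, and that a frame with cone-span $\alpha$ approximates any wedge up to additive $O(\alpha)$.
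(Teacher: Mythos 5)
Your proposal is correct and follows essentially the same geometric reasoning as the paper's (terse) proof: bound the angular spread of the directions $xy$ with $x\in b$, $y\in b'$ using the bounded-separation lower bound on $\dist(b,b')$ and the almost-equal-size property to get an angular ``wedge'' of span $O(\sqrt{d}/\ldist)$, then pay a constant multiple of $\alpha$ for having to cover that wedge by full cones of the frame $\mathcal{F}$. Your bookkeeping ($2\sqrt{d}/\ldist$ for the subtended angle plus $2\sqrt{d}/\ldist$ for apex variation plus $3\alpha$ for discretization) is organized slightly differently from the paper's (which sets $\theta_1 \le \frac{2\sqrt{d}}{\ldist} + \alpha$ and concludes $\theta \le 2\theta_1 + \alpha$), but it lands on the same final bound and is, if anything, mildly conservative in a way that only helps.
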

\begin{proof}
Figure \ref{fig:small-angle} illustrates our proof that follows.
\begin{figure} [htpb]
\begin{center}
\scalebox{0.5}{\input{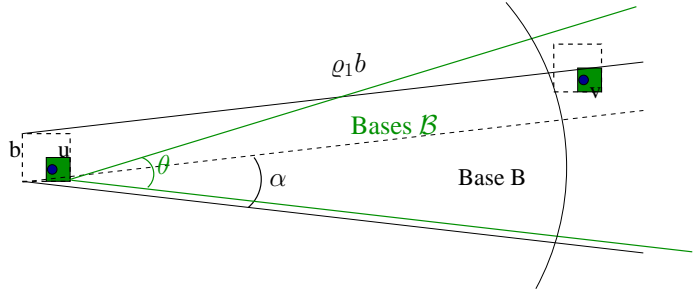}}
\end{center}
\caption{An illustration of the proof that we can choose $\alpha$ and
  $\ldist$  such that $\theta$ is small.}
\label{fig:small-angle}
\end{figure}

A simple computation shows that the general-cone-direction
 property is satisfied
 if we select a minimal collection $\cal B$ of bases from
 $\cal F$ such that all vectors in  $B_i \in {\cal B}$ are
 within angle $\theta_1$ from vectors in
 $B$ where
\begin{equation}
\label{cond:parameter-cover-all}
\theta_1 \le  \frac{2 \sqrt  d}{\ldist} + \alpha
\end{equation}
Then $\theta \le 2 \theta_1 +  \alpha \le \frac{4 \sqrt  d}{\ldist} +
3 \alpha$.
\end{proof}

Clearly, there is only at most a constant number of cones in $\cal B$
 and we can find $\cal B$ in constant time.

\subsection{Method for Low Weight $t$-Spanner Using Split Trees}

Given a set $V$ of $n$ nodes in $\dspace$ and any real number $t>1$,
 we now describe our  method to construct a structure $G = (V,E)$
 in time $O(n \log n)$
 such that $G$ is (1) $t$-spanner,
(2) each node in $V$ has a degree $O(1)$, and (3) the total edge
length $\weight(G)$ is $O(\weight(\MST))$, where $\MST$ is the
Euclidean minimum spanning tree over $V$.

The basic idea of our method is as follows.
We first construct the compressed split-tree  $CT(V)$ and associate
  geographic information with $CT(V)$. This includes
 enclosing-boxes,  the tight-virtual boxes, and the floating-virtual boxes.
By Theorem~\ref{theo:potential-edge-boxes}
 we can construct a bounded-separated pair decomposition (BSPD) in
 linear time using $CT(V)$.
For  each box $b$ in $CT(V)$, we will select one node as its
 \emph{representative} node, denoted as $\Rep(b)$,
  used as a gateway node
 to connect nodes inside this box to nodes at some other boxes in
 $\Nbr(b)$.
To ensure that a node is used as a representative node by at most a
 constant number of boxes (thus ensuring $O(1)$ degrees for these
 nodes), we apply the following strategy in
 selecting the representative nodes of  boxes.
Each leaf vertex will have at least one  node from $V$
 inside.
There are at most $n-1$ internal vertices in the compressed split-tree,
 and $2n-1$ vertices in $CT(V)$; thus we need at most $ 2 n$ representative
 nodes.
We will assign $2$ credits to  each node in $V$.
For each leaf box $b$ of $CT(V)$, we choose a node $\Rep(b)$ inside $b$
 and charge $1$ credit to the chosen node.
Using a  bottom-up approach, for each internal vertex $b$, we will
 select a node $\Rep(b)$  from nodes contained inside  $b$
 that  has a non-zero credit.
Since each internal vertex has at least $2$ children vertices, such
 a representative node can always be found.
Thus, we have the following lemma.

\begin{lemma}
\label{lemm:representative-times}
Each node $v_i$ in $V$ is used at most $2$ times as a
 representative node in $CT(V)$.
\end{lemma}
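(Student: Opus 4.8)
The plan is a potential/credit argument exactly along the lines sketched just above the statement. Each node of $V$ begins with $2$ credits, and every time a node is selected as the representative $\Rep(b)$ of a box $b$ I charge it exactly one credit. Granting for the moment that the selection procedure never stalls, the bound is immediate: the representative of a leaf box is the single node it contains, which is processed before any of its ancestors and therefore still carries both of its credits; and the representative of an internal vertex is, by definition of the procedure, drawn only from nodes with a remaining positive credit. Hence a node can be charged at most twice, which is the assertion of the lemma.

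The step that actually needs checking is that for every internal vertex $b$ there is, at the moment $b$ is processed in the bottom-up order, some node of $V$ inside $b$ with a strictly positive credit. Here I would use the structure of $CT(V)$: every internal vertex has exactly two non-empty children and every leaf box contains exactly one node, so the subtree rooted at $b$ is a full binary tree whose $|b|$ leaves are in bijection with the $|b|$ nodes of $V$ contained in $b$; consequently this subtree has exactly $|b|-1$ internal vertices. When $b$ is about to be processed, representatives have already been assigned to all $|b|$ leaf boxes below $b$ and to the $|b|-2$ internal vertices strictly below $b$, a total of $2|b|-2$ charges, every one of which was made to a node lying in a descendant box of $b$ and hence inside $b$. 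Since the $|b|$ nodes in $b$ hold $2|b|$ credits in all, at least two credits remain among them, so at least one such node still has a positive credit and can serve as $\Rep(b)$.

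Putting the two parts together gives the lemma: the procedure is well-defined, and the $2$-credit budget forces each node to be used at most twice. The only delicate point is the feasibility count of the second paragraph, and it rests entirely on $CT(V)$ being a full binary tree — exactly two children at each internal node and exactly one node per leaf box — so that the $2|b|-2$ earlier charges, all drawn from the $2|b|$-credit pool attached to the nodes of $b$, cannot exhaust it.
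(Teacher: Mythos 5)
Your proof is correct and follows the same credit/charging argument the paper sketches just before the lemma; you simply spell out the feasibility count (that the $2|b|-2$ charges made below $b$ cannot exhaust the $2|b|$ credits held by the nodes in $b$), which the paper leaves implicit with the remark that each internal vertex has two children.
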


Our algorithm works as follows.
Given a pair of potential-edge boxes $b_1$ and $b_1'$ (defined by a
 pair of bounded-separated floating-virtual boxes $b_2$ and $b_2'$),
 and their representative nodes $u$ and $v$ respectively,
 we  add an edge $uv$ if:
\begin{compactenum}
\item
 there is no  edge $xy$ already added,
 where $x$ is inside $b_2$ (it is possible that $y \not \in b_2'$),
 $xy$ crossing the boundary of $b_2$ (thus
 $y$ is not inside $b_2$),
 and $xy$ is in the general-cone-direction
  of the basis ${\cal B}(b_2, b_2')$.
Let $B$ be the basis such that the representative node $v$
 is contained inside the region $\RCone(b_2, B)$.
Such an edge $xy$ is called \emph{crossing edge} for box $b_2$ in the
 direction of $B$, and

\item
 there is no crossing edge $zw$ already  added,
 where $z$ is from $b_2'$, $zw$ crosses the boundary of $b_2'$, and $zw$ is
 in the general-cone-direction of the basis ${\cal B}'(b_2', b_2)$.
\end{compactenum}


 Algorithm \ref{alg:kfts-k1} presents our method
 for constructing a $t$-spanner in $\dspace$ with low-weight, and
 bounded degree property.
In Algorithm \ref{alg:kfts-k1},
 for each enclosing-box $b$,  each basis $B_i \in {\cal F}$, and each
 dimension,
 we store an edge $xy$
 to array $CrossingEdge(b,B_i,h)$ (if there is any)
 such that (1) node $x$ is inside the enclosing-box $b$,  (2) node $y$ is in the
 cone $\RCone(x,B_i)$, and (3) $y$ is the node that is \emph{furthest} from the
 box $b$ in the dimension $h$
 if there are multiple edges satisfying the first two conditions.
This  will ensure the following lemma:
\begin{lemma}
For every direction specified by the basis $B_i$,
 there exists an edge $w_1 w_2$ with $w_1 \in b$ and
 $w_1w_2$ crossing a floating-virtual  box $b_2$ (at some ancestor of
 $b$) in the direction $B_i$
 if and only if there is an edge $xy \in CrossingEdge(b,B_i,h)$
 for a dimension $h$ and $xy$ crosses the virtual box $b_2$ (\ie $y
 \not \in b_2$).
\end{lemma}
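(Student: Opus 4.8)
The statement is an equivalence between the existence of \emph{any} already-added crossing edge out of $b$ that pierces an ancestor's floating-virtual box $b_2$ along cone-direction $B_i$, and the existence of such an edge among the $O(d)$ edges the algorithm explicitly records in the slots $CrossingEdge(b,B_i,\cdot)$. My plan is to prove the two implications separately; the ``if'' direction is essentially by definition, while the ``only if'' direction is where the geometry enters.

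\emph{The ``if'' direction.} Suppose $xy\in CrossingEdge(b,B_i,h)$ for some $h$ and $xy$ crosses $b_2$. By the maintenance rule for the array, $x$ lies in $b$ and $y\in\RCone(x,B_i)$, so $xy$ is in the direction of $B_i$; taking $w_1=x$ and $w_2=y$ exhibits the required edge, and nothing more is needed.

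\emph{The ``only if'' direction.} Assume there is an added edge $w_1w_2$ with $w_1\in b$, $w_2\in\RCone(w_1,B_i)$, and $w_1w_2$ crossing the boundary of $b_2=\virtualbox(b')$ for an ancestor $b'$ of $b$. The first step is the nesting $b\subseteq b_2$: since $b'$ is an ancestor of $b$ in $CT(V)$ we have $b\subseteq b'$ from the split-tree construction, and $b'\subseteq\tvbox(b')\subseteq\virtualbox(b')=b_2$ by Definitions~\ref{definition:tight-virtual} and~\ref{definition:floating-virtual-box}. Hence $w_1\in b_2$, and since $w_1w_2$ meets $\partial b_2$ we get $w_2\notin b_2$. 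Because $b_2$ is an axis-parallel (hence convex) box, the segment from the interior point $w_1$ to the exterior point $w_2$ leaves $b_2$ through a face orthogonal to some coordinate axis; fix a dimension $h$ and the side realizing this. A short computation then shows that the vector $w_2-w_1\in\Cone(B_i)$ has $h$-component of the corresponding sign and that the $h$-coordinate of $w_2$ is strictly past the bounding plane of $b_2$ in that direction (a fortiori strictly past the bounding plane of $b\subseteq b_2$). Now look at $CrossingEdge(b,B_i,h)$: it is nonempty, since $w_1w_2$ itself satisfies its defining conditions, so let $x^\star y^\star$ be the stored edge. By construction $x^\star\in b$, $y^\star\in\RCone(x^\star,B_i)$, and $y^\star$ is the head that is furthest from $b$ along coordinate $h$ on that side; since $w_2$ is one of the competing heads, $y^\star$ is at least as far past the bounding plane of $b$ as $w_2$ is, hence at least as far as the bounding plane of $b_2$, so $y^\star\notin b_2$. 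Combined with $x^\star\in b\subseteq b_2$, the recorded edge $x^\star y^\star$ crosses $b_2$ in direction $B_i$, which is exactly what we wanted.

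\emph{Expected obstacle.} The only delicate point is the bookkeeping: one must be sure that ``furthest from $b$ in dimension $h$'' is interpreted with the orientation matching the face of $b_2$ through which the witness edge exits. Concretely, for each $b$ and each cone $B_i$ one retains the extreme head for each of the (at most $2d$) axis-orientations; this is well defined because all heads sit in the single cone $B_i$, whose angular span is at most the constant $\alpha$ fixed for the frame ${\cal F}$, so along each coordinate the heads are concentrated on one side (and where $\Cone(B_i)$ straddles that coordinate, both extremes are kept). Once this is pinned down the argument above goes through verbatim; the remaining ingredients — convexity of the boxes and the nesting of enclosing, tight-virtual, and floating-virtual boxes along a root-to-leaf path — are immediate from their definitions.
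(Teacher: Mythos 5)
The paper states this lemma without proof; it appears right after the $CrossingEdge$ arrays are defined, as the justification for checking only those $O(d)$ slots in the algorithm, so there is no authorial proof to compare against. Your argument is the natural one and is correct: the ``if'' direction is by definition of the array, and the ``only if'' direction uses the nesting $b \subseteq b' \subseteq \tvbox(b') \subseteq \virtualbox(b') = b_2$ for an ancestor $b'$, so the exterior endpoint $w_2$ of the witness edge lies past some face of $b_2$ along a coordinate $h$, whence the stored head for the $(B_i,h)$ slot is at least as far from $b$ along $h$ and therefore also outside $b_2$.

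The ``expected obstacle'' you flag is indeed the one place where the paper's text is imprecise and your argument would otherwise have a hole. As literally written, the paper keeps ``the node that is furthest from the box $b$ in the dimension $h$'', i.e.\ one slot per $(b, B_i, h)$. If the stored head $y^\star$ lies on the opposite side of $b$ from $w_2$ along $h$, extremality only yields $L_h(b) - y^\star_h \ge w_2^h - R_h(b)$, and this does \emph{not} imply $y^\star_h < L_h(b_2)$ when $b$ is positioned closer to the lower face of $b_2$ than to the upper one; the stored head could then still lie inside $b_2$ even though the unrecorded witness $w_1w_2$ crosses it. Your fix --- retain the extreme head per \emph{oriented} axis direction, at most $2d$ slots per cone, still $O(d)$ --- is exactly what makes the extremality step valid and costs nothing asymptotically. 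Since the paper never spells out the proof, this is best read as the intended interpretation, and stating it explicitly as you do is the right move.
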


\begin{algorithm}[tb]
\caption{Constructing a $t$-spanner with low-weight}
\label{alg:kfts-k1}
\begin{algorithmic}[1]
\STATE
Define a frame $\cal F$, with a constant $c$ number of bases
 $B_1$, $B_2$, $\cdots$, $B_{c}$ such that the angular span of any
 base is at most a small angle $\alpha$.
The actual value of the angle $\alpha$ will be given later in proofs.

\STATE
Build the compressed split-tree $CT(V)$ and a BSPD.
With each enclosing-box $b$ in the split-tree $CT(V)$,
 we  associate a representative node $\Rep(b)$.
For each enclosing-box $b$, we also construct $\Nbr_{\ge}(b)$.

For each enclosing-box $b$,  each basis $B_i \in {\cal F}$, and each
 dimension $h$,
 we define an array $CrossingEdge(b,B_i,h)$.


\STATE

Sort the edge-distances (see Definition \ref{def:edge-distance})
 between all pairs of potential-edge boxes in increasing order.
(There are a total  of $O(n)$ pairs of potential-edge
boxes, thus, the sorting can be done in time $O(n \log n)$.)
Ties are broken by the actual Euclidean distance between the
 representative nodes of the boxes.

\FOR{($r=1$ to $\sum_{b} \Nbr_{\ge}(b)$)}
\STATE
Select the pair of potential-edge  boxes $b_1$ and $b_1'$
 with the $r$th smallest edge-distance.
Let $b_2$ be the floating-virtual  box containing $b_1$ and $b_2'$ be the
 floating-virtual  box containing $b_1'$ such that $b_2$ and $b_2'$ are a
 pair of bounded-separated floating-virtual boxes.
Let $u$ (and $v$ resp.) be the representative node of box $b'_1$ (and
 $b_1$ resp.).

Let ${\cal B}, {\cal B}' \subset {\cal F}$  be the collection of bases
 satisfying the {\em General-Cone-Direction Property}
w.r.t $b_1$ and $b_1'$, respectively.
We then add an edge $uv$,  only if
\begin{compactenum}
 \item  $\forall B_i \in {\cal
 B}$, $\forall h$, such that $x$ is inside the box $b_1$ and
  $y$ is outside of the floating-virtual  box $b_2$,
 there is no ``crossing''
 edge $xy$ in $CrossingEdge(b_1,B_i,h)$; and

 \item
 $\forall B'_i \in {\cal B}'$, $\forall h$,
 such that $z$ is inside the box $b_1'$
 and $w$ is outside of the floating-virtual  box $b_2'$,
 there is no ``crossing'' edge $zw$ in $CrossingEdge(b'_1,B_i,h)$.
\end{compactenum}

After  adding edge $uv$, $\forall B_i \in {\cal B}$, $\forall B'_i \in
 {\cal B}''$ and   $\forall h \le d$,
 we update the array
 $CrossingEdge(b_1,B_i,h)$ and the array
 $CrossingEdge(b_1',B'_i,h)$ correspondingly (adding an edge $uv$ to
 the array if the edge $uv$ is added to the spanner).

\ENDFOR

\STATE Let $G=(V,E)$ be the graph constructed.
\end{algorithmic}
\end{algorithm}


\section{Properties: Low-Weight, Spanner, and Low-Degree} \label{sec:proofs-k1}

We next show that the constructed structure  $G$ by Algorithm
 \ref{alg:kfts-k1},  is a
 $t$-spanner, has a bounded degree, and has low-weight
 (by choosing the angular span of the frame $\cal F$, $\theta$, and
 the parameters $\ldist$, $\udist$, 
 $\lsize$, and $\usize$ in bounded-separateness carefully).

\subsection{Degree Property}

\begin{theorem}
\label{them:bound-degree-1}
Each node in the constructed graph $G$ by Algorithm \ref{alg:kfts-k1},
 $v \in V$, has degree
 $ \le |{\cal F}|=O((\frac{1}{\alpha})^d)$
 where $\alpha$ is the angular span of the frame $\cal F$.
\end{theorem}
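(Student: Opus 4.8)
The plan is to bound the degree of an arbitrary node $v$ by a case analysis based on the two roles $v$ can play when an edge incident to $v$ is created in Algorithm~\ref{alg:kfts-k1}: either $v$ is a representative of the smaller-size box in a potential-edge pair, or of the larger one. The key observation is that an edge $uv$ is added only when $u$ and $v$ are representatives of a pair of potential-edge boxes $b_1'$ and $b_1$, and crucially only if there is \emph{no} previously-added crossing edge for the relevant floating-virtual box in the relevant basis direction $B_i$. So once an edge leaves $v$'s box $b_1$ in the general-cone-direction corresponding to some basis $B \in {\cal F}$, no later edge can be added from $b_1$ in that same basis direction. First I would make precise: for each node $v$ and each basis $B \in {\cal F}$, at most one edge incident to $v$ is created with $v$ playing the ``inside-$b_1$'' role via a crossing edge in direction $B$.

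Next I would combine this with Lemma~\ref{lemm:representative-times}, which states that $v$ serves as a representative node $\Rep(b)$ for at most two boxes $b$ in $CT(V)$. For each such box $b$, and for each basis $B_i \in {\cal F}$ and each dimension $h$, the array $CrossingEdge(b,B_i,h)$ controls whether a new crossing edge may be added; once it is populated with an edge in that direction crossing the relevant floating-virtual box, the guard in Step~5 of Algorithm~\ref{alg:kfts-k1} forbids further additions. So the number of edges incident to $v$ is bounded by the number of (box, basis) pairs $(b, B_i)$ with $b$ a box for which $v=\Rep(b)$ — and since the general-cone-direction ${\cal B}(b_1,b_1')$ is a subset of ${\cal F}$, every edge added is ``charged'' to at least one basis $B_i \in {\cal F}$ for which $v$'s box emits a crossing edge in that direction. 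Counting: at most $|{\cal F}|$ directions per box, but actually the cleaner statement is that each edge incident to $v$ occupies a distinct basis-direction slot among the $O((1/\alpha)^d)$ bases of $\cal F$, so $\deg(v) \le |{\cal F}| = O((1/\alpha)^d)$.

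I would then handle the subtlety that $v$ can be the representative of the \emph{outer} box ($b_1$, the larger one) as well as the inner box ($b_1'$). The algorithm's second guard condition symmetrically prevents re-adding an edge from $b_1'$ in a direction already used. So the same argument applies to both endpoints: an edge incident to $v$ is blocked at $v$'s end whenever $v$'s box has already emitted a crossing edge in the appropriate cone. Since the monotone processing order (increasing edge-distance, Step~3) guarantees that the ``first'' edge in each cone-direction is the one retained, we get that across all boxes $b$ with $v = \Rep(b)$ and all bases $B_i$, at most one edge per $(b, B_i)$ slot is incident to $v$. Summing over the at most two boxes and $|{\cal F}|$ bases gives degree $\le 2|{\cal F}|$, which I would tighten (or simply absorb the factor $2$ into the $O$-notation) to the stated $|{\cal F}| = O((1/\alpha)^d)$; a sharper accounting notes that the two boxes for which $v$ is a representative are nested along a root path and the crossing-edge structure at the ancestor box subsumes the descendant, yielding exactly $|{\cal F}|$.

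The main obstacle I anticipate is making the ``nesting subsumes'' step rigorous: one must argue carefully that when $v$ represents both a box $b$ and an ancestor box $b''$, an edge crossing a floating-virtual box at an ancestor in direction $B_i$ properly blocks subsequent edges, so that the two representative-roles do not independently contribute $|{\cal F}|$ edges each. This requires unwinding the definition of floating-virtual box (Definition~\ref{definition:floating-virtual-box}), in particular property~\ref{fv-in-p} that $\virtualbox(b)$ sits inside $\tvbox(\father(b))_h$, together with the precise semantics of $CrossingEdge(b,B_i,h)$ storing the edge whose far endpoint is furthest in dimension $h$. If that turns out to be delicate, the safe fallback is to accept the factor $2$ and conclude $\deg(v) = O(|{\cal F}|) = O((1/\alpha)^d)$, which already suffices for all downstream uses.
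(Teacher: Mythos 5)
Your overall approach matches the paper's: appeal to Lemma~\ref{lemm:representative-times} (at most two representative roles per node) and then bound the number of edges per cone direction per representative role using the crossing-edge guard, absorbing the factor of~$2$ into $O((1/\alpha)^d)$.

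However, the central step of your argument has a gap. You assert that ``once the array $CrossingEdge(b_1,B_i,h)$ is populated with an edge in that direction crossing the relevant floating-virtual box, the guard in Step~5 forbids further additions.'' This is too strong. The guard does not test whether the array is nonempty; it tests whether the stored edge $xy$ crosses the \emph{current} floating-virtual box $b_2=\virtualbox(b_1)$, and by Definition~\ref{definition:floating-virtual-box} the floating-virtual box of $b_1$ is \emph{not} a fixed object --- it varies with the partner box in the pairing. Concretely, suppose edge $uv$ is added for the pair $(b_1,b_1')$ with floating-virtual boxes $(b_2,b_2')$, so $uv$ is recorded in $CrossingEdge(b_1,B,h)$. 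When a later pair $(b_1,b')$ with the same direction $B$ is processed, it may have a larger floating-virtual box $b_3 \supseteq b_2$ with $v \in b_3$; then $uv$ does not cross $b_3$, the guard does not fire, and nothing in your argument prevents the second edge from being added. The paper closes exactly this hole with a two-case analysis: either $b_3$ fails to contain $b_1'$, in which case $uv$ does cross $b_3$ and the guard fires (this is the only case your proposal covers), or $b_3$ contains both $b_1$ and $b_1'$, in which case --- since $b_3 \subseteq \father(b_1)$ --- the parent box of $b_1$ and $b_1'$ lies inside $b_3$, contradicting the third clause of the potential-edge-boxes definition. That second case is the missing ingredient, and it cannot be waved away: it is precisely why the potential-edge definition forbids $(\father(b),b')$, $(b,\father(b'))$, $(\father(b),\father(b'))$ from also being potential-edge pairs.

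As a secondary remark: the ``main obstacle'' you flag --- that the two representative roles of $v$ might each contribute $|{\cal F}|$ edges --- is not where the real difficulty lies, and both you and the paper simply absorb that factor of~$2$ into the $O$-bound. The genuine subtlety sits within a \emph{single} representative role, in the variability of the floating-virtual box of $b_1$ across pairings, which is what the paper's case analysis addresses and your proposal does not.
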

\begin{proof}
Since each node will serve as a representative for at most two
 different enclosing-boxes,
 it suffices to show that for each enclosing-box $b_1$, we will add at most
 a constant number of  edges for the representative node of this box.
We will show that we add at most $1$ edge to a node $u$ in any cone
 direction when $u$ is a representative node of a box $b_1$.
Assume that we have already added an edge $uv$ in a direction $B$,
 where $u$ is the representative
 node in box $b_1$ and $v$ is the representative node of $b_1'$ such
 that $b_1$ and $b_1'$ is a pair of potential-edge boxes defined by a
 pair of bounded-separated boxes $b_2$ (containing $b_1$) and $b_2'$
 (containing $b_1'$).
We show that we cannot add another edge $uw$ in the same direction $B$
 later.
Assume that we did add another edge $uw$ later, because of the
 existence of a pair of
 bounded-separated virtual boxes $b_3$ and $b_3'$ that defines a pair of
 potential-edge boxes $b_1$ and $b'$ (for $b' \not = b_1'$).
Then there are only two complementary cases:
\begin{compactenum}
\item
$b_3$  contains boxes $b_1$ and $b_1'$ inside.
This violates the condition (condition 3) of the potential-edge
definition: $b_3$
 will contain the parent box (which is an enclosing-box) of $b_1$ and $b_1'$
 inside.
Notice that since a virtual box $b_3$ contains both $b_1$ and $b_1'$
 inside, our compressed split-tree construction shows that the parent box
 of the boxes $b_1$ and $b_1'$ is inside $b_3$ (may be same as $b_3$).
\item
$b_3$ does not contain $b_1'$ inside.
Then edge $uv$ will be a crossing edge that crosses the
 boundary of virtual box $b_3$. Thus edge $uw$ will not be added.
\end{compactenum}
This finishes the proof.
\end{proof}

Note that we later will show that the angular span $\alpha$ of $\cal
F$ depends on the spanning ratio $t>1$ that is  required. 

\subsection{The Spanner Property}

We now prove that the final structure $G$ is a $t$-spanner, where $t >1$
 is a given constant, if we
 choose $\theta$, $\ldist$ and $\udist$ carefully.

\begin{theorem}
\label{theo:spanner-1}
The final structure $G$ constructed by Algorithm \ref{alg:kfts-k1}
 is a $t$-spanner for a given constant $t>1$ if we carefully
 choose $\alpha=\Ang( {\cal F} )$, $\ldist$ and $\udist$ according to
 conditions 
 illustrated in (\ref{eqn:all-conditions-spanner}).
\begin{equation}
\label{eqn:all-conditions-spanner}
\begin{cases}
\ldist   >  t \sqrt d \\
2    {\sqrt d}  \frac{t+1}{t-1} \le  \ldist\\
\frac{2t \sqrt  d}{\ldist}+ (1+ \frac{2t \sqrt  d}{\ldist})
(1+\frac{\sqrt d}{\ldist}) / (1-2\sin (\theta/2) - 2\frac{{\sqrt d}}{\ldist})
  \le t \\
\theta \le  \frac{4 \sqrt  d}{\ldist} + 3 \alpha\\
\udist \ge 2 \ldist + 6 {\sqrt d}
\end{cases}
\end{equation}
\end{theorem}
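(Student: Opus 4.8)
The plan is to show, for every pair of nodes $u, v \in V$, that $d_G(u,v) \le t\|uv\|$ by induction on the rank of $\|uv\|$ among all pairwise distances (equivalently, on the edge-distance/level at which the relevant potential-edge box pair is processed). The base case is when $u$ and $v$ are representative nodes of a pair of potential-edge boxes and the edge $uv$ was actually inserted by Algorithm \ref{alg:kfts-k1}; then $d_G(u,v) = \|uv\| \le t\|uv\|$. For the inductive step I would locate, for the given $u,v$, a pair of bounded-separated floating-virtual boxes $b_2 \ni u$ and $b_2' \ni v$ (guaranteed by the BSPD of Theorem \ref{theo:potential-edge-boxes}, applied via the potential-edge-box pair covering $\{u,v\}$ in the realization). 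Let $b_1, b_1'$ be the underlying enclosing-boxes and $w, w'$ their representative nodes. There are two cases: either the edge $ww'$ was added, or it was blocked because some crossing edge already existed for $b_2$ in the cone $\RCone(b_2, B)$ (or symmetrically for $b_2'$).

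In the first case I would bound $d_G(u,v) \le d_G(u,w) + \|ww'\| + d_G(w',v)$. Here $d_G(u,w)$ is handled recursively: $u,w$ both lie in $b_1$ (hence within distance $\sqrt d\,\size(b_1) \le \sqrt d\,\size(b_2)$ of each other), which by the bounded-separation property is at most $\frac{\sqrt d}{\ldist}\|ww'\|$, and crucially this distance is strictly smaller than $\|uv\|$ provided $\ldist > t\sqrt d$ (the first condition), so induction applies; similarly for $d_G(w',v)$. Also $\|ww'\| \le \|uv\| + \sqrt d(\size(b_2)+\size(b_2')) \le (1 + \frac{2\sqrt d}{\ldist})\|uv\|$. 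Combining, $d_G(u,v) \le (1 + \frac{2t\sqrt d}{\ldist})\|uv\| \cdot \bigl(1 + \tfrac{2}{t}\cdot\tfrac{\sqrt d}{\ldist}\cdot\tfrac{\|ww'\|}{\|uv\|}\,\cdot(\dots)\bigr)$ — i.e. the recursion multiplies by a factor governed by the middle inequality in (\ref{eqn:all-conditions-spanner}); that inequality is exactly what closes the induction to give the clean bound $t\|uv\|$.

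In the second (blocking) case, there is an already-added crossing edge $xy$ with $x \in b_2$, $y \notin b_2$, and $xy$ in the cone $\RCone(b_2, B)$ where $v \in \RCone(b_2,B)$; because the General-Cone-Direction property (Lemma \ref{lemma:general-cone-direction-angle}) makes the angular span $\theta$ of this family of cones small, $y$ and $v$ subtend a small angle at $x$ and $y$ is "beyond" $b_2$ toward $v$, so a standard Yao-type argument (as in Lemma \ref{lemm:yao}) gives $\|yv\| \le \|xv\| - (1 - 2\sin(\theta/2))\|xy\|$ up to correction terms of order $\frac{\sqrt d}{\ldist}$ coming from the box sizes; meanwhile $d_G(x,y) = \|xy\|$ (it is a real edge) or is bounded recursively, and $d_G(u,x) \le $ (recursive, since $u,x \in b_2$ gives a much shorter distance). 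Iterating the "move from $v$'s viewpoint closer along the cone" step a bounded number of times — each step strictly decreasing the remaining distance by a factor $1 - 2\sin(\theta/2) - 2\frac{\sqrt d}{\ldist}$, which is positive by the third condition — reaches a pair whose edge is present, and summing the geometric series yields the factor $1/(1-2\sin(\theta/2)-2\frac{\sqrt d}{\ldist})$ appearing in (\ref{eqn:all-conditions-spanner}). The conditions $\theta \le \frac{4\sqrt d}{\ldist}+3\alpha$ and $\ldist \ge 2\sqrt d\frac{t+1}{t-1}$ then make the composite stretch at most $t$.

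The main obstacle is the second case: carefully tracking that the crossing-edge bookkeeping ($CrossingEdge(b,B_i,h)$ stores the node furthest from $b$ in dimension $h$) really does guarantee that whenever $ww'$ is blocked there is a crossing edge pointing "past $b_2$ in the direction of $v$," and then making the Yao-style inductive descent terminate with the correction terms from variable box sizes controlled uniformly. Getting the error terms of order $\sqrt d/\ldist$ to combine correctly with $\sin(\theta/2)$ — and verifying that the distances invoked recursively are genuinely smaller so the induction is well-founded — is where all the care in choosing $\alpha, \theta, \ldist, \udist$ to satisfy (\ref{eqn:all-conditions-spanner}) is spent; the other cases are comparatively routine.
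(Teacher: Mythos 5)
Your proposal follows the paper's own proof essentially step for step: the same induction on the rank of the pairwise distance, the same two-case split (representative edge added vs.\ blocked by a crossing edge), the same use of the general-cone-direction angular span $\theta$ to drive an iterated ``pivot closer to $v$'' argument, and the same origin for each of the five numerical conditions. The only cosmetic difference is that you describe the final summation in the blocking case as a geometric series, whereas the paper closes it with the telescoping bound $\dist(v,w_1)\ge(1-2\sin(\theta/2)-2\sqrt d/\ldist)\sum_i\dist(w_{2i-1},w_{2i})$; both yield the identical factor $1/(1-2\sin(\theta/2)-2\sqrt d/\ldist)$, so the arguments are equivalent.
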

\begin{proof}
Note that the last condition is to ensure that we can construct a BSPD.
We then prove the theorem by induction on the rank of the Euclidean
 distance between  all pairs of nodes $u$ and $v$ from $V$.
First, for the pair of nodes $u$ and $v$ with the smallest distance,
 edge $uv$ clearly will be added to $G$.
Thus, we have a path in $G$ with length at most $t \dist(u,v)$ to
 connect $u$ and $v$.
Assume that the statement is true for all pairs of nodes with the
  first $r$ smallest pairwise distance.
Consider a pair of nodes $u$ and $v$ with $(r+1)$th smallest distance.

Since we produce a BSPD for the set $V$ of nodes using $CT(V)$,
 in the box tree $CT(V)$, there will be a pair of floating-virtual boxes $b_2$
(containing  $u$) and $b_2'$ (containing $v$) that is a
bounded-separated pair.
Let $b_1$ be the largest  enclosing-box  (from tree $CT(V)$) that is
 contained inside  $b_2$ and contains $u$; and
 $b_1'$ be the largest  enclosing-box  (from tree $CT(V)$) that is
 contained inside  $b_2'$ and contains $v$.
Then the pair of boxes $b_1$ and $b_1'$ is a pair of potential-edge boxes.
Depending on whether we have a crossing edge  $xy$ when processing the
pair of potential-edge-boxes
 $b_1$ and $b_1'$, we have the following two complementary cases.

\textbf{Case 1}:
 We have an edge $xy$ where $x$ is a representative node of box $b_1$
and $y$ is a representative node of box $b_1'$.
In this case, we have $\dist(u,v) \ge \dist(b_1, b_1') \ge \dist(b_2,
b_2')\ge \ldist \max(\size(b_2), \size(b_2'))$.
By choosing
\begin{equation}
\label{eqn:conditions-spanner-1}
\ldist   >  t \sqrt d
\end{equation}
we have $\dist(u,x) \le \size(b_1) \cdot {\sqrt d} \le
\dist(b_2, b'_2)  {\sqrt d} / \ldist =
\boxdist(b_1, b'_1)  {\sqrt d} / \ldist
 \le \dist(u,v)  {\sqrt d} / \ldist < \dist(u,v) /t$.
Then by induction, we have a path connecting $x$ and $u$ with length
 at most  $ t \dist(u,x)$: this is true because
  this path can only use  edges
 with length  smaller than $\dist(u,v)$, and $\dist(u,x) < \dist(u,v)$.
Similarly, we have a path in $G$ connecting $v$ and $y$ with length
 at most  $ t \dist(v,y)$.
Thus, in the final structure $G$, we have a path (with subpath from
$u$ to $x$, subpath from $y$ to $v$, and edge $xy$) connecting $u$ and
 $v$ with length at most
\[
t \dist(u,x) +  t \dist(v,y) + \dist(x,y)
\le  (t+1) \dist(u,x) +  (t+1) \dist(v,y) + \dist(u,v)
 \le ( (2 (t+1)   {\sqrt d} / \ldist )+1)
  \dist(u,v)
\]
This is at most $t \dist(u,v)$ if
\begin{equation}
\label{eqn:conditions-spanner-2}
2    {\sqrt d}  \frac{t+1}{t-1} \le  \ldist
\end{equation}

\begin{figure} [htpb]
\begin{center}
\scalebox{0.4}{\input{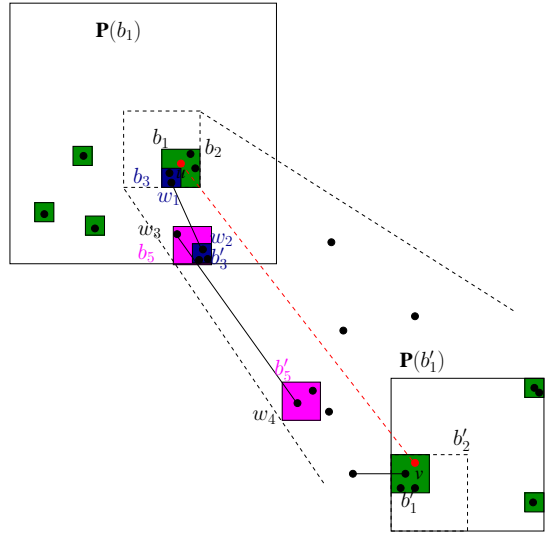}}
\end{center}
\caption{An illustration of the proof that the final structure $G$ is a
 $t$-spanner.
Here for a pair of nodes $u$ and $v$ we will have a path with length
  at most $t \dist(u,v)$, where $u$ and $v$ are
  representative nodes of the potential-edge boxes $b_1$ and $b_1'$.}
\label{fig:spanner-property}
\end{figure}

\textbf{Case 2}:
 We do not have an edge $xy$ where $x$ is a representative node of
  enclosing-box $b_1$
 and $y$ is a representative node of  box $b_1'$.
In this case,   one or both of the following conditions  is true:
\begin{compactenum}

\item
there is a crossing edge $w_1w_2$
 such that $w_1$ is inside $b_1$,  $w_2$
  is outside of floating-virtual box $b_2$ and $w_2$ is in the
  general-cone-direction of $\RCone(w_1, {\cal  B})$, or

\item
there is a crossing edge $w_1w_2$
 such that $w_2$ is inside $b_2$,  $w_1$
  is outside of floating-virtual box $b_1$ and $w_1$ is in the
  general-cone-direction of $\RCone(w_2, {\cal  B}')$.

\end{compactenum}
W.l.o.g., we assume that the first condition is true.
See Figure \ref{fig:spanner-property} for the illustration of the
 proof that follows.

Consider the general-cone-direction  ${\cal B}(b_1,b_1')$ of
 box $b_1'$ with respect to the base box $b_1$.
The set $\cal B$  of cones will be called a \emph{meta-cone}.
The angular span of $\cal B$ is at most a value $\theta$ (from Lemma
 \ref{lemma:general-cone-direction-angle}).
Observe that since the meta-cone $\Cone(w_1, {\cal B})$ will contain the box
 $b_1'$, it will also contain the node  $v$ inside.
Recall that the edge $w_1w_2$ has the same direction as the meta-cone
 $\cal B$,  the meta-cone $\Cone(w_1, {\cal B})$ also contains $w_2$ inside.

Then for the node $w_2$ and node $v$, they must be contained in a pair
 of boxes in BSPD from the definition of BSPD.
Consider the bounded-separated pair
 of floating-virtual boxes (say $s$ and $s'$) containing them respectively.
When the angle $\theta < \pi /3$, we have $\dist(w_2, v) < \dist(w_1,
 v)$.
Together with the fact that $w_2$ is outside of the floating-virtual box $b_2$,
 we can show that the edge-distance (\ie, $\dist(s,s')$)
 of the pair of potential-edge
 boxes containing $w_2$ and $v$ respectively is less than the edge-distance
 $\boxdist(b_1,b_1')$ between $b_1$ and $b_1'$.
In other words, the pair of nodes $w_2$ and $v$ has been processed
 before the pair of nodes $u$ and $v$.
Thus, we either will have a directed edge $w_3w_4$ such that $w_3$ and $w_4$
 are representative nodes of the boxes $s$ and $s'$ respectively;
 or we will have an edge $z_3z_4$ such that $z_3z_4$ has the same
 direction as the meta-cone ${\cal B}(s,s')$, \ie,
  $z_3z_4$ is inside the meta-cone ${\cal  B}(s,s')$.
Observe that the distance between the boxes  $s$ and $s'$ is
 smaller than the distance between the boxes  $b_1$ and $b_1'$.

We can repeat the above process and get a sequence of edges
 $w_1w_2$, $w_3w_4$, $w_5w_6, \cdots, w_{2k-1}w_{2k}$, by renaming the
 nodes and the pairs of potential-edge boxes, and the pairs of
 bounded-separated floating-virtual boxes, with the following properties:
\begin{compactenum}
\item
$w_1$ is inside an enclosing-box $b_1$ and $w_2$ is outside of floating-virtual
  box $b_2$ containing $b_1$ (if it
  is not, we can pick the first one $w_{2i-1}w_{2i}$ such that this
  property is satisfied);
node $w_2$ is inside an enclosing-box, called $b'_1$, which is inside a
  floating-virtual box, called $b_2'$.
The pair of boxes $b_2$, $b_2'$ is a pair of bounded-separated
  floating-virtual boxes.
Observe that here the boxes $b_1$, $b_1'$, $b_2$ and $b'_2$
 may be different from what we called at the beginning of our proof.
\item
In general, for $i  \ge 1$,
 node $w_{2i-1}$ is inside an enclosing-box $b_{2i-1}$ which is
 inside a floating-virtual box $b_{2i}$;
 node $w_{2i}$ is inside an enclosing-box $b'_{2i-1}$ which is
 inside a floating-virtual box $b'_{2i}$.
Here, for $i \ge 1$, the pair of bounded-separated floating-virtual boxes
 $b_{2i}$, $b_{2i}'$ contain the pair of potential-edge boxes
 $b_{2i-1}$, $b_{2i-1}'$, which is
  used to define the edge $w_{2i-1}w_{2i}$,
 \ie, $w_{2i-1}$ (resp. $w_{2i}$) is a representative node of
 the enclosing-box $b_{2i-1}$ (resp. $b'_{2i-1}$).
Notice that here either the box $b_{2i-1}'$ or the box $b_{2i+1}$
 could be the larger one between them, although both contain node
 $w_{2i+1}$.
 We also have that the node $w_{2i+1}$ is inside the enclosing-box $b'_{2i-1}$
 for $i \ge 1$, while
 $w_{2i}$ is outside of the floating-virtual box $b'_{2i}$, for $i \ge 1$.
\item
The angle $\angle v w_{2i-1}w_{2i} \le \theta$ for a value $\theta$ in
Lemma \ref{lemma:general-cone-direction-angle}.
\end{compactenum}

Thus, we have a path
\[
u\leftrightsquigarrow w_1 w_2 \leftrightsquigarrow w_3w_4 \cdots
  w_{2i-1}w_{2i} \leftrightsquigarrow  w_{2i+1}w_{2i+2}
\leftrightsquigarrow \cdots
 \leftrightsquigarrow w_{2k-1}w_{2k} \leftrightsquigarrow v
\]
to connect the pair of nodes $u$ and $v$.
Here $p \leftrightsquigarrow q $ denotes a path constructed
recursively to connect nodes $p$ and $q$.
By induction, we know that the length of path $u\leftrightsquigarrow
w_1$ is at most
 $t \dist(u,w_1) \le t \sqrt{d} \size(b_1) \le  t \sqrt{d}
\frac{\dist(u,v)}{\ldist} $; similarly
 the length of the path $w_{2i}\leftrightsquigarrow  w_{2i+1}$ is at
 most $ \max( t \sqrt{d} \size(b_{2i-1}'), t \sqrt{d}
 \size(b_{2i+1}))$
 since either (1) $w_{2i}$ and $w_{2i+1}$ are inside $b_{2i-1}'$ or
 (2)  $w_{2i}$ and $w_{2i+1}$ are inside $b_{2i+1}$.

Notice that $\max(\size(b_{2i-1}'), \size(b_{2i-1}))
\le \frac{\dist(w_{2i-1}, w_{2i} )}{\ldist} $ from the definition of
 potential-edge boxes.
Additionally, $\max(\size(b_{2i-1}'), \size(b_{2i-1})) \le \usize
\min(\size(b_{2i-1}'), \size(b_{2i-1}))$ since the floating-virtual boxes
 $b_{2i-1}'$ and $b_{2i-1}$ are required to have similar sizes (within
 a factor $\lsize=1/\usize$ of each other).
Then the total length of the path
 $w_1 w_2 \leftrightsquigarrow w_3w_4 \cdots w_{2i-2}
\leftrightsquigarrow  w_{2i-1}w_{2i} \leftrightsquigarrow \cdots
 \leftrightsquigarrow w_{2k-1}w_{2k}$ is at most
\[(\sum_{i=1}^{k} \dist(w_{2i-1}, w_{2i})) \cdot (1+ \frac{2t \sqrt d}{\ldist})
\]
Thus, the length of the path $u\leftrightsquigarrow w_1 w_2
\leftrightsquigarrow w_3w_4 \cdots w_{2i-2} 
\leftrightsquigarrow  w_{2i-1}w_{2i} \leftrightsquigarrow \cdots
 \leftrightsquigarrow w_{2k-1}w_{2k} \leftrightsquigarrow v$
 is at most
\begin{equation}
(\sum_{i=1}^{k} \dist(w_{2i-1}, w_{2i})) \cdot (1+ \frac{2t \sqrt
  d}{\ldist})
+  \frac{2t \sqrt  d}{\ldist} \cdot \dist(u,v)
\end{equation}
We then bound the length $\sum_{i=1}^{k} \dist(w_{2i-1}, w_{2i})$.
From the general-cone-direction property, when $\theta  < \pi /3$,
 it is easy to show that
\[\dist(v, w_{2i-1}) - \dist(v, w_{2i}) \ge (1-2\sin(\theta/2))
 \dist(w_{2i-1}, w_{2i}).\]
Since $ \dist(v, w_{2i+1}) - \dist(v, w_{2i})   \le \dist(w_{2i},
 w_{2i+1}) \le {\sqrt d}\max(\size(b_{2i-1}), \size(b_{2i+1}) )
 \le \frac{{\sqrt d}}{\ldist} \max(\dist(w_{2i-1}, w_{2i}),
 \dist(w_{2i+1}, w_{2i+2}) )$,
we have
\[  \dist(v, w_{2i}) - \dist(v, w_{2i+1})  \ge - \frac{{\sqrt
 d}}{\ldist} \max(\dist(w_{2i-1}, w_{2i}),
 \dist(w_{2i+1}, w_{2i+2}) ).\]
Then, we have
\begin{eqnarray*}
\dist(u,v) +\dist(w_1,u) & \ge & \dist(w_1, v) \ge
  \sum_{i=1}^{k}[\dist(v, w_{2i-1}) - \dist(v,
  w_{2i})] +  \sum_{i=1}^{k-1}[\dist(v, w_{2i+1}) - \dist(v,
  w_{2i})] \\
& \ge & (1-2\sin (\theta/2) - 2\frac{{\sqrt d}}{\ldist})
  \sum_{i=1}^{k} \dist(w_{2i-1}, w_{2i})
\end{eqnarray*}
Consequently, the ratio of the length of the path we found over
$\dist(u,v)$ is at most
\begin{equation}
\label{cond:parameter-spanner}
\frac{2t \sqrt  d}{\ldist}+ (1+ \frac{2t \sqrt  d}{\ldist})
(1+\frac{\sqrt d}{\ldist})
/ (1-2\sin (\theta/2) - 2\frac{{\sqrt d}}{\ldist})
  \le t
\end{equation}
when $\theta$, and $\ldist$ are chosen carefully ($\theta$ is small
enough and $\ldist$ is large enough).

This finishes the proof of the spanner property.
\end{proof}

It is easy to show that we can  carefully
 choose $\alpha=\Ang( {\cal F} )$, $\ldist$ and $\udist$ that satisfy the
 conditions   in (\ref{eqn:all-conditions-spanner}).
Notice that these conditions are weaker than
 the conditions required to achieve low weight property,
 illustrated in (\ref{eqn:all-conditions}).

\subsection{The Weight Property}
\label{subsec:weight-1}

We next show that the weight $\weight(G)$ of the graph $G$ constructed is
 $O(\weight(MST))$.
Our proof technique is based on  the proofs
used in \cite{das1993oss,das-soda95}.
Recall that an edge $e$ is added to graph $G$ when we process a pair of
  potential-edge boxes that are defined by a pair of bounded-separated
 floating-virtual boxes $b$ and $b' \in \Nbr(b)$.
We then say that floating-virtual boxes $b$, $b'$ and the edge $e$ form a
 \emph{dumbbell} (as defined in \cite{das1993oss}).
For a dumbbell formed by edge $e=(u,v)$, for both node $u$ and node $v$, we
 associate a cylinder with each node,
 and call it \emph{dumbbell head}, of suitable size.
A dumbbell head is a cylinder of radius $ \delta_1 \|uv\|$ and height
 $\delta_2 \| uv\|$ with $0 < \delta_1 \ll \delta_2 \ll 1$.
These dumbbell heads are always contained inside the corresponding
 floating-virtual boxes.
Similar to  \cite{das1993oss}, we can group edges of $G$ into
 $g=O(1)$ groups $E_1, E_2, \cdots, E_g$  such
 that for edges in each group $E_i$, we have
\begin{compactenum}
\item[\textbf{Near-Parallel Property:}] any pair of edges $u_1u_2$ and $v_1v_2$
  in a group are nearly  parallel, \ie, the angle formed by vectors
  $u_2-u_1$ and  $v_2-v_1$ is bounded by a constant $\theta_0$.
 This clearly can
  be achieved using a partition based on cones: we first use a constant number
 of cones to  partition the space $\rspace^d$ (where the angular span
  of the cone base is at most $\theta_0$).
Then each cone defines a  group of edges: all the (directed) edges $uw$
 contained in the direction of this cone.

\item[\textbf{Length-Grouping Property:}] In a group, any two edges have
  lengths that are either nearly equal or differ by more than a large
  constant factor.
This can be achieved by first grouping edges into buckets (the $i$th
  bucket contains edges with lengths in $[\delta^{i+1} L, \delta^{i}L ]$
  where $L$ is the length of the longest edge and constant $\delta \in
  (0,1)$).
Then form a group as the  union of  every
 $s$th bucket (so the edge lengths from different buckets
  differ by at least $\delta^{s}$ factor).

\item[\textbf{Empty-Region Property:}] In a group, any two edges that
 have nearly
  equal length of value $x$ are far apart, \ie, the distance between
 end-nodes of these edges are  at   least $\epsilon_1 x$ for some
 constant $\epsilon_1$.
Here $\epsilon_1 >0$  could be any constant (even larger than $1$).
This clearly can be done by showing that for each
  edge $uv$ of length $x$, there are at most $O(1)$ edges that are of similar
  length and are not far apart (that has at least one
 end-node  within distance  $\epsilon_1 x$ of an end-node of $uv$).
Recall that, in our method,
  for every added edge $uv$, we will only add at most 1 edge for
  the pair of potential-edge boxes defining $uv$ and the size of the
  floating-virtual  boxes $b_2$ and $b_2'$
  is at least a constant fraction of the edge length $\dist(u,v)$.
The virtual boxes $b_2$, $b_2'$ used to add an edge $uv$ will be used to
  define the dumbbells of the edge $uv$.
Recall that
  the  virtual boxes  will  be either disjoint or one is completely
  contained inside the other.
This implies that, given any edge $uv$,
 there is only a constant number of edges $xy$ that
  are of similar length and are nearby edge $uv$.

\ignore{ 
\item[\textbf{Nested-Dumbbell Property:}] For any two dumbbells associated with
  two edges  in a group $E_i$, either they are completely disjoint, or
  one dumbbell is
  contained inside the dumbbell head of the other larger dumbbell.
\begin{figure}[hptb]
  \begin{center}
      \epsfxsize=2.5in\epsfbox{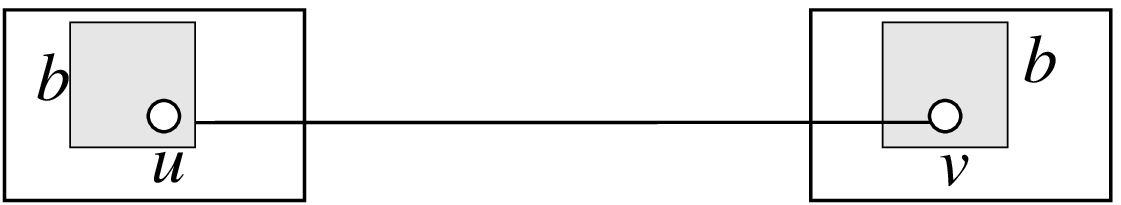}
    \caption{Illustration of the dumbbell of an edge. The shaded
    squares are boxes processed when we add an edge $uv$ and the
    cylinders (rectangles in 2D) are dumbbell heads.}
 \label{fig:example-dumbbell}
  \end{center}
\end{figure}
For an edge $uv$, we have two boxes $b_u$ (containing $u$) and $b_v$
(containing $v$) of the same size $\size(b_u)$ and distance
$\dist(b_u, b_v) \in [\ldist \size(b_u), \udist \size(b_u)]$.
Notice that, the edges in one group are almost parallel and the
 sizes of any two  edges
 are either  similar  (if so then their dumbbells are far apart)
 or differ by at least $\delta ^s$ for $\delta \in (0,1)$ and some
 integer constant $s$.
Assume that we have a sequence of edges $e_1=(u,v)$, $e_2$, $\cdots$, $e_m$
 such that a box , that created,
 edge $e_i$ contains an end-point of edge $e_{i+1}$.
Assume that the box $b_u$ of node $u$ contains an end-point of $e_2$.
Then the distance of the end points of edges $e_2$, $e_3$,  $\cdots$,
 $e_m$ to $u$ is at most $\size(b_u) + \|uv\| \sum_{i=1}^{m-1}
\delta^{s i} \le \|uv\| (\frac{1}{\ldist} + \frac{\delta^{s}}{1-\delta^s})$.
Since the edges are almost parallel, it is also easy to show that the
distance of end-points of these edges $e_2,e_3,\cdots, e_m$ to the
line passing through $e_1$ is at most $\sum_{i=2}^{m}
\frac{\dist(e_i)}{\ldist} \le \frac{\delta^s}{\ldist
  (1-\delta^{s})} \|uv\|$.
Then similar to \cite{das1993oss}, we can ensure the nested-dumbbell
 property by choosing a proper cylinder height $\delta_2 \|uv\|$ and cylinder
 radius $\delta_1 \|uv\|$ with $0 < \delta_1 \ll \delta_2 \ll 1$.
The details are omitted due to space limit.
The final resulted dumbbells are not perfect cylinders, but their
shapes are  close to being a cylinder.

\item[\textbf{Isolated-Centers Property:}] In every group $E_i$, for any two
  edges $e_1$ and $e_2$ with different sizes (ratio of lengths is
  larger than $\delta^s$), the end-node of the longer edge is \emph{not}
  inside  a dumbbell head associated with the smaller edge.
This also can be done similar to the empty-region property by showing
  that the \emph{dumbbell conflict graph} has constant degree for a
  group of edges that already satisfies length-grouping property and
  nested-dumbbell property.
Here dumbbell conflict graph $CG$ are defined over  dumbbells (associated
  with edges) and two dumbbells are connected by an edge in $CG$ if
  the isolated-centers property is violated.
}
\end{compactenum}

Consequently, we have the following lemma:
\begin{lemma}
We can group edges of $G$ into $O(1)$ groups such that the edges in each group
 satisfy the preceding properties: near-parallel, length-grouping, and
 empty-region.
\end{lemma}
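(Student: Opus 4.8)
The plan is to obtain the three properties by three successive refinements of the edge set $E$ of $G$, each of which multiplies the number of groups only by a constant, so that the final count is still $O(1)$. For the \emph{near-parallel} property I would orient each edge $uv\in E$ arbitrarily and fix a frame ${\cal F}$ of $O((c/\theta_0)^d)=O(1)$ cone bases of angular span at most $\theta_0$, as provided by the frame-construction lemma in Section~\ref{subsec:cones}; then assign $uv$ to the class indexed by the unique $B\in{\cal F}$ with $v\in\RCone(u,B)$. Any two edges in one class then form an angle at most $\theta_0$, so the near-parallel property holds with only $O(1)$ classes.

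Next, within each direction class I would impose the \emph{length-grouping} property exactly as sketched in the Length-Grouping Property above: let $L$ be the largest edge length in the class, fix $\delta\in(0,1)$ and a large constant integer $s$, put each edge $e$ in bucket $i(e)=\lfloor\log_{1/\delta}(L/\|e\|)\rfloor$, and split the class into the $s$ subclasses according to $i(e)\bmod s$. Two edges in one subclass either share a bucket, so their lengths agree up to a factor $1/\delta$ (``nearly equal''), or lie in buckets at distance at least $s$, so their lengths differ by a factor at least $\delta^{-s}$ (a large constant). This costs a further factor $s=O(1)$ and leaves the near-parallel property intact.

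Finally I would refine each direction-and-length subclass $E'$ to force the \emph{empty-region} property. Define a conflict graph $CG(E')$ on vertex set $E'$, joining $e_1,e_2\in E'$ whenever they share a length bucket (so $\|e_1\|,\|e_2\|$ agree up to a factor $1/\delta$) and some endpoint of one lies within distance $\epsilon_1\max(\|e_1\|,\|e_2\|)$ of some endpoint of the other. If $CG(E')$ has maximum degree at most a constant $\Delta_0$, a greedy proper colouring with $\Delta_0+1$ colours splits $E'$ into $O(1)$ groups, each of which inherits the near-parallel and length-grouping properties and in addition satisfies the empty-region property; collecting all colour classes over all subclasses and all direction classes then yields $O(1)$ groups with all three properties, which is the lemma.

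Everything therefore reduces to the degree bound for $CG(E')$, and this is the step where the structure of Algorithm~\ref{alg:kfts-k1} must be exploited; it is the main obstacle. Fix $e=uv\in E'$ with $\|e\|=\ell$; a neighbour $e'=u'v'$ has $\|e'\|=\Theta(\ell)$ and an endpoint, say $u'$, within $O(\ell)$ of $u$. The edge $e'$ was added when processing a pair of potential-edge boxes defined by a pair of bounded-separated floating-virtual boxes, which (by the bounded-separation conditions) both have size $\Theta(\|e'\|)=\Theta(\ell)$ and aspect ratio at most $\aratio\le2$; since $\|e'\|=O(\ell)$, both of these boxes lie within $O(\ell)$ of $u$, and since Algorithm~\ref{alg:kfts-k1} adds at most one edge per pair of potential-edge boxes, the neighbours of $e$ are in bijection with such box pairs. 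Because the floating-virtual boxes of disjoint split-tree boxes are disjoint and those of nested split-tree boxes are nested (property~\ref{fv-in-p} of Definition~\ref{definition:floating-virtual-box}), this family of boxes is pairwise disjoint or nested, all of size $\Theta(\ell)$ and contained in a ball of radius $O(\ell)$; a volume-packing bound on the disjoint ones together with the size decrease along split-tree paths on the nested ones then bounds their number — hence $\deg_{CG(E')}(e)$ — by a constant. This is precisely the bounded-conflict estimate used for dumbbells in \cite{das1993oss,das-soda95}, and I expect the remaining verifications (in particular the packing/nesting count) to be routine but the one place where care is genuinely needed.
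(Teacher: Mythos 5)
Your plan matches the paper's: it achieves near-parallel via a cone frame of $O(1)$ bases, length-grouping via geometric length buckets modded by a constant $s$, and empty-region by bounding the degree of a conflict graph (edges of similar length with nearby endpoints) and then coloring with constantly many colors, using exactly the ingredients the paper points to — one edge per pair of potential-edge boxes, floating-virtual boxes of size $\Theta(\|e\|)$, and the disjoint-or-nested structure of those boxes. The one place you flag as "where care is needed" (the packing/nesting count) is also the place the paper leaves at the same level of sketch ("this implies that ... there is only a constant number of edges ..."), so your proof is essentially the paper's argument written out with the conflict-graph coloring made explicit.
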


Here the number of groups produced depending on the values $\ldist$,
$\udist$, $\lsize$, $\usize$, and $\aratio$.
Recall that the bounded aspect ratio is at most $\aratio \le 2$ for all the
\emph{tight-virtual} boxes.
However, the aforementioned properties do not ensure that the total edge
weight of edges in a group is $O(1)
 \weight(SMT_i)$ where $SMT_i$ is the Steiner minimum tree connecting
 the endpoints of edges in $E_i$.
We can construct an example of edges  \footnote{Place $n$ nodes evenly
  on a line and connect every pair of nodes. Then there is a group of
  edges produced by the preceding partitioning will have a total edge
  weights of $O(\log n) \weight(SMT_i)$.}
 satisfying the aforementioned properties
 such that the total edge weights
 could be as large as $O(\log n) \weight(SMT_i)$.
To prove that the graph produced by our method is low-weighted, we
 need an additional property:

\medskip

\begin{compactenum}
\item[\textbf{Empty-Cylinder Property:}]
 for every edge $uv$ and its associated
 dumbbells, there is a cylinder (with the  height $\ge \eta_1
 \dist(u,v)$ and  radius  at least $ \eta_2\dist(u,v)$
 for some positive constants $\eta_1$ and $\eta_2$)
 with axis using some segment of the edge $uv$ such that the cylinder is
 empty of any end-node of edges in the same group.
This cylinder is called a \emph{protection cylinder} of the edge $uv$.
\end{compactenum}

Observe that the empty-region property does not imply the
 empty-cylinder property, and neither does the empty-cylinder property implies
 the empty-region property.

\begin{lemma}
\label{lemma:empty-cylinder}
By carefully choosing $\ldist$,  and $\alpha$ (and thus $\theta$),
 according to conditions illustrated in inequality
 (\ref{cond:parameter-empty-cylindar})
 every added edge $uv$ by our Algorithm \ref{alg:kfts-k1}
 has the empty-cylinder property.
\end{lemma}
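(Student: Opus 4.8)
The plan is to attach to every edge $uv\in E$ produced by Algorithm~\ref{alg:kfts-k1} a protection cylinder whose axis is a constant fraction of $uv$ and whose radius is a constant fraction of $\dist(u,v)$, and to show that it contains no endpoint of an edge of the same group. Fix $uv$, and let $b_1\subseteq b_2=\virtualbox(b_1)$, $b_1'\subseteq b_2'=\virtualbox(b_1')$ be the potential-edge boxes with their defining bounded-separated floating-virtual boxes used when $uv$ was added, $u=\Rep(b_1)$, $v=\Rep(b_1')$; put $s=\max(\size(b_2),\size(b_2'))$ and $\ell=\dist(u,v)$. Bounded-separateness gives $\ldist s\le\dist(b_2,b_2')\le\ell\le\dist(b_2,b_2')+\sqrt d\,(\size(b_2)+\size(b_2'))\le(\udist+2\sqrt d)\,s$, so $\ell=\Theta(s)$ with constants controlled by $\ldist,\udist$, and each of $b_2,b_2'$ has diameter at most $\sqrt d\,s$. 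Hence the part of $uv$ lying outside $b_2\cup b_2'$ — the \emph{bar} — has length $\ge\ell-2\sqrt d\,s\ge(1-2\sqrt d/\ldist)\,\ell$, a constant fraction of $\ell$ once $\ldist>2\sqrt d$. I take the axis to be the bar trimmed by a further constant $\lambda s$ at each end (length $\eta_1\ell$, still $\Theta(\ell)$ for $\ldist$ large) and the radius to be $\eta_2\ell$ for a small constant $\eta_2>0$; it then suffices to show no endpoint of an edge of $uv$'s group lies within $\eta_2\ell$ of this axis.

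Suppose $w$ is such an endpoint, of an edge $ww'$ in $uv$'s group. Then $w\notin b_2\cup b_2'$, $w$ is within $\eta_2\ell$ of $uv$, and its projection onto $uv$ is at distance $\ge\lambda s$ from both ends; so the direction from $u$ (resp.\ $v$) to $w$ makes an angle $O(\eta_2)$ with that of $uv$ (resp.\ $vu$). Since $\theta$ (the angular span of ${\cal B}$, at most $4\sqrt d/\ldist+3\alpha$ by Lemma~\ref{lemma:general-cone-direction-angle}) is small and ${\cal B}(b_2,b_2')$ properly contains $b_2'$, which subtends a constant angle from $b_2$, this forces $w$ into the meta-cone $\bigcup_{B_i\in{\cal B}(b_2,b_2')}\RCone(u,B_i)$ and into one of its constituent cones, say $B$. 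W.l.o.g.\ $\dist(u,w)\le\dist(v,w)$, so $\dist(u,w)\le\tfrac12(1+2\eta_2)\ell<\ell$, and for $\ldist$ large this also keeps the BSPD pair $\{\tilde P\ni u,\ \tilde Q\ni w\}$ covering $\{u,w\}$ at a scale below $\size(b_2)$, so $\tilde P\subsetneq b_1$, while $\tilde Q$ (of diameter $\Theta(\dist(u,w)/\ldist)$) is disjoint from $b_2$ once $\lambda$ is chosen appropriately. This pair has edge-distance $\le\dist(u,w)<\ell$, hence is processed before $(b_1,b_1')$; its contributed edge — or, if that edge was suppressed, the crossing edge already recorded for $\tilde P$ that suppressed it — then has its near endpoint in $b_1$, its far endpoint outside $b_2$, and direction in $B\in{\cal B}(b_2,b_2')$. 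That is exactly a crossing edge for $b_1$ in direction $B$ exiting $b_2$, so by the rule under which $uv$ was added we would not have added $uv$: a contradiction. Hence no such $w$ exists, and the chosen cylinder is empty of same-group endpoints.

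The main obstacle is this last step: controlling the small-scale structure near the ends of $uv$ so that the crossing edge delivered by $(\tilde P,\tilde Q)$ — or by the edge that suppressed it, following the suppression chain — genuinely has a far endpoint \emph{outside} $b_2$ and is recorded against $b_1$ (not against a strict ancestor of $b_1$, and not terminating at a node still inside $b_2$), in a cone that really lies in ${\cal B}(b_2,b_2')$ rather than a neighbouring one. This is precisely what makes the proof need (i) $\ldist$ large, so that $b_2,b_2'$ stand off the trimmed bar by more than a box diameter and the covering pair sits at scale $<\size(b_2)$; (ii) $\alpha$, hence $\theta$ via Lemma~\ref{lemma:general-cone-direction-angle}, small, so the $O(\eta_2)$ angular slack of the direction $u\to w$ stays inside the properly-contained meta-cone and inside a single one of its cones; and (iii) the trimming constant $\lambda$ and the cylinder parameters $\eta_1,\eta_2$ fixed consistently with (i)–(ii) and with the empty-region grouping used in this section. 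Collecting these requirements yields the system~(\ref{cond:parameter-empty-cylindar}); once it holds, every $uv\in E$ admits a protection cylinder of height $\ge\eta_1\dist(u,v)$ and radius $\ge\eta_2\dist(u,v)$ empty of endpoints of same-group edges, which is the empty-cylinder property.
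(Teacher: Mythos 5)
Your proposal follows the same route as the paper: use the suppression rule in Algorithm~\ref{alg:kfts-k1} (a crossing edge for $b_1$ exiting $b_2$ in the meta-cone $\cal B$ would have blocked the addition of $uv$) to argue that the truncated meta-cone at $u$ contains no node outside $b_2$ closer than $b_1'$, and then extract a protection cylinder from that empty cone once $\ldist$ is taken large enough (which is how condition~(\ref{cond:parameter-empty-cylindar}) arises). So the high-level plan and the role of the parameters are the paper's.

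Two points of comparison are worth recording. First, the paper collapses the ``suppression chain'' you allude to in a single stroke by a minimality argument: among all pairs $(x,y)$ with $x\in b_2$, $y\notin b_2$, $y\in\RCone(x,{\cal B})$ and $\dist(x,y)\le\dist(x,b_1')$, take the pair of smallest distance; the edge created for the BSPD pair covering it cannot itself be suppressed by a still-shorter crossing edge, and its existence already contradicts adding $uv$. You instead say ``follow the suppression chain,'' which is the same idea but without the terminating step; making it precise is essentially the minimality argument. Second, your intermediate claim that the BSPD pair $(\tilde P,\tilde Q)$ covering $\{u,w\}$ satisfies $\tilde P\subsetneq b_1$ is not justified and, in general, false: the scale of $\tilde P$ is $\Theta(\dist(u,w)/\ldist)$, which is comparable to $\size(b_2)$ but need not be at most $\size(b_1)$, since $b_1$ is only the largest enclosing-box of $CT(V)$ inside $b_2$ and could be as small as a leaf. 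What the argument actually needs --- and what the $CrossingEdge(b,\cdot,\cdot)$ bookkeeping across the whole subtree of $b_1$ is designed to deliver --- is only that \emph{some} recorded crossing edge has its near endpoint in $b_1$ (or a descendant, which is automatically merged into $CrossingEdge(b_1,\cdot,\cdot)$) and its far endpoint outside $b_2$, not that the BSPD box $\tilde P$ nests inside $b_1$. You flag exactly this as the main obstacle and leave it open; the paper's proof is terse here too, but the minimality device plus the propagation of $CrossingEdge$ up the split-tree is the specific mechanism it relies on, and your proposal should be rephrased to invoke that rather than a containment $\tilde P\subsetneq b_1$ that does not hold.
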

\begin{proof}
Assume that $uv$ is added due to the pair of potential-edge boxes
$b_1$ and $b_1'$, which is defined by a pair of
bounded-separated (floating-virtual) boxes $b_2$ and $b_2'$.
Thus $b_1$ and $b_1'$ are  contained inside $b_2$ and $b_2'$ respectively.
Let $B$ be the  base such that $v \in \RCone(b_2, B)$
 and let $\cal B$ be the minimal collection of bases such that
 for any point $p$ inside the box $b_2$, $b_2' \in \RCone(p, {\cal B})$, i.e.,
 bases that are in the {\em general-cone-direction}  $\cal B$.

Since $uv$ is added, we know that there is no edge $xy$ crossing $b_2$
 with $x \in b_2$ and $xy$ is in the general-cone-direction $\cal B$.
We will show by a simple contradiction  that there is a node $p$, such
 that the cone $\RCone(p, {\cal B})$ is empty of nodes $w \not \in
 b_2$ with  distance $\dist(p,w) \le \dist(p, b_1')$.
If this is not true,
 consider all the pairs of nodes $x$ and $y$ with $x \in b_2$, $y
 \not \in b_2$, $y \in \RCone(x, {\cal B})$ and $\dist(x,y) \le
 \dist(x, b_1')$.
Let $p,q$  be the pair with the smallest distance among all such pairs
 of nodes $x,y$.
Then edge $pq$ will exist in the graph $G$, which contradicts the
 existence of edge $uv$.

Since the cone $\RCone(p, {\cal B})$ is empty of nodes,
 then by choosing a large enough $\ldist$, we will have a large
 empty-cylinder at the middle of the segment $uv$.
For example, if we let $\ldist$ be four times of the
 value of $\ldist$ that satisfies condition
 (\ref{cond:parameter-cover-all}), \ie,
\begin{equation}
\label{cond:parameter-empty-cylindar}
\theta \le 3\alpha + \frac{16 \sqrt  d}{\ldist}
\end{equation}
 then we have an empty-cylinder near
 the middle of the segment $uv$
 with height almost half of the length $\dist(u,v)$.
In other words, if  condition (\ref{cond:parameter-empty-cylindar}) is
 satisfied, we have $\eta_1 \simeq  1/2$,
 and $\eta_2 =\ldist/(4\udist)$.
Recall that here $\ldist$ and
 $\udist= \Theta( \ldist)$ are constants used to define the
 bounded-separateness of two almost-equal-sized virtual boxes.
\end{proof}

Thus, for any edge $uv$ added by our method, we know that there is a
 cylinder using a segment $wz$ of $uv$ as axis with radius at
 least $\eta_2\|uv\|$ for a constant $\eta_2$,  $wz$ has length
 $\|uv\|/2$ and in the center of segment $uv$.

\begin{definition}[Isolation Property]\cite{das-soda95}
A set of edges $E$ is said to satisfy the \emph{isolation property} if
\begin{compactenum}
\item
 With every edge $e=uv \in E$ can be associated a cylinder $C(e)$
 whose axis is a segment of $uv$, and the size of the cylinder is not
 small,  \ie, the height is at least $\eta_1 \dist(u,v)$
 and the radius of the basis is at least $\eta_2  \dist(u,v)$ for
 some  positive  constants $\eta_1$ and $\eta_2$.
\item
For every edge $e$,  its associated cylinder $C(e)$ is not
 intersected by any other edge.
\end{compactenum}
\end{definition}

The following theorem was proved in \cite{das-soda95} by Das \etal.
\begin{theorem}\cite{das-soda95}
\label{theo:isolation}
If a set of edges $E$ in $\dspace$ satisfies the isolation property,
 then $\weight(E)=O(1) \weight(SMT)$, where $SMT$ is the
 Steiner minimum spanning tree connecting the endpoints of $E$.
\end{theorem}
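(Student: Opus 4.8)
The statement is the classical fact that an isolation (leapfrog‑type) property forces near‑minimum weight, due to Das, Heffernan and Narasimhan; the plan is to charge the total edge length against the Steiner minimum tree $T=SMT$ of the endpoint set $P$ of $E$. The one structural consequence of the hypothesis that drives everything is the following: for each edge $e=uv$ the protection cylinder $C(e)$ has radius $\ge\eta_2\dist(u,v)$ and height $\ge\eta_1\dist(u,v)$ about a segment in the middle of $uv$ and is disjoint from every other edge of $E$; since an endpoint of $E$ lying inside $C(e)$ would drag its incident edge into $C(e)$, the cylinder $C(e)$ in fact contains \emph{no} point of $P$ other than (parts of) $uv$ itself, and one checks $u,v\notin C(e)$. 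Thus around the midpoint $m_e$ of $e$ there is a ball of radius $\Omega(\dist(u,v))$ empty of terminals — a ``gap'' condition holding simultaneously at every edge and every scale.

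From here I would run the amortized charging argument. Process the edges in order of non‑increasing length. Take the unique path in $T$ between the two endpoints of $e=uv$, giving $P_e\subseteq T$ with $\weight(P_e)\ge\dist(u,v)$; to $e$ we charge $\dist(u,v)$, but distributed along a carefully chosen sub‑path $\sigma(e)$ of $P_e$ of weight $\Theta(\dist(u,v))$ — intuitively the portion of $P_e$ that remains inside a constant enlargement $\widehat C(e)$ of the protection cylinder, or (if $P_e$ escapes early) the initial stretch of $P_e$ of weight $\tfrac12\dist(u,v)$ leaving whichever endpoint stays closer. The emptiness of $C(e)$ is exactly what makes such a stretch exist at scale $\dist(u,v)$: no terminal other than $u,v$ sits within $\Omega(\dist(u,v))$ of $m_e$, so $P_e$ cannot ``shortcut'' through the cylinder's interior via a nearby terminal.

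The decisive point is bounding the congestion: I claim each tree‑edge of $T$ lies in $\sigma(e)$ for only $O(1)$ edges $e$. For two edges of comparable length, the empty balls around their midpoints, being of radius a constant fraction of that common length, keep their $\sigma$'s essentially separated, so only $O(1)$ can meet a fixed tree‑edge. For $e'$ much shorter than $e$, the fact that $C(e)$ contains no endpoint of $E$ — hence none of $e'$'s endpoints — places $e'$, and with it the small cylinder $\widehat C(e')$ on which $\sigma(e')$ lives, at distance $\Omega(\dist(u,v))$ from $m_e$, so $\sigma(e')$ cannot occupy the part of $\widehat C(e)$ into which $\sigma(e)$ is forced; and since the cylinder radii shrink geometrically in edge length, only $O(1)$ distinct length scales have cylinders meeting a given tree‑edge. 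Summing, $\weight(E)=\sum_{e=uv\in E}\dist(u,v)\le O(1)\sum_e\weight(\sigma(e))\le O(1)\,\weight(T)=O(1)\,\weight(SMT)$. (If convenient one may first apply the grouping of the previous subsection — near‑parallel, length‑grouped, empty‑region — to reduce to $O(1)$ groups and invoke $SMT$‑monotonicity on terminal subsets, but the isolation hypothesis already carries enough structure by itself.)

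The main obstacle — and the genuinely technical part of Das–Heffernan–Narasimhan — is defining $\sigma(e)$ so that the ``weight $\Theta(\dist(u,v))$'' property and the ``congestion $O(1)$'' property hold simultaneously, in particular controlling the case where $P_e$ wanders far from $e$ and returns, and making quantitative the intuition that a long edge's empty cylinder forbids the tree portions owned by much shorter edges from colliding with it. Once that lemma is in place, the remainder is packing arithmetic in the constants $\eta_1,\eta_2$ and the geometric length‑scale ratio.
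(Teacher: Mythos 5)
This theorem is cited from \cite{das-soda95} (Das--Narasimhan--Salowe, not Das--Heffernan--Narasimhan, which is the earlier paper \cite{das1993oss}); the paper you are reading gives no proof of it, so there is no in-paper argument to check yours against. Judging the sketch on its own terms: your opening observation is correct and useful — isolation forces each protection cylinder $C(e)$ to be empty of endpoints of $E$, since an endpoint inside $C(e)$ would bring its incident edge into $C(e)$, and for $\eta_1<1$ the endpoints of $e$ itself lie outside $C(e)$ as well.

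What follows, though, is a plan rather than a proof, and the plan has a hole you would need to fill. You defer the definition of $\sigma(e)$ and both of its required properties (weight $\Theta(\dist(u,v))$ and congestion $O(1)$) to ``the genuinely technical part,'' so the lemma that carries the whole chain $\weight(E)\le O(1)\sum_e\weight(\sigma(e))\le O(1)\weight(T)$ is left unproved. Moreover the intuition you offer for it does not hold as stated: $C(e)$ is empty of \emph{terminals}, not of the Steiner tree itself, so $SMT$ is free to run through $C(e)$ via Steiner points; ``$P_e$ cannot shortcut through the cylinder's interior via a nearby terminal'' is true but does not confine $P_e$ to stay near $e$, and ``empty balls around midpoints keep their $\sigma$'s essentially separated'' does not follow, because the $\sigma$'s are pieces of the tree and the tree can cross the empty balls at will. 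Even replacing $SMT$ by $MST$ (harmless up to the constant Steiner ratio) to banish Steiner points does not help: tree \emph{edges} can still cross $C(e)$. The congestion bound you need is a genuine packing statement about how many near-parallel edges of a given scale can pass near a fixed point when none may intersect the others' cylinders, and that is exactly what the Das--Narasimhan--Salowe argument establishes directly; it does not fall out of the empty-ball observation. Until such a packing lemma is stated and proved, the charging scheme is circular at its center, and the bound $\weight(E)=O(1)\weight(SMT)$ is not established.
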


Based on this theorem, we then show that the graph $G$ produced by our
 method is also low-weighted.
Observe that a group of edges from the graph $G$, partitioned as
 previously to satisfy the near-parallel, length-grouping, and
 empty-region properties, may not satisfy the isolation
 property directly.

\begin{theorem}
\label{theo:isolation-empty-region}
The set of edges $E_i$ that satisfies \textbf{empty-region property}
 and \textbf{empty-cylinder property} has
 a total weight at most $O(\weight(SMT_i))$ where $SMT_i$ is the
 Steiner minimum spanning tree that spans the vertices in $E_i$.
\end{theorem}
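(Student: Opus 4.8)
\emph{Proof plan.} The strategy is to show that $E_i$ satisfies the \emph{isolation property} of \cite{das-soda95} and then quote Theorem~\ref{theo:isolation} verbatim. For each edge $e=uv\in E_i$ I take as its associated cylinder a coaxial sub-cylinder $C^{*}(e)$ of the protection cylinder furnished by the empty-cylinder property (Lemma~\ref{lemma:empty-cylinder}), shrinking its radius to $\rho_0\,\dist(u,v)$ and its height to $\eta_1'\,\dist(u,v)$ for small constants $\rho_0<\eta_2$ and $\eta_1'<\eta_1$ to be pinned down along the way, its axis still being a central segment of $uv$. Since $C^{*}(e)\subseteq C(e)$ and $C(e)$ contains no end-node of any edge of $E_i$, it suffices to rule out, for every pair $e\neq e'$ in $E_i$, that $e'$ intersects $C^{*}(e)$. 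Suppose it does. As $C^{*}(e)$ is free of end-nodes, both endpoints of $e'$ lie outside it, so $e'$ passes \emph{through} $C^{*}(e)$; and because all edges of $E_i$ are near-parallel (angle at most $\theta_0$), $e'$ must stay inside a thin tube around the line through the axis of $e$ over a stretch of length $\Theta(\dist(u,v))$.

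I would then split into three cases using the length-grouping property, which forces the ratio $\dist(u',v')/\dist(u,v)$ to be either within a factor $\lambda_0$ close to $1$, or at least a large constant $M$, or at most $1/M$:
\begin{compactitem}
\item \textbf{Comparable lengths.} The empty-region property applies to $e$ and $e'$; choosing $\epsilon_1$ large enough relative to $\rho_0,\eta_1',\theta_0$ places the endpoints of $e'$ too far from $e$ to be consistent with $e'$ meeting the central cylinder $C^{*}(e)$.
\item \textbf{$e'$ much shorter.} Since $\dist(u',v')\le \dist(u,v)/M$ and $e'$ is near-parallel to $e$ while crossing the central part of $e$'s axis, for $M$ large and $\theta_0$ small the whole of $e'$ — in particular both its endpoints — lies inside the fat cylinder $C(e)$, contradicting its emptiness.
\item \textbf{$e'$ much longer.} Pick $p^{*}\in e'\cap C^{*}(e)$; it lies within $O(\rho_0+\eta_1')\dist(u,v)$ of the midpoint $m$ of $uv$. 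If $p^{*}$ lies in the middle half of $e'$, then, since $\dist(u',v')\ge M\,\dist(u,v)$ and $e'$ is near-parallel to $e$, the endpoints $u,v$ fall inside the fat empty cylinder $C(e')$ of $e'$, contradicting the empty-cylinder property for $e'$. If instead $p^{*}$ falls in an outer quarter of $e'$, I invoke the bookkeeping of Algorithm~\ref{alg:kfts-k1}: the edge $e$ already existed when $e'$ was processed (pairs are handled in increasing edge-distance, and the edge-distance of the box pair defining $e$ is much smaller than that of the pair defining $e'$), and tracing $e'$ back from $p^{*}$ to its nearer endpoint shows that $e$ would have been recorded as a \emph{crossing edge} for the floating-virtual box defining $e'$ in the relevant general-cone-direction, so $e'$ would never have been added.
\end{compactitem}

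In every case we reach a contradiction, hence no edge of $E_i$ meets $C^{*}(e)$; so $E_i$ has the isolation property and Theorem~\ref{theo:isolation} yields $\weight(E_i)=O(\weight(SMT_i))$. The delicate point, and the one I expect to absorb most of the work, is the last sub-case: pure cylinder-emptiness does not by itself exclude a very long, nearly parallel edge that grazes a short edge's cylinder near the long edge's own endpoint, and one must dip into the crossing-edge mechanism of the algorithm — together with the relation $\dist(u,v)=\Theta(\size(b_2))$ with $\size(b_2)\ll\dist(u,v)$ coming from a large $\ldist$ — to rule it out.
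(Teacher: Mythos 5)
Your plan and the paper's proof diverge in a fundamental way, and your version has a genuine gap. The paper does \emph{not} prove that $E_i$ (or a shrunken-cylinder version of it) satisfies the isolation property. It explicitly observes the opposite: a long edge $uv$ can pass through the protection cylinders of \emph{many} shorter edges $xy$ in the same group (this is what Figure \ref{fig:low-weight}(a) illustrates). The paper then proceeds by a removal-and-charging argument: process edges from longest to shortest, put $uv$ into $E_i'$, delete from $E_i$ the set $I(uv)$ of shorter edges whose cylinders $uv$ pierces, show $\sum_{e\in I(uv)}\dist(e)=O(\dist(u,v))$ via the length-grouping recursion $X_i \le X - \eta_1\sum_{t<i}X_t$, and finally apply Theorem \ref{theo:isolation} only to the surviving subset $E_i'$. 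You instead try to prove full isolation by case analysis, and that cannot be closed.

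The hole is exactly where you flagged it as ``delicate'': the long $e'$, outer-quarter subcase. Your escape route is to say $e$ would have been registered in $CrossingEdge$ for the floating-virtual box $b_2'$ defining $e'$, so the algorithm would have refused $e'$. But the crossing-edge test in Algorithm \ref{alg:kfts-k1} only fires for an edge $xy$ with $x \in b_2'$ and $y \notin b_2'$. Here $b_2'$ has size $\Theta(\dist(u',v')/\ldist)$ while $\dist(u,v)\le \dist(u',v')/M$; choosing $M$ large --- which you must do for the other two cases --- makes $e$ strictly smaller than $b_2'$. Since $p^*\in e'\cap C^*(e)$ lies near $u'\in b_1'\subseteq b_2'$, the short edge $e$ can sit entirely inside $b_2'$ (or, near the very corner, entirely outside it), and in either situation it is not a crossing edge and the algorithm adds $e'$ anyway. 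No choice of $\rho_0,\eta_1'$ repairs this because the empty-region property constrains only edges of comparable length; a much shorter edge is allowed to lie arbitrarily close to a long edge near the long edge's endpoint, outside the long edge's cylinder $C(e')$ but with the long edge grazing the short edge's cylinder $C^*(e)$. That is precisely the configuration the paper handles by deleting those short edges and charging their total length (bounded, via the telescoping $X_i$ argument, by $O(1)\cdot\dist(u,v)$) to the surviving long edge --- a piece of machinery absent from your proof.

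Your other two cases (comparable lengths, and $e'$ much shorter) are fine, and your ``middle-half'' subcase of the long-$e'$ case correctly uses the emptiness of $C(e')$. But without a substitute for the outer-quarter subcase, the isolation property for the full set $E_i$ is simply false, and Theorem \ref{theo:isolation} cannot be invoked directly; you need the paper's removal/charging step or something playing the same role.
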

\begin{proof}
We first use the grouping approach to partition the edges into a
constant number of groups $E_i$, $1 \le i \le g$,
 with each group of edges satisfying the near-parallel, length-grouping, and
 empty-region properties.
It now suffices to study the weight of a group $E_i$.
We essentially will show that, for each group $E_i$ of edges produced,
 we can remove some edges such that
 (1) the total length of all removed edges is bounded by a constant
 factor of the total length of the remaining edges, and
 (2) the set of the remaining edges satisfies the isolation property.
If these two statements were proven to be true, the theorem then
 directly follows.
Figure \ref{fig:low-weight} illustrates the proof that will follow.
\begin{figure*}[hptb]
 \begin{center}
\begin{tabular}{cc}
 \epsfxsize=2.5in\epsfbox{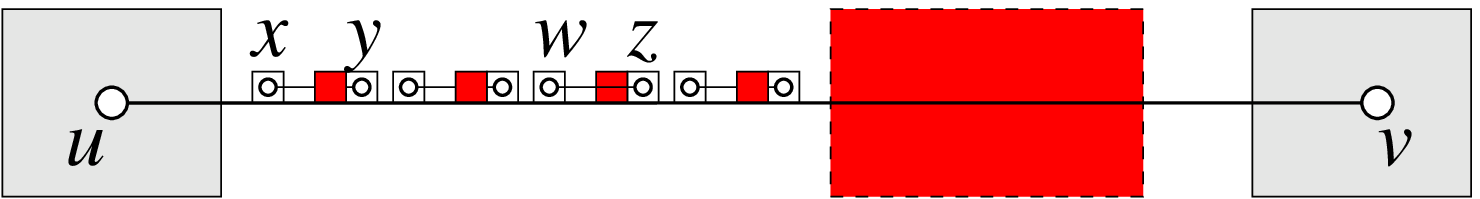} &
\epsfxsize=2.5in\epsfbox{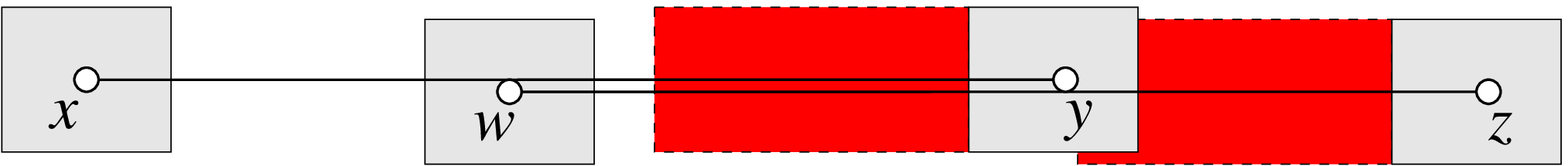}  \\
(a) & (b)
\end{tabular}
  \end{center}
    \caption{Illustration of the proof of the low weight property.
Here (a) a long edge may intersect the cylinders (dark shaded
 rectangles) of many shorter edges
 and  (b) the length of the similar sized edges that are intersected by
 an edge is bounded.}

\label{fig:low-weight}
\end{figure*}

Recall that when we added an edge $uv$ to the graph $G$, edge $uv$
 satisfies the Empty-Cylinder Property.
It is easy to show that the cylinder associated with $uv$
 will not intersect any edge $xy$
 (with a much shorter length) added before $uv$ and  $xy \in E_i$.
This can be done by shrinking the protection cylinder by at most a
 small constant factor.

On the other hand, it is possible that edge $uv$ may intersect  cylinders of
 many edges $xy \in E_i$ with shorter lengths.
See Figure \ref{fig:low-weight} (a) for an illustration of such case.
Here the black shaded regions are protection cylinders.
Given an edge $uv$, let $I(uv)$ be the set of edges $xy \in E_i$ such
 that $uv$ intersects the protection cylinder of the edge $xy$.
We then process edges $uv \in E_i$, starting from the longest edge, to
 produce $E_i'$ as follows: the longest edge $uv \in E_i$ is added to
 $E_i'$ and update $E_i$ by removing all edges $I(uv)$ from $E_i$; we
 repeat this procedure till $E_i$ is empty.
Clearly, the set of edges $E_i'$ satisfies the \emph{isolation
 property} and thus has total weight at most $\weight(SMT)
 =O(\weight(EMST))$.
Observe that, the preceding processing of $E_i$ is only for the proof
 of the low-weight property; we will not remove these edges $I(uv)$
 from the constructed structure $G$.
Also observe that all edges $xy \in I(uv)$, where $xy$ has a length at
 most $\delta^{s}\dist(u,v)$, $s > 1$, must be inside a cylinder with axis
 $uv$, height almost $\dist(uv)$, and radius at most $\eta_2
 \delta^{s} \cdot \dist(u,v)$.

We then show that the length of edges in $I(uv)$ is at most
$O(\|uv\|)$.
For simplicity, we assume that all edges in $I(uv)$ are parallel to
 $uv$.
The rest of the proof will still hold (with different constants)
 since the edges in $E_i$ are almost parallel.
Recall that the edges in $I(uv)$ have the length grouping property and
 the length  of every edge is at most $\|uv\|$.
Let $I_i(uv)$ be  edges from $I(uv)$
  with length in the range of $[\delta^{s\cdot i+1} \dist(u,v),
  \delta^{s\cdot i} \dist(u,v)]$, where $i\ge 0$ is any integer.
Here $s \gg 1$ and $0<\delta \ll 1$ are positive constants used in
deriving the length grouping property.
Let $X_i$ be the total length of edges in $I_i(uv)$.
We first show that $X_1$ is at most
 $\epsilon \dist(u,v)$ for a small constant $0< \epsilon <1$.
Let $x_1y_1$, $x_2y_2$, $\cdots$, $x_ay_a$ be edges from $I_1(uv)$
 such that the projection $x_i'$ of $x_i$ on edge $uv$
 is at the righthand side of the projection $y_{i-1}'$
 of node $y_{i-1}$ on edge $uv$.
Then $\sum_{j=1}^{a} \dist(x_jy_j) \le (1-\eta_1) \dist(u,v) + 2
\delta^s \dist(u,v) \le \epsilon \dist(u,v)$
 for a constant $\epsilon=1-\eta_1+2\delta^s <1$ when integer
 $s$ is chosen large enough.
Here $\eta_1$ is the constant used to define the ratio of the
 height of  a protection cylinder over the length of the edge.

We then show that we will \emph{not} have edges $wz$ in $I_1(uv)$
 such that it will have
  endpoints $w$ such that its projection $w'$ on $uv$
 is in the interval $[x_j', y_j']$ for some $1 \le j \le a$.
Figure \ref{fig:low-weight} (b) illustrates the situation for this case.
This can be proved by choosing a large constant $\epsilon_1 \gg 1$ in defining
 the  empty-region property: for any edge $x_jy_j$, there is no edge
 $wz$ of similar length such that $w$ is within distance $\epsilon_1
 \dist(x_j,y_j)$ of $x_j$ or $y_j$.

We can show that the protection cylinders defined
 by edges in $I_1(uv)$ are disjoint, and these protection cylinders
 are also disjoint from the protection cylinder of the edge $uv$.
Obviously, any edge $wz$ from $I_i(uv)$ cannot
 have node $w$ or $z$ falling inside the protection cylinders of
 the edges in $I_j(uv)$ for $j < i$.
Recall that our choices of  protection cylinders (their sizes)
 already  ensure that any edge $wz$ from $I_i(uv)$ cannot intersect
 the cylinders of edges from  $I_{t}(uv)$ with $t \le i$.
Thus, the total length of edges in $I_{i}(uv)$, denoted as $X_i$, is
 at most \footnote{With a small additive value whose total length over all $i$
 is bounded by $X$. This is for the case that we may have an edge $xy$
 from $I_i(uv)$ such that $x$ is outside of the cylinder using $uv$ as
 axis and radius proportional to $\delta^{si} \dist(u,v)$, and $y$ is
 inside this cylinder. Obviously, the total length of such edges are
 at most $X=\dist(u,v)$.}
\[X_i \le X - \eta_1 \sum_{t=0}^{i-1} X_t.\]
Here $X=\dist(u,v)$, and
 $\eta_1 \sum_{t=1}^{i-1} X_t$ is the total height of the
 protection cylinders defined by edges in $I_0(uv)=\{uv\}$,
 $I_1(uv)$, $I_2(uv)$, $\cdots$, $I_{i-1}(uv)$.
These protection cylinders are empty of nodes,
  and also empty of edges from $I_i(uv)$.
Then it is easy to show by induction that, for any $i \ge 1$,
 $\sum_{t=0}^{i} X_{t} \le \eta X$ for a constant
 $\eta= \frac{1}{\eta_1}$.
This finishes the proof.
\end{proof}

Thus, by choosing the parameters $\alpha$, $\ldist$, $\udist$,
 and $\theta$ satisfying the following conditions
\begin{eqnarray}
\label{eqn:all-conditions-all}
\begin{cases}
\ldist \ge 4 \sqrt{d} & {\mbox{from Theorem
  \ref{theo:potential-edge-boxes}}} \\
\udist \ge 6\sqrt{d} +2 \ldist & {\mbox{from Theorem
  \ref{theo:potential-edge-boxes}}} \\
\theta \ge 3 \alpha + \frac{16 \sqrt d}{\ldist} & {\mbox{from Lemma
  \ref{lemma:empty-cylinder}}}\\
\ldist   >  t \sqrt d & \mbox{from Theorem \ref{theo:spanner-1}} \\
2    {\sqrt d}  \frac{t+1}{t-1} \le  \ldist & \mbox{from Theorem
  \ref{theo:spanner-1}} \\
\frac{2t \sqrt  d}{\ldist}+ (1+ \frac{2t \sqrt  d}{\ldist})
(1+\frac{\sqrt d}{\ldist}) / (1-2\sin (\theta/2) - 2\frac{{\sqrt d}}{\ldist})
  \le t & \mbox{from Theorem \ref{theo:spanner-1}} 
\end{cases}
\end{eqnarray}
 our structure is a $t$-spanner, with bounded degree, and is
 low-weighted.
To satisfy the aforementioned conditions, it suffices to satisfy
 the following  conditions when $t\le 3$
\begin{eqnarray}
\label{eqn:all-conditions}
\begin{cases}
\udist \ge 6\sqrt{d} +2 \ldist & {\mbox{from Theorem
  \ref{theo:potential-edge-boxes}}} \\
\theta \ge 3 \alpha + \frac{16 \sqrt d}{\ldist} & {\mbox{from Lemma
  \ref{lemma:empty-cylinder}}}\\
  \ldist \ge 2    {\sqrt d}  \frac{t+1}{t-1}  & \mbox{from Theorem
  \ref{theo:spanner-1}} \\
\frac{2t \sqrt  d}{\ldist}+ (1+ \frac{2t \sqrt  d}{\ldist})
(1+\frac{\sqrt d}{\ldist}) / (1-2\sin (\theta/2) - 2\frac{{\sqrt d}}{\ldist})
  \le t & \mbox{from Theorem \ref{theo:spanner-1}} 
\end{cases}
\end{eqnarray}

It is easy to show
 that we do have solutions for $\alpha$, $\ldist$, $\udist$ and
 $\theta$.
For example,  if we let $x = \sqrt d /\ldist$, $\alpha=x$,
 and $\theta=19 x$.
By choosing $21x <1$, we get
 $x = \frac{25t +1 - \sqrt{(25t+1)^2 - 160t(t-1)}}{80t}$ is a valid solution.
Thus, $\frac{t-1}{25t +1} \le x=\frac{\sqrt  d}{\ldist}  \le
 \frac{2(t-1)}{25t +1}$ clearly 
 is a solution satisfying the last condition here.
This solution also satisfies the other conditions in Inequalities
 (\ref{eqn:all-conditions}). 
Thus the number of cones produced is proportional to $(\frac{1}{\alpha})^d
 = O((\frac{t+1}{t-1})^d) = O((\frac{1}{t-1})^d)$.
Then we have our main theorem:
\begin{theorem}
\label{theo:main-k1}
Given a set of nodes $V$ in $\dspace$,
 Algorithm \ref{alg:kfts-k1} constructs a structure that
 has low-weight ($O(\weight(MST))$), 
has a constant bounded degree ($O((\frac{1}{c})^d)$),
and is a $t$-spanner
in time $O((\frac{1}{c})^d n \log n)$, where the constant $c$ depends
 on the spanning ratio $t$ and $\frac{1}{c^d}$ is the number of cones
 needed in our cone decomposition.
\end{theorem}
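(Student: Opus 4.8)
The plan is to assemble the theorem from the three structural properties already established (degree, stretch, weight) together with a pass through Algorithm~\ref{alg:kfts-k1} for the running time. First I would pin down the parameters. Using the substitution $x=\sqrt d/\ldist$, $\alpha=x$, $\theta=19x$, and $21x<1$ discussed immediately before the statement, one gets a concrete feasible choice with $\frac{\sqrt d}{\ldist}=x\in[\frac{t-1}{25t+1},\frac{2(t-1)}{25t+1}]$ (taking $\min(t,2)$ as the target stretch when $t>3$, which only strengthens the conclusion) that simultaneously satisfies \emph{all} the inequalities in~(\ref{eqn:all-conditions-all}): the two BSPD constraints from Theorem~\ref{theo:potential-edge-boxes}, the empty-cylinder constraint $\theta\ge 3\alpha+\frac{16\sqrt d}{\ldist}$ from Lemma~\ref{lemma:empty-cylinder}, and the three spanner constraints from Theorem~\ref{theo:spanner-1}. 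This fixes a frame ${\cal F}$ with $|{\cal F}|=O((1/\alpha)^d)=O((\tfrac{t+1}{t-1})^d)=O((1/c)^d)$ cones, where $c$ is the constant depending on $t$ named in the statement.

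With the parameters fixed I would invoke the earlier results directly: (i)~Theorem~\ref{them:bound-degree-1} gives that every node of $G$ has degree at most $|{\cal F}|=O((1/c)^d)$, since no node serves as a representative more than twice (Lemma~\ref{lemm:representative-times}) and at most one edge is added per cone per representative; (ii)~Theorem~\ref{theo:spanner-1} gives that $G$ is a $t$-spanner, because the chosen $\alpha,\ldist,\udist,\theta$ meet~(\ref{eqn:all-conditions-spanner}); and (iii)~for the weight bound, Lemma~\ref{lemma:empty-cylinder} shows every added edge has the empty-cylinder property (its hypothesis~(\ref{cond:parameter-empty-cylindar}) being part of our parameter choice), so grouping the edges into the $O(1)$ classes $E_i$ satisfying near-parallel, length-grouping and empty-region, and applying Theorem~\ref{theo:isolation-empty-region} to each $E_i$, yields $\weight(E_i)=O(\weight(SMT_i))=O(\weight(\EMST))$. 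Summing over the constantly many groups gives $\weight(G)=O(\weight(\MST))$.

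It remains to bound the running time, in the algebraic computation tree model, by walking through Algorithm~\ref{alg:kfts-k1}: defining ${\cal F}$ costs $O((1/c)^d)$; building $CT(V)$ costs $O(dn\log n)$ by Theorem~\ref{theo:timing-cv}; building the BSPD, choosing all representatives by a bottom-up pass (Lemma~\ref{lemm:representative-times}), and computing $\Nbr_{\ge}(b)$ for every $b$ all cost $O(n)$ by Theorem~\ref{theo:potential-edge-boxes}, using the $O(d)$-time virtual-box constructions of Lemmas~\ref{lemma:finding-tight-virtual-box} and~\ref{lemm:finding-almost-equal-sized-box}; sorting the $O(n)$ edge-distances (Definition~\ref{def:edge-distance}) costs $O(n\log n)$; and the main loop runs $\sum_b|\Nbr_{\ge}(b)|=O(n)$ iterations. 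For each iteration I would argue the work is $O(1)$: identifying ${\cal B},{\cal B}'$ is constant-size and constant-time by Lemma~\ref{lemma:general-cone-direction-angle} and the remark after it, scanning the constantly many arrays $CrossingEdge(b_1,B_i,h)$ and $CrossingEdge(b_1',B_i',h)$ for a crossing edge is constant work, and updating those arrays is constant work. Adding these contributions gives the claimed $O((1/c)^d\, n\log n)$, with the single sort as the only superlinear term.

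The step I expect to be the main obstacle is the $O(1)$-per-iteration claim in the main loop. One must verify that the arrays $CrossingEdge(\cdot,B_i,h)$ really have constant size — which rests on $|\Nbr_{\ge}(b)|$ being constant — that testing whether a stored edge $xy\in CrossingEdge(b_1,B_i,h)$ crosses the \emph{current} floating-virtual box $b_2$ (which depends on the pairing and is recomputed in $O(d)$ time) is a constant-time coordinate comparison, and that keeping only the edge whose far endpoint is \emph{furthest in dimension $h$} is enough information that no linear rescanning is ever triggered. I would also confirm that every primitive operation used — coordinate comparisons, box distances, and cone membership via inner products against the fixed basis vectors of ${\cal F}$ — lies inside the algebraic computation tree model, so that indirect addressing is nowhere needed and the $O(n\log n)$ bound is genuinely in that model.
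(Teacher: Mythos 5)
Your proof is correct and follows essentially the same route as the paper, which assembles Theorem~\ref{theo:main-k1} from Theorem~\ref{them:bound-degree-1} (degree), Theorem~\ref{theo:spanner-1} (stretch), Theorem~\ref{theo:isolation-empty-region} with Lemma~\ref{lemma:empty-cylinder} (weight), and the parameter-existence discussion immediately preceding the statement, with the running time coming from Theorem~\ref{theo:timing-cv}, Theorem~\ref{theo:potential-edge-boxes}, the $O(n)$ bound on $\sum_b|\Nbr_{\ge}(b)|$, and the single $O(n\log n)$ sort. Your explicit handling of the $t>3$ case by targeting $\min(t,2)$, and your flagging of the $O(1)$-per-iteration subtleties (constant-size $CrossingEdge$ arrays, constant-time crossing tests, and algebraic-tree-model primitives), are useful elaborations of points the paper leaves implicit.
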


\section{$(k,t)$-VFTS Spanner Using Compressed Boxtree}
\label{sec:boxtree-k}

In this section, we show how to extend the previous approach to
 an efficient method for constructing a $k \ge 1$
 fault-tolerant $t$ spanner for any given set of nodes $V$ in
 $\rspace^d$, $t >1$ and $k \ge 1$.

\subsection{The Method}

Our approach follows the construction of $t$-spanners as in the previous
sections. We will assume the construction of
 the compressed split-tree $CT(V)$ and a bounded-separated pair
 decomposition (BSPD) based on $CT(V)$.
As has already been proven  previously (Lemma \ref{lemm:representative-times}),
 at each box of $CT(V)$, we can choose $1$ representative node so that
 each node $v_i \in V$ is used at most $2$ times as a representative
 node of some enclosing-boxes.
Since we want a $k$-VFTS, we will choose $k+1$ representative nodes
 for an enclosing-box, if it
 contains at least $k+1$ nodes inside.
For easy of presentation, we define various boxes.

\begin{definition}
We call an enclosing-box $b$ \emph{a $k$-box} if it contains at least
 $k+1$ nodes inside; otherwise it is called  \emph{a non-$k$-box}.
A $k$-box is called a \emph{leaf-$k$-box} if it is a $k$-box and none of its
children in the compressed split-tree is a $k$-box.
\end{definition}

For each $k$-box $b$, we will choose $k+1$ nodes contained inside $b$
 as the representative nodes $\Rep(b)$ of the box $b$.
We will discuss in detail on how to choose  $\Rep(b)$ for a $k$-box
 later.
For any box $b$ that contains at most $k$ nodes inside,
 all nodes will serve as the representative nodes $\Rep(b)$ of this box.

The basic idea of our method is as follows.
Consider a pair of potential-edge boxes $b_1$ and $b_1'$,
 and their $k+1$ representative nodes (if both boxes are $k$-boxes).
Here $b_1$ and $b_1'$ are tight-virtual boxes.
Consider the pair of
 bounded-separated floating-virtual  boxes $b_2$ and $b_2'$, with $b_2$
 contained inside the tight-virtual box $\father(b_1)$ and  $b_2'$
 contained inside the tight-virtual box $\father(b_1')$.
We  add  an edge
 $uv$ where $u$ is a representative node of $b_1$ and $v$
 is a representative node of $b_1'$, while the following is true:
\begin{compactenum}
\item
there are less than $k+1$ \emph{disjoint} edges of the form $xy$ where $x$ is
from $b_2$ and $xy$ is in the general-cone-direction  defined by $\RCone(x,{\cal B})$
such that  $\cal B$
 satisfies the {\em General-Cone-Direction property} w.r.t. $B$.
 Here $B$ is the cone basis such that the representative node $v$
 is contained inside the region $\RCone(b_2, B)$; \textbf{and}
\item
there are less than $k+1$  \emph{disjoint} edges of the form $zw$ where $z$ is
 from $b_2'$ and $zw$ is in a general-cone-direction ${\cal B}'$
 w.r.t. $B'$ where $B'$ is the basis such that the representative node $u$
 is contained inside the region $\RCone(b_2', B')$;
\end{compactenum}

\textbf{Data Structures Used:}
In our method, for each enclosing-box $b$, each dimension $h$,
 and each cone basis $B_i$ of a frame $\cal F$, we store a set of
 at most $k+1$  \emph{disjoint} edges $x_iy_i$
 in an array DisjointCrossingEdge, denoted as $\DisjointCrossingEdge(b,B_i)$,
 such that 
\begin{compactenum}
\item Node $x_i$ is inside the enclosing-box $b$,  
\item Node $y_i$ is in the  cone $\RCone(x_i,B_i)$.
\end{compactenum}

Another set AllCrossingEdge, denoted as $\AllCrossingEdge(b,B_i)$,
 stores all edges
 $xy$ with $x \in  b$, $y \not \in b$ and $y \in \RCone(x,B_i)$.
Our method will ensure that the cardinality
 of $\AllCrossingEdge(b,B_i)$ is at most $(k+1)^2$ for each cone
 direction $B_i$.
Using AllCrossingEdge array $\AllCrossingEdge(b,B_i)$
 for all cones $B_i \in {\cal B}$,
 we can find the maximum number of disjoint edges
 $\DisjointCrossingEdge(b_1,{\cal B})$ in the general-cone-direction 
 $\cal B$, which cross the box $b_1$ (one node inside $b_1$ and one
 node outside of $b_1$).
This can be done using a maximum matching in the bipartite graph over
 two sets of nodes: one set is all nodes inside $b$ and the other set
 is all nodes $y \not \in b$ with an edge $xy$ for some node $x \in b$.
Let $X(b_1)$ be the set of end-nodes of edges
 $\DisjointCrossingEdge(b_1,{\cal B})$ that is inside $b_1$.
Let $X(b'_1)$ be the set of end-nodes of edges
 $\DisjointCrossingEdge(b'_1,{\cal B}')$ that is inside $b'_1$.
Clearly, for $X(b_1)$ and $X(b'_1)$, each has size at most $k+1$.
For any node $u$ and a direction $B$, let $\deg(u,B)$ be the
 number of edges incident onto $u$ in the direction of $B$.
These edges were added because of processing some pairs of
  potential-edge boxes  with smaller distance.
At each step in the algorithm, edges are added based on the degree of
the representative nodes and edges in the array,
$\DisjointCrossingEdge$.
Our detailed method for constructing a $(k,t)$-VFTS
 is presented in Algorithm \ref{alg:kfts-k2}.

\begin{algorithm}[tbhp]
\caption{Constructing a $(k,t)$-VFTS Spanner with Low-weight}
\label{alg:kfts-k2}
\begin{algorithmic}[1]

\STATE
Build the compressed split-tree $CT(V)$.
For each enclosing $k$-box $b$ in the  tree $CT(V)$,
 associate it with $k+1$  representative nodes $\Rep(b)$.
For each box $b$,
 we sort the edge-distances $\boxdist(b,b')$ for $b' \in \Nbr_{\ge}(b)$.
We also sort all the distances
$\{\boxdist(b,b') \mid b' \in \Nbr_{\ge}(b), \text{for every box } b\}$
in time $O(n\log n)$.

We then process  pairs of potential-edge boxes $b$ and $b'$
 in increasing order of their distance $\boxdist(b,b')$.

Initiate the array $\AllCrossingEdge(b,B_i)$ and
$\DisjointCrossingEdge(b,B_i)$ as empty.

\FOR{($r=1$ to $ \sum_{b} \Nbr_{\ge}(b)$)}

\STATE
Consider an enclosing-box $b_1$ with $p$ representative nodes $u_1, u_2
  \ldots u_p, 1\leq p\leq k+1$,
 and the enclosing-box  $b_1' \in \Nbr(b)$ with $q$
 representative nodes $u_1', u_2' \ldots u_q', 1 \leq q\leq k+1$ such
 that the distance $\boxdist(b_1, b_1')$ has rank $r$ among all
 pairs of potential-edge boxes.
Let $b_2=\virtualbox(b_1)$ and $b_2'=\virtualbox(b'_1)$ be  a
 pair of bounded-separated boxes.

Let $B \in {\cal F}$
 be the  base such that the majority representative nodes of $b_1'$
 is contained inside $\RCone(b_2, B)$.
Let $B' \in {\cal F}$
 be the  base such that the majority representative nodes of $b_1$
 is contained inside $\RCone(b_1, B')$.
Let ${\cal B}, {\cal B}' \subset {\cal F}$  be the collection of bases
 satisfying the {\em General-Cone-Direction Property} w.r.t $B$ and
 $B'$ respectively.

\STATE

Let $l$ be the maximum of the cardinality of
  $\DisjointCrossingEdge(b_1,{\cal B})$ and
  $\DisjointCrossingEdge(b_1',{\cal B}')$.

\STATE
Let $Y_k(b, B)$ be the  sorted list of all nodes
$\{u\mid  u \text{ is inside box } b, u \not \in X(b),
 \text{ and } \deg(u,B) \le k \}$, in increasing order of $\deg(u,B)$.
Thus, we update $Y_k(b_1, B)$ and $Y_k(b_1', B')$.

\IF{($p \leq k$) and ($q \leq k$)}
\STATE
Add all edges $u_iu_j$ for all pairs of $u_i$ and $u_j$ inside $b_1$.
Add all edges $u_i'u_j'$ for all pairs of $u'_i$ and $u'_j$ inside $b'_1$.
\FOR{each node $u_i$ in $b_1$ with $\deg(u_i,B) \le k$}
\STATE
We add an edge $u_i u'_j$ to a node $u'_j$ in $b_1'$ where
 $u'_j$ has the smallest degree $\deg(u'_j,B')$ among all
nodes inside $b'_1$.
Update the degree for all nodes and arrays $\AllCrossingEdge$ and
$\DisjointCrossingEdge$.
\ENDFOR
\FOR{each node $u'_j$ in $b'_1$ with $\deg(u'_j,B') \le k$}
\STATE
We add an edge $u_i u'_j$ to a node $u_i$ in $b_1$ where
 $u_i$ has the smallest degree $\deg(u_i,B)$ among all
 nodes inside $b_1$ and $u_iu'_j$ was not added before.
Update the degree for all nodes and arrays $\AllCrossingEdge$ and
$\DisjointCrossingEdge$.
\ENDFOR
\ENDIF

\IF{($p \ge k+1$) and ($q \ge k+1$)}
\STATE
Add $k+1 -l $  edges of the form $u_iu'_i$, where
$l=\max(|\DisjointCrossingEdge(b_1, B)|, |\DisjointCrossingEdge(b'_1, B')|)$.
Here $u_i$, $i \le k+1-l$, are the first $k+1-l$ nodes in  $Y_k(b_1, B)$
 and $u'_i$, $i \le k+1-l$, are the first $k+1-l$ nodes in  $Y_k(b'_1, B')$.
\STATE
Update the set $\DisjointCrossingEdge(b,B_i)$  and
 $\AllCrossingEdge(b,B_i)$  accordingly.
\ENDIF

\IF{($p \leq k$) or ($q \le k$), but not both}
\STATE  Without loss of generality, we assume that $p \le k$ and $q
\ge k+1$.

\STATE
Add all edges $u_iu_j$ for all pairs of $u_i$ and $u_j$ inside $b_1$.
 Here $b_1$ contains exactly $p$ nodes inside.

\FOR{each representative $u_i$ of $b_1$}
\STATE Let  $g'$ be the cardinality of
 $\DisjointCrossingEdge(b_1',{\cal B}')$.

\STATE If $|Y_k(b_1', B')| \ge \min(k+1-\deg(u_i,B), k+1- g')$, we add
$\min(k+1-\deg(u_i,B), k+1- g')$ edges from $u_i$  to
 the first $k+1-\deg(u_i,B)$  nodes  in $Y_k(b_1',B')$;
 otherwise, we add   $|Y_k(b_1', B')|$ edges from $u_i$  to
  nodes  in $Y_k(b_1',B')$.
\STATE Update the array  $Y_k(b_1, B)$ and  $Y_k(b'_1, B')$, and the
set $\DisjointCrossingEdge(b_1',B'_i)$  and
 $\AllCrossingEdge(b_1',B'_i)$  accordingly.
\ENDFOR
\ENDIF

\ENDFOR

\STATE Let $G=(V,E)$ be the graph constructed.
\end{algorithmic}
\end{algorithm}

\subsection{Properties:  ($k,t$)-VFTS Spanner, Low-Weight, and Low-Degree}


\begin{lemma}
The set AllCrossingEdge $\AllCrossingEdge(b,B_i)$ has size at most
$(k+1)^2$ when $k \ge 1$.
\end{lemma}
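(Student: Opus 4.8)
The plan is to bound $|\AllCrossingEdge(b,B_i)|$ as a product of two quantities: the number of possible ``inside'' endpoints of such edges, and, for a fixed inside endpoint, the number of incident edges that fall inside the single cone $\RCone(\cdot,B_i)$.

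First I would argue that every edge $xy\in\AllCrossingEdge(b,B_i)$ has its endpoint $x\in b$ equal to one of the representative nodes $\Rep(b)$ of $b$. This is immediate from the way Algorithm~\ref{alg:kfts-k2} maintains the arrays: $\AllCrossingEdge(b,\cdot)$ is modified only in iterations in which $b$ occurs as one of the two potential-edge boxes being processed, and every edge added in such an iteration has an endpoint that is one of the representatives chosen for $b$. Since by construction a $k$-box receives exactly $k+1$ representatives and a non-$k$-box contains at most $k$ nodes in total, $|\Rep(b)|\le k+1$.

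Second I would fix a representative $x\in\Rep(b)$ and a basis $B_i\in{\cal F}$ and show that at most $k+1$ edges incident to $x$ point into the cone $\RCone(x,B_i)$, i.e. $\deg(x,B_i)\le k+1$ throughout the execution. The mechanism is the degree test: in every branch of the main loop an edge is added incident to a node $u$ in a cone direction only when the current degree of $u$ in that direction is at most $k$, so $\deg(x,B_i)$ is never driven above $k+1$. The only place this needs care is the two branches in which several edges are added in a single iteration (both boxes small, or exactly one box small): there one must check that, within a single iteration, at most one edge incident to $x$ is created in the cone $B_i$ — which holds because the edges incident to $x$ produced in one iteration are distributed among the distinct, pairwise angularly separated bases of the relevant meta-cone, and the inner ``for each representative'' loops visit $x$ at most once each, with a partner that has not been used before.

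Combining the two observations, $|\AllCrossingEdge(b,B_i)|\le|\Rep(b)|\cdot\max_{x\in\Rep(b)}\deg(x,B_i)\le(k+1)(k+1)=(k+1)^2$. I expect the main obstacle to be the second step: a careful, case-by-case audit of the branches of Algorithm~\ref{alg:kfts-k2} (both boxes $k$-boxes, both non-$k$-boxes, exactly one of each) to confirm that no node ever accumulates more than $k+1$ incident edges inside one cone $B_i$ — in particular that the two inner loops, each of which can add an edge incident to a given node in the same iteration, together do not push the per-cone degree of that node above $k+1$.
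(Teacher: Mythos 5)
Your bound hinges on the claim that every edge $xy\in\AllCrossingEdge(b,B_i)$ has its inside endpoint $x$ among the at most $k+1$ nodes of $\Rep(b)$. That claim does not hold, and it is the crux of the gap. The array $\AllCrossingEdge(b,B_i)$ is defined to record \emph{all} edges of the graph with one endpoint inside $b$, the other outside $b$, and the direction in the cone $B_i$ --- not just the edges that were created while $b$ itself was one of the two potential-edge boxes under consideration. In particular, when the algorithm earlier processed a pair $(c,c')$ with $c$ a strict descendant of $b$ in $CT(V)$ and $c'$ far enough away that $c'\not\subset b$, it may have added an edge $xy$ with $x\in\Rep(c)$ and $y\in\Rep(c')$; that edge crosses the boundary of $b$ in direction $B_i$ and must appear in $\AllCrossingEdge(b,B_i)$ (this is exactly why, in the time analysis, the arrays for $b$ are obtained by merging the arrays of $b$'s children). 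There is no mechanism forcing $\Rep(c)\subseteq\Rep(b)$: representatives are reassigned independently at each level, each node serving only a bounded number of boxes. So the inside endpoints of $\AllCrossingEdge(b,B_i)$ range over potentially many more than $k+1$ nodes of $b$, and your product bound $|\Rep(b)|\cdot\max_x\deg(x,B_i)$ is not the right quantity.

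The paper's argument uses a different stopping mechanism. It keeps your second observation (each node inside $b$ accumulates at most $k+1$ incident crossing edges in direction $B_i$) but bounds the array size via the \emph{disjoint}-crossing-edge test: once $\DisjointCrossingEdge(b,B_i)$ reaches size $k+1$, the algorithm refuses to add further edges crossing $b$ in that direction, and the claim is that having $(k+1)^2$ edges in $\AllCrossingEdge(b,B_i)$ --- together with the per-node degree bound --- already forces $k+1$ vertex-disjoint crossing edges, which triggers the cutoff. So the correct structure is a saturation argument on the matching number of the crossing-edge bipartite graph, not a count over $\Rep(b)$. If you want to repair your proof along your own lines you would need to replace ``endpoint in $\Rep(b)$'' by a bound on the number of distinct inside endpoints, and that bound comes precisely from the disjoint-edge cutoff, so you end up reproving the paper's argument.
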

\begin{proof}
Observe that, for each node $u$ in the box
$b$, we add at most $k+1$ edges in any direction $B_i$.
The moment the  set $\AllCrossingEdge(b,B_i) $ has size $(k+1)^2$,
 we must have at least $k+1$ disjoint edges crossing the box $b$ in
 the direction $B_i$.
This follows from the observation that we will not add any more edges
in our algorithm if  $\DisjointCrossingEdge(b,B_i)$ is at least $k+1$.
\end{proof}

We show that the constructed graph is $(k,t)$-VFTS.
Recall that a structure is called $(k,t)$-VFTS if for every pair of
nodes $u$ and $v$ either (1) edge $uv$ is presented or (2) there exist
$k+1$ node disjoint paths connecting $u$ and $v$ such that the length
of each path is at most $t \|uv\|$.

\begin{theorem}
\label{theo:spanner}
The graph $G$ is a $(k,t)$-VFTS.
\end{theorem}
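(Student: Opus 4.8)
The plan is to mirror the proof of Theorem~\ref{theo:spanner-1}, replacing each single path by a bundle of $k+1$ vertex-disjoint paths, and to argue by strong induction on the rank of $\dist(u,v)$ among all pairwise distances. Fix a pair $u,v$ with $uv\notin E$ and let $b_1\ni u$, $b_1'\ni v$ be the pair of potential-edge boxes defined by bounded-separated floating-virtual boxes $b_2\supseteq b_1$ and $b_2'\supseteq b_1'$; exactly as in the $k=0$ case, $\dist(u,v)\ge\dist(b_2,b_2')=\boxdist(b_1,b_1')\ge\ldist\max(\size(b_2),\size(b_2'))$, so every box and every edge the recursion will touch is much shorter than $\dist(u,v)$ once $\ldist$ is large. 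Since $uv$ was not inserted, the stopping test in Algorithm~\ref{alg:kfts-k2} guarantees that, by the time $(b_1,b_1')$ is processed, either $\DisjointCrossingEdge(b_1,{\cal B})$ contains $k+1$ vertex-disjoint crossing edges leaving $b_2$ in the general-cone-direction ${\cal B}$ toward $b_2'$, or $\DisjointCrossingEdge(b_1',{\cal B}')$ contains $k+1$ such edges leaving $b_2'$ toward $b_2$ (the sub-$k$-box cases are analogous, with the small box's internal complete graph and its accumulated outgoing edges playing the role of the crossing bundle); assume w.l.o.g.\ the first alternative and call these edges $w_1^{(i)}w_2^{(i)}$, $1\le i\le k+1$, with $w_1^{(i)}\in b_2$ and $w_2^{(i)}\notin b_2$.

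Next I would build the $i$-th path of the bundle as the concatenation of three pieces: a short piece from $u$ to $w_1^{(i)}$ staying in a neighborhood of $b_1$; the crossing edge $w_1^{(i)}w_2^{(i)}$; and a piece from $w_2^{(i)}$ to $v$. For the last piece, since $w_2^{(i)}$ lies outside $b_2$ while the meta-cone of angular span $\le\theta$ anchored at $w_1^{(i)}$ contains all of $b_1'$ (hence $v$) and also $w_2^{(i)}$, the pair of potential-edge boxes carrying $w_2^{(i)}$ and $v$ has strictly smaller edge-distance than $(b_1,b_1')$ — precisely the monotonicity exploited in Case~2 of Theorem~\ref{theo:spanner-1} — so the induction hypothesis yields $k+1$ vertex-disjoint $w_2^{(i)}$–$v$ paths, each of length $\le t\,\dist(w_2^{(i)},v)$; iterating the cone step as in the $k=0$ proof, after finitely many hops of geometrically shrinking $v$-distance one reaches a direct representative edge, so this piece is well defined. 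The length bookkeeping is then literally the computation following~(\ref{cond:parameter-spanner}): the short pieces near $u$ (and the symmetric short pieces near $v$) have total length $O(\sqrt d\,\size(b_1))=O(\dist(u,v)/\ldist)$, the telescoping sum over the crossing edges is controlled by $1-2\sin(\theta/2)-2\sqrt d/\ldist$, and the stretch stays $\le t$ under exactly the conditions in~(\ref{eqn:all-conditions-spanner}).

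The step I expect to be the real obstacle is vertex-disjointness of the $k+1$ assembled paths, which is where $k\ge1$ genuinely departs from $k=0$. The crossing edges $w_1^{(i)}w_2^{(i)}$ are pairwise disjoint by construction (that is the point of maintaining $\DisjointCrossingEdge$ and always extending a matching of size $<k+1$), but the short piece from $u$ to $w_1^{(i)}$ and the piece from $w_2^{(i)}$ to $v$ must be chosen disjointly across $i$. For the $u$-side I would prove a \emph{fan lemma}: for a $k$-box $b$ and any $x\in b$ there are $k+1$ vertex-disjoint paths from $x$ to the $k+1$ representatives of $b$ (one per representative), each of length $O(\sqrt d\,\size(b))$. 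This follows by applying the induction hypothesis to each pair $(x,\Rep_j(b))$ — legitimate since $\dist(x,\Rep_j(b))\le\sqrt d\,\size(b)<\dist(u,v)$ — taking the union $H'$ of all the resulting short disjoint paths, observing that no $k$ vertices can separate $x$ from $\{\Rep_1(b),\dots,\Rep_{k+1}(b)\}$ inside $H'$ (such a cut would separate $x$ from some single $\Rep_j(b)$, contradicting the $(k{+}1)$-connectivity of that pair in $H'$), and invoking Menger's theorem in $H'$ to extract a genuine fan whose edges are all short (each lies on a path of length $\le t\sqrt d\,\size(b)$, hence within $O(\size(b))$ of $b$). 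The mirror fan at $v$ handles the $v$-side, and the same device re-fans at each chaining hop. One subtlety to dispatch is that $u$ (resp.\ $v$) need not itself be a representative of $b_1$ (resp.\ $b_1'$); this is absorbed by arranging the representative/credit scheme so that every node lies in some $k$-box for which it is a representative, so the fan lemma applies with $b$ the smallest such box containing $u$, whose size is still $O(\dist(u,v)/\ldist)$. Collecting all length contributions and re-deriving the inequalities in~(\ref{eqn:all-conditions-spanner}) — with the same constants, since the geometry is unchanged — completes the argument.
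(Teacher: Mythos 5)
Your proposal follows essentially the same architecture as the paper's proof: strong induction on the rank of the (box-)distance $\boxdist(b_1,b_1')$, a case split depending on whether each box contains more than $k$ points, and an appeal to Menger's theorem to convert "$k{+}1$ disjoint crossing edges plus $(k{+}1)$-connectivity inside each endpoint neighbourhood (supplied by the induction hypothesis)" into $k{+}1$ vertex-disjoint $u$--$v$ paths, with length bookkeeping deferred to the computation from Theorem~\ref{theo:spanner-1}. The main presentational difference is that you unfold Menger into an explicit fan lemma and assemble paths in three pieces, whereas the paper applies Menger once, globally, to the union of all the inductively-obtained short disjoint paths and the crossing edges; these are the same argument. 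One small correction: your "subtlety" that every node should be a representative of some $k$-box is not actually needed (and is not guaranteed by the paper's representative-selection scheme). The fan lemma, as you state it, only requires $x\in b$, not that $x$ be a representative of $b$; the induction hypothesis already applies to the pairs $(x,\Rep_j(b))$ because $\dist(x,\Rep_j(b))\le\sqrt d\,\size(b)<\dist(u,v)$, which is precisely how the paper handles the case where $u,v$ are not representatives. Also, a careful reader should note that the length bound on the Menger-extracted fan paths ("each $O(\sqrt d\,\size(b))$") is not automatic — a Menger path inside $H'$ could in principle traverse many short edges — but the paper is equally terse at this point ("Similar to the proof of Theorem~\ref{theo:spanner-1}…"), so this is a shared, not a new, gap.
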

\begin{proof}
We prove that for every pair of nodes $u$ and $v$, either edge $uv \in
G$ or there are $k+1$ internally vertex disjoint paths $\path_1$,
$\path_2$, $\cdots$, $\path_{k+1}$ connecting them and each path has
length at most $t \cdot \| uv \|$.
In other words, $G$ is a $(k,t)$-VFTS.
We will prove this by induction on the distance between
$b$ and $b'$, the bounded-separated floating-virtual boxes containing $u$ and
$v$, respectively. We will refer to this distance as the
distance between $u$ and $v$.
It is easy to show that the pair of nodes $v_1,v_2$ with the shortest
 distance  will be in $G$.

Assume that for any $1 \le j \le r$, for the pair of nodes whose
distance is the $j$-th smallest, our statement is true.
We then consider the pair of nodes $u$ and  $v$ with the $(r+1)$-th
smallest distance.

Consider the boxes $b$ and $b'$ such that $u \in b$ and $v \in b'$ and
$b$ and $b'$ are bounded-separated floating-virtual boxes and let $b' \in
\RCone(b,B)$ for some cone basis $B \in {\cal F}$.
Assume for simplicity that the direction  defining  $B$ is horizontal
and $b$ is aligned with the coordinate axis.
For notational simplicity, let $|b|$ be the number of nodes from $V$
that are inside the box $b$.

We have three complementary cases depending on the number of points in
$b$ and $b'$: (1) $|b|=p \leq k$, (2)  $|b'|=q \leq k$, and (3)
  $p > k+1$ and $q >k+1$.

First consider the case $|b|=p \leq k$.
The case $|b'|=q \leq k$ follows the same proof.
When we processed a box $b'$ from $\Nbr_{\ge}(b)$,  let $q=|b'|$.
If $q  \le k+1-g$ (where $g$ is the number of edges incident on $u$ in
 the direction of $B$),
 then edges are added between the node $u$ and  all nodes of
 $b'$.
Thus, we have an edge $uv$ already.
Otherwise we know that node $u$ is connected to
 $k+1$ nodes, say $w_1$, $w_2$, $\cdots$, $w_{k+1}$ with the condition
 that the distance between $u$ and $w_i$ is no more than $uv$ since we
 processed the boxes $b'$ in $\Nbr(b)$ in increasing order of
 distance to $b$.
Additionally, since we only focus on edges in one specific cone direction, we
 know the length of $w_i v$ is also less than that of $uv$.
Then the general proof below applies.

In general, w.l.o.g.,
 assume that neither $u$ nor $v$ are representative nodes in $b$  and $b'$.
In this case, we must have $p > k+1$ and $q >k+1$.
Let the $k+1$ representative nodes for $b$ and $b'$ be
 $U = \{ u_1 \ldots u_{k+1} \}$
 and $ U' = \{ v_1, \ldots v_{k+1} \}$  respectively.
The following cases arise
\begin{enumerate}
\item[Case 1:]  
For all $ i \le k+1$ there are edges $u_iv_i$
  connecting $u_i \in U$ and $v_i \in U'$.
Note that since $b$ and $b'$ are bounded-separated, 
 $\dist(u,u_i) < \dist(u,v)$, for $u_i \in U$
 and $\dist(v,v_i) <\dist(u,v)$ for $v_i \in U'$.
Thus, by induction, there exist $k+1$ disjoint paths between $u$ and
$u_i \in U$,
and there exist $k+1$ disjoint paths between $v$ and $v_j \in U'$.
Since $\forall i$, $u_iv_i$ is part of $G$, by using Mengers theorem
it is easy to
see that there are $k+1$ vertex disjoint
paths between $u$ and $v$ in $U'$.
It is also easy to show that the length of each of such $k+1$ paths is
at most $t \dist(u,v)$.

\item[Case 2:] There is an $i$, such that $u_iv_i$ does not exist. If $l$
  edges of the form
$u_iv_i$ are added where $u_i$ and $v_i$ are representatives of
$b$ and $b'$ then one of the following subcases arises:

\begin{compactenum}
\item
There are $k+1-l$ edges of the form $(x,y)$ where $x \in b$ and
$y \in \RCone(x,{\cal B})$, $\cal B$ in the {\em general cone
  direction} as $B$.
Let the nodes inside $b$ be $x_1, x_2 \ldots x_m, m=k+1-l$ and
 the edges satisfying the preceding condition be $x_iy_i$.
Here $y_i$ is a node inside some other box in the direction of $B$
 of the box $b$.
Note that distance between $y_i, 1\leq i \leq m$ and nodes $ v_j$ (by
 measure of the distance $\boxdist(b,b')$ for two edge-boxes $b$
 containing $y_i$ and $b'$ containing $v_j$)
 is less than the distance between $u$ and $v_j$.
We can thus apply induction to show that  there are $k+1$ vertex disjoint paths
between $u$ and $x_j, \forall j,  1 \leq j \leq m$. And by induction there are
$k+1$ node disjoint paths between $y_i$ and $v_j, \forall j,i$.
The result follows from Mengers theorem that there are $k+1$ node
 disjoint paths connecting the $k+1$ representatives $U$ to $k+1$
 representatives $U'$.
\item
A similar result is true when there are edges $zy$, $y \in b'$ and
$z \in \RCone(z, {\cal B}')$, where
 ${\cal B}'$ in the same general direction as $B'$,
 where $b \in \RCone(b', B')$.
\end{compactenum}
\end{enumerate}
This finishes our proof that the structure is $k$-fault tolerant.
Similar to the proof of Theorem \ref{theo:spanner-1}, we can show
 that  for every pair of nodes $u$ and $v$, each of the $k+1$
 disjoint paths found in the preceding constructive proof has a length
 at most $t \| u-v\|$.
Thus,  the structure we constructed is a $(k,t)$-VFTS.
\end{proof}

\begin{theorem}
\label{theo:low-weight-1}
In the graph $G=(V,E)$ constructed by our method, $\weight(E) = O(k^2)
\cdot \weight(\MST)$ when $k \ge 1$.
\end{theorem}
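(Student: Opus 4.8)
The plan is to lift the weight analysis of Section~\ref{subsec:weight-1} --- in particular Theorem~\ref{theo:isolation-empty-region} --- from the $k=0$ case. As there, I would first partition $E$ into $O(1)$ groups $E_1,\dots,E_g$ whose edges satisfy the near-parallel and length-grouping properties (the same cone-and-bucket partition), so that it suffices to prove $\weight(E_i)=O(k^2)\weight(\MST)$ for each group; note $\weight(SMT_i)\le\weight(\MST)$ since $\MST$ is a connected structure spanning a superset of the endpoints of $E_i$, hence a Steiner tree for them.

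The new ingredient is a $k$-weakened empty-cylinder property: every edge $uv$ produced by Algorithm~\ref{alg:kfts-k2} admits a protection cylinder of standard size (height $\ge\eta_1\dist(u,v)$, radius $\ge\eta_2\dist(u,v)$, axis a sub-segment near the midpoint of $uv$) whose interior contains at most $O(k^2)$ endpoints of edges of $E_i$ and which is pierced by at most $O(k^2)$ edges of $E_i$. This should follow from the algorithm's stopping rule exactly as Lemma~\ref{lemma:empty-cylinder} gave emptiness for $k=0$: at the moment $uv$ is inserted the meta-cone ${\cal B}(b_2,b_2')$ contains fewer than $k+1$ \emph{disjoint} crossing edges of the floating-virtual box $b_2$, so by K\"onig's/Menger's theorem a set of $\le k$ nodes covers them all; each such node carries $O(k)$ edges spread over the $O(1)$ cone bases, giving $O(k^2)$ crossing edges in this direction, and the invariant $|\AllCrossingEdge(b_2,B_i)|\le(k+1)^2$ maintained by the algorithm keeps this true for the rest of the run. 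Choosing $\ldist$ large enough (inequality~(\ref{cond:parameter-empty-cylindar})) pushes the cylinder deep enough into the cone that every near-parallel edge of $E_i$ meeting it is one of these accounted-for crossing edges. I would also record the companion $k$-weakened empty-region property: within distance $\epsilon_1\dist(u,v)$ of an endpoint of $uv$ there are at most $O(k)$ edges of $E_i$ of comparable length, because each pair of potential-edge boxes emits at most $k+1$ edges and, the floating-virtual boxes being nested-or-disjoint, only $O(1)$ box-pairs of a given scale lie near a fixed point.

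Given these two properties I would build a conflict graph on $E_i$, joining two edges whenever one violates a hypothesis of Theorem~\ref{theo:isolation-empty-region} for the other: an endpoint of one lies in (or the edge pierces) the other's protection cylinder, or the two have comparable length and lie within $\epsilon_1$ times that length. The weakened empty-cylinder property bounds the first type of neighbour by $O(k^2)$ and the weakened empty-region property bounds the second by $O(k)$ --- the cross-scale contribution being controlled by a geometric series over length buckets, using the nested-or-disjoint structure of the associated cylinders as in the proof of Theorem~\ref{theo:isolation-empty-region} --- so the conflict graph has maximum degree $O(k^2)$ and is properly $O(k^2)$-colourable. Each colour class is conflict-free, hence satisfies the empty-region and empty-cylinder properties, so Theorem~\ref{theo:isolation-empty-region} bounds its weight by $O(\weight(SMT_i))=O(\weight(\MST))$; summing over the $O(k^2)$ colour classes and the $O(1)$ groups yields $\weight(E)=O(k^2)\weight(\MST)$. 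The main obstacle --- and the reason this matches the $\Omega(k^2)\weight(\EMST)$ lower bound of Czumaj--Zhao rather than incurring their extra $\log n$ factor --- is the $k$-weakened empty-cylinder property, specifically getting the obstruction count down to $O(k^2)$ and not $O(k^2\log n)$; this is where the bounded-cardinality $\AllCrossingEdge$ data structure is indispensable, and verifying that the cylinder near the midpoint of $uv$ really only sees those $O(k^2)$ crossing edges (a piercing edge's tail could a priori lie outside $b_2$) is the delicate geometric point.
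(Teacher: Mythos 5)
Your plan shares the paper's central idea---split $E$ into $O(k^2)$ sub-families, each of which satisfies the (exact, $k{=}0$) empty-cylinder and empty-region properties, then apply Theorem~\ref{theo:isolation-empty-region} to each---but the mechanism you propose for producing those sub-families is genuinely different, and it has a gap.

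The paper does \emph{not} colour a conflict graph. It assigns each edge to one of $(k+1)^2$ groups $E_{i,j}$ \emph{at insertion time}, exploiting the algorithm's structure: a pair of potential-edge boxes emits at most $(k+1)^2$ edges, naturally indexed by the pair $(i,j)$ of representative indices; in the case where both boxes are $k$-boxes, a newly added edge $u_iv_i$ is placed in a group \emph{different} from the $\le 2l$ groups containing the blocking crossing edges recorded in $\DisjointCrossingEdge$, which is possible because $2l+(k+1-l)\le 2(k+1)\le (k+1)^2$ for $k\ge 1$. That bookkeeping ties the group assignment directly to the stopping rule and guarantees that, when an edge enters its group, no same-group edge crosses $b_2$ or $b_2'$ in the relevant cone direction.

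Your conflict-graph route does not obviously reach $O(k^2)$ colours, for two reasons. First, the degree bound breaks in the ``reverse'' direction: to get a proper colouring that yields the empty-cylinder property, the conflict graph must join $e$ and $e'$ whenever \emph{either} has an endpoint in the other's protection cylinder. You bound only the forward direction (endpoints of other edges inside $e$'s cylinder). For the reverse, fix an endpoint $p$ of $e$ and ask how many longer edges $e'$ have $p$ inside their cylinder: at each length scale $L\gtrsim |e|$ the weakened empty-region property allows $O(k)$ such $e'$, and the number of scales is $\Theta(\log n)$, so the conflict degree is $O(k\log n)$, not $O(k^2)$. This is precisely the $\log n$ factor of Czumaj--Zhao that the theorem is supposed to eliminate; the geometric-series remark you make concerns the \emph{weight} of removed edges inside Theorem~\ref{theo:isolation-empty-region}, not the conflict \emph{degree}, so it does not rescue the colouring. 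Second, even the forward $O(k^2)$ count is shaky: you bound it by counting edges that cross $b_2$ in direction $\cal B$ via $\AllCrossingEdge$, but a node sitting deep in the cone inside $uv$'s cylinder can be an endpoint of a short edge of $E_i$ that never touches $b_2$, and such edges are invisible to $\AllCrossingEdge(b_2,\cdot)$. The paper's explicit assignment sidesteps both issues by never reasoning about a global conflict graph at all.
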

\begin{proof}
Consider the  edges added at every node in $CT(V)$.
We group edges into $O(k^2)$ groups and will show that the total length of
edges in each group is at most $O(\weight(\MST))$.

Note that for a pair of potential-edge boxes $b$ and $b'$,
 it is possible that $O(k^2)$ edges are added during
 the construction procedure.
This happens when there are $p < k+1$ (with $p = \Theta(k)$)
 representatives in one  box during the procedure.
As in Section \ref{sec:proofs-k1}
 the edges added to connect  representative nodes of
 pairs of potential-edge boxes can be partitioned into $O(k^2)$
 groups such that the edges in each group satisfy the properties
 outlined:  \textbf{the near-parallel property,
 length-grouping property,  empty-region property, and empty-cylinder
 property}.

To partition edges into groups, we assume that the representative
 nodes $U$  inside an enclosing-box $b$ are numbered as $u_1$, $u_2$,
 $\cdots$, $u_p$,  where $p \le k+1$ and the representative
 nodes $U'$ inside an enclosing-box $b'$ are numbered as $v_1$, $v_2$,
 $\cdots$, $v_q$,  where $q \le k+1$.
We group edges sequentially in increasing order of the distance
 between
 the pair of the potential-edge boxes $b$ and $b'$.
For simplicity, we will only consider one cone base $B$ and all edges
 added  in the direction of $B$.
Since there are only a constant number of cone bases, if the edges
 added in the direction of any cone is at most $O(k^2)\weight(\MST)$,
 the total weight of all edges is still $O(k^2)\weight(\MST)$.
For each edge $uv$ in the cone direction of $B$, we will put it into
one of the $k^2$ groups: $E_{i,j}$, $1 \le  i, j \le k$.

Notice that, for a pair of potential-edge boxes $b$ and $b'$,
 we add edges based on rules defined for three different
 cases  based on $|b|=p$, $|b'|=q$:
Case 1) $|b|=p \leq k$, Case 2)  $|b'|=q
 \leq k$, and Case 3)   $p \ge k+1$ and $q \ge k+1$.

In case 1, $|b| \leq k$, for each $u_i$,
 we add $\min(k+1, q)$ edges $u_iv_j$ for some nodes $v_j \in U'$.
Then the edge $u_iv_j$ is added to group $E_{i,j}$ for $i \le k+1$ and $j
 \le k+1$.
Notice that we also added $p(p-1)/2 < k^2$ edges inside the box $b$.
Observe that each such added edge (inside the box $b$) has length at
 most a small constant fraction of the edge $u_iv_j$ (added to connect
 representative nodes of $b$ and $b'$).
We will \emph{not} add these edges to any group $E_{i,j}$.
A simple charging
 (charge the total edge length to one such crossing edge $u_iv_j$)
 method shows that these omitted edges have total edge length at most
 $O(k^2)$ times the total edge length of edges in a group $E_{i,j}$.
We  will  show that $\weight(E_{i,j})$ is at most $\weight(\MST)$.

For case 2, for each $v_i$ we will add $\min(k+1, p)$ edges $u_jv_i$
 for some nodes $u_j\in U$ and then
 we similarly add the edge $u_jv_i$ to group $E_{j,i}$.
Thus, all edges added in these cases are put into different groups,
 \ie, for any group $E_{i,j}$, and any pair of potential-edge boxes
 $b$ and $b'$, we have at most one edge $uv$ with $u$ from $b$ and $v$
 from $b'$.

The third case is that both boxes $b$ and $b'$ have at least $k+1$
nodes inside.
Let $b_2$ and $b_2'$ be the pair of bounded-separated boxes that contain
  the  boxes $b$ and $b'$ respectively.
In this case, we will add $k+1-l $ edges,  $u_iv_i$,
 where  $u_i \in U$ and $v_i \in V$.
Recall that we added these $l$ edges because
\begin{compactenum}
 \item there are at most $l \le k+1$ disjoint 
 edges already leaving the floating-virtual box $b_2$ in
the direction of the cone $B$, and
 \item there are at most $l \le k+1$
 disjoint edges already leaving the floating-virtual box
 $b_2'$ in  
 the direction of the cone $B'$.
\end{compactenum}
Notice that those  edges were added when processing a pair of
potential-edge boxes with shorter distance.
Thus they have already been put into some groups 
 (at most $2l$ different groups, since there are  at most $2l$ such edges).
Then a newly added edge  $u_iv_i$ is  put into a group  that is
 different from those $2l$ groups.
Notice that this is always possible since $2l + k+1 - l \le 2(k+1)
 \le (k+1)^2$ when $k \ge 1$.
Thus, when we put an edge $u_iv_i$ into some group, we know that
 there is no edge $xy$ in the same group such that
 $x$ is inside $b$ (resp. $b'$) and edge $xy$ crossing the boundary of
 floating-virtual box $b_2$ (resp. $b_2'$) in the direction of $\cal B$
 (resp. ${\cal B}'$).
Then similar to Theorem \ref{theo:isolation-empty-region}, we can
 prove that  each group $E_{i,j}$ of  edges can be further partitioned
 into a constant number of subgroups such that each subgroup of edges
 satisfies  all properties  outlined previously, and thus
 the total weight of all edges in each group  $E_{i,j}$ is at most
$O(\weight(\MST))$.
This finishes the proof of the theorem.
\end{proof}

\begin{theorem}
\label{theorem:bound-degree-k}
Each node $v \in V$ has a degree  $O(k)$ in the graph $G$ when $k \ge 1$.
\end{theorem}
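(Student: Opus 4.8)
The plan is to adapt the degree argument of Theorem~\ref{them:bound-degree-1} to the fault‑tolerant setting. There, a node lay in only $O(1)$ processed boxes and picked up at most one edge per cone direction; here each $k$‑box carries $k+1$ representatives and Algorithm~\ref{alg:kfts-k2} may insert $\Theta(k)$ edges per box and per cone direction, so the per‑direction bound becomes $O(k)$ and, with $|{\cal F}|=O(1)$ directions, the total is $O(k)$. Concretely, fix a node $u$ and split the edges of $G$ incident to $u$ into \emph{internal} edges --- those added inside a single enclosing‑box $b_1$ in one of the ``$|b_1|\le k$'' branches, where the complete graph on the $\le k$ nodes of $b_1$ is inserted --- and \emph{crossing} edges --- those added between (representative or low‑degree) nodes of a pair of potential‑edge boxes $b_1\neq b_1'$. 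Since every edge of $G$ is of one of these two kinds, it suffices to bound each by $O(k)$.

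For internal edges, observe that the enclosing‑boxes of $CT(V)$ containing $u$ are nested (they lie on the root‑to‑leaf path to the leaf box of $u$), and the non‑$k$‑boxes among them form a prefix of this chain, so there is a unique largest non‑$k$‑box $\hat b(u)$ containing $u$, with $|\hat b(u)|\le k$ by definition. Every internal edge incident to $u$ has both endpoints inside some processed non‑$k$‑box $b_1\ni u$, hence inside $\hat b(u)$; therefore all such edges lie in the complete graph on the $\le k$ nodes of $\hat b(u)$, and at most $k-1$ of them touch $u$.

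For crossing edges, fix a base $C\in{\cal F}$ and consider the crossing edges $uw$ with $w\in\RCone(u,C)$. Two facts are needed. (i)~\emph{Geometry:} if such an edge is added for a bounded‑separated pair $b_2,b_2'$ with, say, $u\in b_2$, then by the General‑Cone‑Direction property $b_2'\subseteq\RCone(x,{\cal B})$ for every $x\in b_2$, so $w$ falls into $\RCone(u,C)$ for some $C$ in the $O(1)$‑size meta‑cone ${\cal B}(b_2,b_2')$; thus every crossing edge at $u$ is charged to exactly one of the $O(1)$ bases. (ii)~\emph{Bookkeeping:} Algorithm~\ref{alg:kfts-k2} adds a crossing edge incident to a node only while that node's recorded degree in the pertinent cone direction is $\le k$ (the lists $Y_k(\cdot,\cdot)$ are exactly the nodes of current degree $\le k$ not already matched by a disjoint crossing edge), and each of the three branches (``$p\ge k+1$'', ``$p\le k,\ q\ge k+1$'', ``$p\le k,\ q\le k$'') inserts at most $O(k)$ edges incident to any single node in one step. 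Hence the recorded count $\deg(u,C)$ is a nondecreasing integer that is incremented only while it is $\le k$ and by at most $O(k)$ at a time, so it freezes at a value $k+O(k)=O(k)$; summing over the $O(1)$ bases $C$ shows $u$ is incident to $O(k)$ crossing edges. Combining the two parts, the degree of $u$ in $G$ is $O(k-1)+O(k)=O(k)$.

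The main obstacle is fact (ii), specifically reconciling the cone index the algorithm actually tracks --- the ``majority'' base $B$ of a pair of potential‑edge boxes, resp.\ the meta‑cone ${\cal B}$ --- with the geometric base $C$ of fact (i) that contains the far endpoint $w$, so that the threshold ``$\le k$'' the algorithm enforces really does bound $\deg(u,C)$ for the $C$ to which the edge is charged; this is precisely where the small angular span $\theta$ of the meta‑cones (Lemma~\ref{lemma:general-cone-direction-angle}) and the bounded‑separation constants $\ldist,\udist$ enter, keeping both the number of bases involved and the $B$–$C$ discrepancy $O(1)$. One must also verify that the per‑step increase of $\deg(u,C)$ is genuinely $O(k)$ and is not amplified by a box recurring in many potential‑edge pairs, which uses $|\Nbr_{\ge}(b)|=O(1)$ together with the fact, analogous to Lemma~\ref{lemm:representative-times}, that each node is a representative of only $O(1)$ boxes of $CT(V)$.
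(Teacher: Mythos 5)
Your proposal takes a genuinely different route from the paper's proof, and the route has a real gap at exactly the point you flag as ``the main obstacle.''

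The paper does not use a degree-freezing argument at all. It splits into three cases based on whether the boxes defining a pair are $k$-boxes or non-$k$-boxes, and for the two cases involving a $k$-box it runs an inductive \emph{credit-counting} argument (Lemmas~\ref{lemm:enough-credits} and~\ref{lemm:enough-credits-k-k}): every node inside a $k$-box is given $\Theta(k)$ TYPE-1- and TYPE-2-credits per cone direction; an inductive invariant shows that every $k$-box always retains enough free credits; and a separate \emph{balance} observation (``the degree difference among all nodes is at most $1$'') distributes the globally bounded number of crossing edges evenly over the $\ge k+1$ nodes of the box. That is a fundamentally different structure from your internal/crossing split plus per-node per-cone monotone-counter argument. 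The credit argument is global --- it bounds the total number of edges leaving a box in a given direction --- precisely because a $k$-box may be paired with many non-$k$-boxes in roughly the same direction, and the per-pair cap does not aggregate to a per-node cap without some global accounting.

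The gap in your version is the step you acknowledge but do not close: you need that ``$\deg(u,C)$ is incremented only while $\le k$,'' but the algorithm's guard ($Y_k$ membership, and the quantity $k+1-\deg(u_i,B)$) is expressed in terms of the \emph{majority base} $B$ of the pair currently being processed, while the edge actually inserted lies in some cone $C$ of the meta-cone ${\cal B}$ around $B$, and it is $\deg(u,C)$ --- not $\deg(u,B)$ --- that is incremented when the arrays are updated. If a sequence of pairs all present the same majority base $B$ to $u$ but the resulting edges happen to land in adjacent cones $C\neq B$, the guard $\deg(u,B)\le k$ never trips, yet $\deg(u,C)$ keeps growing. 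Saying ``the angular span $\theta$ keeps the $B$--$C$ discrepancy $O(1)$'' bounds the \emph{number of bases} $B$ that could be charged to a given geometric cone $C$, but it does not bound the \emph{number of pairs} with a fixed majority base $B$ that can contribute to $u$ without ever raising $\deg(u,B)$; that is exactly what the paper's credit invariant is built to control. To repair your argument you would need either to show that the algorithm's counter and the geometric counter agree (e.g., by arguing that each batch of edges added for a pair with base $B$ does increment $\deg(u,B)$ by a positive amount), or to replace the pure degree-freezing claim with some amortized accounting over the boxes along the path in $CT(V)$ that contains $u$, which is in spirit what the paper's TYPE-1/TYPE-2 credits do. Your treatment of the internal edges (all such edges lie inside the maximal non-$k$-box $\hat b(u)$, hence at most $k-1$ of them touch $u$) is correct and is a neater formulation than the corresponding part of the paper's argument.
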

\begin{proof}
Notice that for each node $u \in V $, we  add edges incident to
 it only when (1)  it is a node contained inside some non-$k$-box,
 or (2) when it is a representative node in some $k$-box.
For simplicity, we will only concentrate on edges added in the
 direction of one cone.

\begin{compactenum}
\item
We first study how many edges will be added to a node $u$ when $u$ is
 inside some non-$k$-box.
When node $u$ is inside a non-$k$-box, let $b_u$ be the largest
 non-$k$-box that contains $u$ inside.
Assume that $b_u$ contains $p < k+1$ nodes inside.
According to our method, we will add $p-1$ edges to other $p-1$ nodes
 inside the box $b_u$, and add
  $\min(k+1-g_u, q)$ edges for a potential-edge box $b'$ with $q$
 nodes inside, where $g_u$ is the number of edges incident on $u$, in
 the direction of $B$, and with shorter distances.
We have thus added at most $p-1 + k+1=O(k)$ edges in the direction of $B$
 when we have processed all enclosing-boxes $b'$ in $\Nbr(b)$, when $u$ is a node
 inside some non-$k$-boxes.

\item
We then study how many edges $uv$ will be added to $u$ when we
 process a pair of potential-edge boxes $b$ and $b'$ such that $b$ is
 a $k$-box, $b'$ is a non-$k$-box, such that  $u$ is
 inside the $k$-box $b$ and $v$ is inside the non-$k$-box $b'$.

Let $b_u^1=\father(b_u)$ be the smallest enclosing-box that contains $u$
 inside and is a $k$-box.
In other words, box $b_u^1$ is a leaf-$k$-box.
We adopt a charging argument where we assign credits to nodes
inside $k$-boxes to account for edges added to those nodes.
With each node inside the $k$-box is assigned
$(k+1)$ TYPE-1-credits for each cone direction
(another set  of TYPE-2-credits  is assigned in the next case).

Now let us see what will happen when we process the enclosing-box $b_1=b_u^1$.
The total free TYPE-1-credits of this box $b_u^1$ 
required to charge of edges in a certain cone direction is  $|b_u^1| (k+1)$ 
We will prove by induction that

\medskip

\begin{lemma}
\label{lemm:enough-credits}
 Every $k$-box $b$ will have  at least $(k+1)(k+1) - e_2 $ free TYPE-1-credits
 where $e_2$ is the number of edges that 

 (1) are added during the processing of a pair of potential-edge boxes $c$ and
 $c'$, where $c$ is a descendant $k$-box of $b$ and $c'$ is a
 non-$k$-box in $\Nbr(c)$, and

 (2) cross the boundary of $b$ in
 the given cone direction when we  process this box $b$  and its
 potential-edge boxes to add some edge after its children boxes have
 been processed.
\end{lemma}
\textbf{Proof of Lemma \ref{lemm:enough-credits}}:
This is clearly true for all leaf-$k$-boxes since it has been assigned
 $(k+1)$ TYPE-1-credits  for each node and a given direction, and it has at
 least $k+1$ nodes inside.
In the rest of the proofs, when we count the number of edges crossing
 the boundary of $b$, we will only count the edges that are added
 during the  processing of a pair of potential-edge boxes $b$ and $b'$ for some
 non-$k$-box $b'$.
The edges added when $b'$ is a $k$-box will be studied in the
 subsequent case.

Observe that when we process $b_1$  and all boxes
 $b_1' \in \Nbr(b_1)$ where $b_1'$ is a non-$k$-box,
 the total number of edges added to the node $u \in b_1$
 in any cone $B$ direction is at most $k+1$.
Consider  the case $b_1= b_u^1$.
Let $b_1'$, $b_2'$, $b_3'$, $\cdots$, $b_i'$, $b_{i+1}'$,
 $\cdots$, $b_{t}'$ be all potential-edge boxes that are non-$k$-boxes
 (with respect to
 $b_1$) in the direction of a given cone $B$, \ie, $b_j' \in
 \RCone(b_1, B)$ and $|b_j'| \le k$.
We further assume that $\dist(b_j', b) < \dist(b_{j+1}', b) $
 for $1 \le j \le t-1$.
Notice that some of these enclosing-boxes may be inside the box $\father(b_1)$,
 while some of them may be outside of $\father(b_1)$.
Assume that the first $i$ non-virtual potential-edge boxes
 $b_1'$, $b_2'$, $b_3'$, $\cdots$, $b_i'$ are inside
 $\father(b_1)$, while the rest of potential-edge boxes
  $b_{i+1}'$, $\cdots$, $b_{t}'$ are outside of the box
  $\father(b_1)$.
There are two cases here: $x_1=\sum_{j=1}^{i} |b_{j}'| \ge k+1$ and
 $x_1=\sum_{j=1}^{i} |b_{j}'| < k+1$.

\begin{compactenum}
\item
First consider the case $x_1 = \sum_{j=1}^{i} |b_{j}'| \ge k+1$.
Assume that $f \le i$ is the smallest index such that
 $\sum_{j=1}^{f} |b_{j}'| \ge k+1$.
We consider the number of total free TYPE-1-credits we will have for the enclosing-box
 $\father(b_1)$ in this direction.
The total TYPE-1-credits charged to nodes inside $b_1$
 for adding edges in this direction
 when processing enclosing-boxes  $b_1'$, $b_2'$, $b_3'$, $\cdots$, $b_i'$
 is at most $(k+1)(k+1)$ since we add at most $k+1$ edges for the
 ``closest'' $k+1$ nodes inside $b_j'$, $j \le f$
 and we will \emph{not} add any edges from
 nodes inside $b_j'$, $j >f$, to nodes inside $b_1$.
Observe that these new enclosing-boxes  $b_1'$, $b_2'$, $b_3'$, $\cdots$,
 $b_i'$ will also contribute at least $  (k+1) x_1$ TYPE-1-credits to
the enclosing-box  $\father(b_1)$.
Thus, the box $\father(b_1)$ has at least $x_1 \ge k+1 $ nodes, each
 with $  (k+1)$ free TYPE-1-credits left.
In other words, our claim holds.
Recall that a node gets a free TYPE-1-credit when it first becomes a node
 in some $k$-box and is thus not charged here.
Thus, any node $v$ inside boxes  $b_1'$, $b_2'$, $b_3'$, $\cdots$,
 $b_i'$ will \emph{not} be charged any of its TYPE-1-credits when we add
 edges $vu$ between  $b_j'$ and $b_1$, $j \le i$.

\item
We then consider the case that $x_1 = \sum_{j=1}^{i} |b_{j}'| <
k+1$.
Let $x_{2}= \sum_{j=1+i}^{t} |b_{j}'|$ be the number of nodes inside
non-$k$-boxes outside of $b_1$.
In this case, every node inside  $b_1'$, $b_2'$, $b_3'$, $\cdots$,
$b_i'$ will connect $k+1$ edges to some nodes inside $b_1$.
So the total TYPE-1-credits charged to nodes inside $b_1$ is $(k+1) x_1$.
The nodes inside $\cup_{j=1+i}^{t} b_{j}'$ will contribute at most
 $\min(x_2, k+1-x_1) \cdot (k+1)$ edges since only each of the closest
$\min(x_2,  k+1-x_1)$ nodes will  connect up to $k+1$ edges to nodes
 inside $b_1$.
Thus, the TYPE-1-credits left by \emph{all} nodes inside $\father(b_1)$
 is at least $|b_1|(k+1) - \min(k+1,x_1+x_2) \cdot (k+1)$,
 while the TYPE-1-credits contributed
 by nodes inside boxes  $b_1'$, $b_2'$, $b_3'$, $\cdots$,
 $b_i'$ is at least $x_1 (k+1)$.
Thus, the total free TYPE-1-credits in box $\father(b_1)$ is at least
 $|b_1|(k+1) - \min(k+1-x_1,x_2)(k+1)$.
Observe that, when we process the box   $\father(b_1)$, we already
  have $e_2= \min(k+1-x_1,x_2)(k+1)$ edges crossing the box
  $\father(b)$.
These edges are added from processing pairs of box $b_1$ and
 non-$k$-boxes $b_j'$, $j >i$.
Thus, our claim holds. 
\end{compactenum}

We can show  by induction that Lemma
 \ref{lemm:enough-credits} holds for all $k$-boxes.
This finishes the proof of lemma. 

\medskip

For a $k$-box $b$, assume that it already has $e_2$ edges  crossing
 the boundary of $b$ in the direction of a cone $B$.
Then, by our lemma, we know that it has at least $(k+1)^2 - e_2$
 free TYPE-1-credits.
Assume that we process  pairs of potential-edge boxes $b$ and $b_j'$, for
 $j\le t$, where $b_j'$ is a non-$k$-box.
For simplicity, we define variable $x_1$ (as number of nodes from
$b_j'$ that are inside $\father(b)$) and $x_2$ (as number of nodes from
$b_j'$ that are outside $\father(b)$) similarly.
We  only have to add at most $\min(k+1, x_1+x_2)(k+1) - e_2$ edges
 to nodes inside $b$.
Similarly, we can show that our statement (Lemma
\ref{lemm:enough-credits}) holds for the parent $k$-box $\father(b)$ also.

Observe that in our algorithm, when we decide to add edges $uv$ for a
 pair of potential-edge boxes, we choose a pair of nodes each has the
 smallest degree in the corresponding direction.
Thus, the degree difference among all nodes is at most $1$.
Since we assigned $k+1$ TYPE-1-credits to every node for adding edges in the
 case $uv$ when $u$ is from a $k$-box, and $v$ is from a non-$k$-box,
 the maximum number of edges added to a node $u$ in this case will be
 at most $k+1$ in any cone direction.

\item 
We then study the total number of edges $uv$ that are added to a node
 $u$ when $u$ is inside some $k$-box $b_1$ and $v$ is inside some
 $k$-box $b_1'$ and the boxes $b_1$ and $b_1'$ are a pair of
 potential-edge boxes.

Notice that we add at most $k+1$  edges when we processing each pair of
 potential-edge boxes $b$ and $b'$ that are $k$-boxes.
In this case, we will assign $2$ TYPE-2-credits for each node inside a
 $k$-box.
Recall that the TYPE-2-credits are different from the TYPE-1-credits
 in the previous case.
We charge a node $u$ a TYPE-2-credit  if an edge $uv$ is added where $u$ and
 $v$ are from $b$ and $b'$ respectively.
Similar to Lemma \ref{lemm:enough-credits}, we can prove that
\begin{lemma}
\label{lemm:enough-credits-k-k}
 Every $k$-box $b$ will have  at least $2(k+1) - e_2 $ free TYPE-2-credits
 where $e_2$ is the number of edges crossing the boundary of $b$ in
 the given cone direction
 when we start processing this box $b$  and its
 potential-edge boxes to add some edge after its children boxes have
 been processed.
\end{lemma}
We only added at most
 $1$ edge when a node $u$ is served as a representative node of a $k$-box
 $b_1$.
The statement clearly is true for all leaf-$k$-boxes.
Then consider a non leaf-$k$-box $b$.
If we have a $k$-box $b'$ such that $b'$ is inside $\father(b)$,
 then after processed $b$ and $b'$, $\father(b)$ will have at least
 $2(k+1)$ free TYPE-2-credits, where $b$ and $b'$ contributed $k+1$
 TYPE-2-credits  each.
Similarly we can show that the lemma is true when $b'$ is outside of
 $\father(b)$.
\end{compactenum}
Thus the  theorem follows.
\end{proof}

It is also not difficult to show the following theorem.
\begin{theorem}
\label{theorem:bound-time-k}
Algorithm \ref{alg:kfts-k2} can be implemented
 to run in time $O( k c_2 n+c_2 n \log n)$, where
 $c_2=\Theta((\frac{1}{t-1})^d)$  is the number of cones
 partitioned (which in turn is dependent on the spanning ratio $t$).
\end{theorem}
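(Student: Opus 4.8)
The plan is to bound Algorithm~\ref{alg:kfts-k2}'s running time in two phases: a preprocessing phase that builds $CT(V)$, the BSPD, the representatives and the global sort, and the main \emph{for}-loop over potential-edge pairs; one then checks each phase fits in $O(kc_2 n + c_2 n\log n)$. Throughout, $c_2$ absorbs the (constant, $t$- and $d$-dependent) number of cones of the frame $\cal F$, and we use that $d = O(c_2)$ for our choice of parameters, so that the $O(dn\log n)$ split-tree bound of Theorem~\ref{theo:timing-cv} is already $O(c_2 n\log n)$.

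\emph{Preprocessing.} By Theorem~\ref{theo:timing-cv} the compressed split-tree costs $O(dn\log n)$, and by Theorem~\ref{theo:potential-edge-boxes} the BSPD together with all sets $\Nbr_{\ge}(b)$ costs $O(n)$ time and has total size $O(c_2 n)$, using the cardinality bound $|\Nbr_{\ge}(b)| = \Theta((\udist\usize)^{d})$. Choosing the $k+1$ representatives of every $k$-box is a single bottom-up pass with the credit scheme that generalizes Lemma~\ref{lemm:representative-times}: give each node $\Theta(k)$ credits, at each leaf-$k$-box charge $k+1$ distinct still-credited nodes, and at each internal $k$-box pull $k+1$ still-credited nodes out of its (at most two) children; each node is charged $O(1)$ times and each box does $O(k)$ work, so this is $O(kn)$. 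Sorting the $O(c_2 n)$ edge-distances $\boxdist(b,b')$ (Definition~\ref{def:edge-distance}) is $O(c_2 n\log(c_2 n)) = O(c_2 n\log n)$. Hence preprocessing is $O(c_2 n\log n)$.

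\emph{Main loop.} There are $\sum_{b}|\Nbr_{\ge}(b)| = O(c_2 n)$ iterations. In the iteration for a pair $(b_1,b_1')$ we (i) determine the cone bases $B,B'$ holding the majority of the representatives and the meta-cones ${\cal B},{\cal B}'$ — $O(1)$ time, since $|{\cal F}|$ and $|{\cal B}|$ are constants depending only on $t,d$ and locating a point among $O(1)$ cones is $O(1)$; (ii) read the arrays $\DisjointCrossingEdge(\cdot,\cdot)$ (each of size $\le k+1$) to get $l$, and build the lists $Y_k(\cdot,\cdot)$ restricted to the $O(k^2)$ end-nodes recorded for the relevant box — $O(k)$ after a one-time bucket sort by degree; and (iii) add the $O(k)$ new edges prescribed by the three cases and update $\DisjointCrossingEdge$ and $\AllCrossingEdge$. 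By Theorem~\ref{theorem:bound-degree-k} every node has degree $O(k)$ in each of the $c_2$ cone directions, so the total number of edges ever inserted is $O(kc_2 n)$; therefore, \emph{provided each insertion and each array update costs amortized $O(1)$}, steps (i)--(iii) sum to $O(c_2 n) + O(kc_2 n) = O(kc_2 n)$, and the theorem follows.

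The obstacle is exactly that proviso — this is where ``not difficult'' hides real bookkeeping, since a freshly inserted edge $uv$ formally crosses every box on the root-path of the enclosing-box of $u$ below $\mathrm{LCA}(u,v)$ (and symmetrically for $v$), which may be many boxes. I would resolve it with three ingredients already present in the paper. First, the lemma following Definition~\ref{def:edge-distance} shows that $\virtualbox(b_1)$ contains only the nodes of the single enclosing-box $b_1$; so up to the node-free gap between $\partial b_1$ and $\partial\virtualbox(b_1)$, ``crossing $\virtualbox(b_1)$'' coincides with ``crossing $b_1$'', and it suffices to index $\AllCrossingEdge$ and $\DisjointCrossingEdge$ by the $O(n)$ enclosing-boxes and — exactly as in the $k=0$ construction of Algorithm~\ref{alg:kfts-k1} — to propagate upward only the finitely many ``extreme'' crossing edges (furthest in each coordinate, per cone) together with the $\le k+1$ disjoint witnesses, an $O(k+d)$-size record per box recomputed from its children only when that box is itself reached in the sorted processing order, hence $O((k+d)n) = O(kc_2 n)$ in total. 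Second, the lemma bounding $|\AllCrossingEdge(b,B_i)|\le (k+1)^2$ lets us freeze a box's array once it is full; at that instant $|\DisjointCrossingEdge(b,B_i)|\ge k+1$, so no further edge is routed through $b$ in direction $B_i$, which caps the genuine updates. Third, to keep $\DisjointCrossingEdge(b,{\cal B})$ current I would maintain a \emph{maximal} (rather than maximum) set of vertex-disjoint crossing edges, updatable in $O(k)$ per incoming edge; a maximal set is within a factor $2$ of maximum, so $l$ shrinks by at most that factor, which only inflates the already-$O(k)$ degree and already-$O(k^2)$ group constants of Theorems~\ref{theorem:bound-degree-k} and~\ref{theo:low-weight-1} by a constant and leaves the correctness proof of Theorem~\ref{theo:spanner} intact (the $k+1-l$ new edges remain vertex-disjoint from the $l$ witnesses because they are drawn from $Y_k$, which excludes $X(\cdot)$). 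Combining these three points yields the amortized-$O(1)$-per-update bound, which closes the argument.
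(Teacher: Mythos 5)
Your argument is essentially the paper's: both proofs resolve the seemingly expensive per-edge updates by maintaining the arrays $\DisjointCrossingEdge(b,B_i)$ bottom-up in the split-tree, merging the two children's size-$(k+1)$ sorted lists in $O(k)$ time per box (discarding one of any pair of edges that share an endpoint), giving $O(kn)$ per cone direction and $O(kc_2 n)$ overall, on top of the $O(c_2 n\log n)$ preprocessing you tally explicitly and the paper takes for granted. Where you genuinely go beyond the paper is in your third ingredient: the algorithm as written asks for a \emph{maximum} disjoint set of crossing edges in a meta-cone ${\cal B}$ via bipartite matching over $\AllCrossingEdge$, and the paper's time-complexity proof never costs that step --- it discusses only the single-cone arrays $\DisjointCrossingEdge(b,B_i)$ and their greedy merge, leaving the cost of the meta-cone matching unaccounted for. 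Your substitution of a greedily maintainable \emph{maximal} disjoint set (within a factor $2$ of maximum, $O(k)$ amortized per insertion) is the right kind of repair, but it quietly modifies the algorithm: to be fully rigorous you would need to re-verify the invariants behind Theorems~\ref{theo:spanner}, \ref{theo:low-weight-1}, and \ref{theorem:bound-degree-k} against the weaker guarantee that $l$ may underestimate the true maximum by a factor of up to $2$, which you flag but only sketch. In short, same approach and same core data-structure idea; your proposal is more careful about preprocessing and honestly patches a real gap that the paper's proof elides.
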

\begin{proof}
For each box $b$ and each direction $B_i$, we store $k+1$ disjoint edges
 in an array $\DisjointCrossingEdge(b,B_i)$
 such that the end-nodes are furthest from the box in this direction.
These edges $uv$ will be sorted based on the distances from $v$ to
 the box $b$, where $u$ is a node inside $b$.
For a box $b$, given the array  $\DisjointCrossingEdge(b_1,B_i)$,
 and  $\DisjointCrossingEdge(b_2,B_i)$ where $b_1$ and $b_2$ are two
 children boxes of $b$,
 we clearly can update the list for  box $b$ in $O(k)$ time from
  $2$ sorted lists from children boxes $b_1$ and $b_2$ as follows.
We greedily compare the top
 elements (the link with the furthest node) of two children boxes and
 get the link $uv$ with $v$ being furthest from $b$ in direction
 $B_i$; the process is repeated till we 
 find $k+1$ links. Notice that these newly found $k+1$ links surely
 will be disjoint also.
Here if a node $v$ is connected to multiple nodes $u_1$
 and $u_2$ for DisjointCrossingEdge arrays $\DisjointCrossingEdge(b_1,
 B)$ and  $\DisjointCrossingEdge(b_2, B)$ for both children boxes
 $b_1$ and $b_2$, we will only pick one link $u_iv$ for one of
 the children boxes and discard the other.
The total time of such processing is $2 \cdot (k+1)$.
Since there are $O(n)$ boxes in total, the total time complexity of
 updating the furthest links list $\DisjointCrossingEdge$ can be done in time
 $O( k n)$ for a single cone direction.
The time complexity then follows from the fact that there are $c_2$
 cones.
\end{proof}


\section{Conclusion}
\label{sec:conclusion}

In this paper, we studied the spanner construction for a set of $n$ points
 in $\dspace$
 and also fault-tolerant spanners for a set of points
 in $\dspace$.
Our main contribution is 
 an  algorithm that runs in time $O(n \log n)$ to
 construct a $(k,t)$-VFTS for Euclidean graph  with
 maximum node degree $O(1+k)$, and weight at most $O((1+k)^2)
 \weight(\MST)$ for $k \ge 0$. All bounds are asymptotically
 optimum.
It remains an interesting future work to extend the method to geodesic
 distance when we are given $n$ nodes on a surface.

{\small 

}

\end{document}